\documentclass{article}

\usepackage{tikz, pgf, pgfplots}
\usetikzlibrary
	{
		arrows, automata, chains, datavisualization.formats.functions, fit, positioning, shapes
	}

\usepackage{amsthm}
\usepackage[bb = boondox]{mathalfa}
\usepackage{dsfont}
\usepackage{mathtools}
\usepackage{multicol}
\usepackage{physics}

\usepackage[a4paper, left = 2.4cm, right = 2.4 cm, top = 2.5cm, bottom = 2.5 cm]{geometry}

\usepackage{float}
\usepackage{graphicx}
\usepackage{hyperref}
\hypersetup
{
	colorlinks, linkcolor = {WordRed!75!black},
	citecolor = {WordBlueDarker25}, urlcolor = {WordAquaDarker50}
}
\usepackage[boxed, linesnumbered, ruled, vlined]{algorithm2e}
\setlength{\algoheightrule}{1.0 pt} 
\setlength{\algotitleheightrule}{0.5 pt} 

\usepackage{spreadtab}
\usepackage{siunitx}					
\sisetup{
	round-mode      = places,			
	round-precision = 7,				
}

\newtheorem{definition}{Definition}[section]
\newtheorem{proposition}{Proposition}[section]
\newtheorem{theorem}{Theorem}[section]
\newtheorem{corollary}{Corollary}[section]

\definecolor{WordBlueVeryLight}{RGB}{0, 176, 240}
\definecolor{MyBlue}{RGB}{0, 64, 128}
\definecolor{WordBlueDarker}{RGB}{31, 78, 121}
\definecolor{WordBlueDarker25}{RGB}{54, 96, 146}
\definecolor{WordBlueDarker50}{RGB}{36, 64, 98}
\definecolor{WordRed}{RGB}{255, 0, 102}
\definecolor{WordAquaLighter60}{RGB}{183, 222, 232}
\definecolor{WordIceBlue}{RGB}{223, 227, 229}
\definecolor{GreenTeal}{HTML}{008080}
\definecolor{MyDarkBlue}{RGB}{0, 51, 102}
\definecolor{MyLightRed}{RGB}{244, 213, 245}
\definecolor{WordVeryLightTeal}{RGB}{223, 236, 235}

\title{DKPRG or how to succeed in the Kolkata Paise Restaurant game via TSP}

\author{
	Kalliopi Kastampolidou, Christos Papalitsas and Theodore Andronikos \\
	Ionian University, \\
	Department of Informatics, \\
	7 Tsirigoti Square, Corfu, Greece\\
	Emails: \{c17kast, c14papa, andronikos\}@ionio.gr\\
}

\begin{document}

\maketitle

\begin{abstract}
	The Kolkata Paise Restaurant Problem is a challenging game, in which $n$ agents must decide where to have lunch during their lunch break. The game is very interesting because there are exactly $n$ restaurants and each restaurant can accommodate only one agent. If two or more agents happen to choose the same restaurant, only one gets served and the others have to return back to work hungry. In this paper we tackle this problem from an entirely new angle. We abolish certain implicit assumptions, which allows us to propose a novel strategy that results in greater utilization for the restaurants. We emphasize the spatially distributed nature of our approach, which, for the first time, perceives the locations of the restaurants as uniformly distributed in the entire city area. This critical change in perspective has profound ramifications in the topological layout of the restaurants, which now makes it completely realistic to assume that every agent has a second chance. Every agent now may visit, in case of failure, more than one restaurants, within the predefined time constraints. From the point of view of each agent, the situation now resembles more that of the iconic travelling salesman, who must compute an optimal route through $n$ cities. Following this shift in paradigm, we advocate the use of metaheuristics. This is because exact solutions of the TSP are prohibitively expensive, whereas metaheuristics produce near-optimal solutions in a short amount of time. Thus, via metaheuristics each agent can compute her own personalized solution, incorporating her preferences, and providing alternative destinations in case of successive failures. We analyze rigorously the resulting situation, proving probabilistic formulas that confirm the advantages of this policy and the increase in utilization. The detailed mathematical analysis of our scheme demonstrates that it can achieve utilization ranging from $0.85$ to $0.95$ from the first day, while rapidly attaining steady state utilization $1.0$. Finally, we note that the equations we derive generalize formulas that were previously presented in the literature, which can be seen as special cases of our results.

	\textbf{Keywords:}: Game theory, Kolkata Paise Restaurant Problem, TSP, metaheuristics, optimization, probabilistic analysis.
\end{abstract}

\section{Introduction} \label{sec:Introduction}

\subsection{The Kolkata Paise Restaurant Problem}

	The \emph{El Farol Bar} problem is a well-established problem in Game Theory. It was William Brian Arthur who introduced El Farol Bar problem in Inductive Reasoning and Bounded Rationality \cite{arthur1994inductive}. It cab be described as follows: $N$ people, the players, need to decide \emph{simultaneously} but \emph{independently} whether they will visit tonight a bar that offers live music. In order to have an enjoyable night the bar must not be too crowded. Each potential visitor does not know the number of attendances each night in advance, so the visitor must predict and decide whether she wants to go to the bar or stay home. Although the players decide using previous knowledge, their choice is not affected by previous visits and they cannot communicate with each other \cite{yeung2008minority}. In the El Farol Bar problem, the number of choices $n$ is equal to $2$, so the players have to choose between staying home or going out.

	The \emph{Kolkata Paise Restaurant Problem}, as well as the \emph{Minority Game}, are variants of the El Farol Bar problem. The Minority Game was first introduced in 1997 by Damien Challet and Yi-Cheng Zhang \cite{challet1997emergence}. They developed the mathematical formulation of the El Farol Bar which they named Minority Game. This game has an \emph{odd} number $N$ of agents and at each stage of the game they decide whether they will go to the bar or stay home. The minority wins and the majority loses. Agents have to decide whether they want to go to the bar or not, regardless of the predictions for the attendance size. The Minority Game is a binary symmetric version of the El Farol Bar problem, with the symmetry relying on the fact that the bar can contain half of the players.

	The Kolkata Paise Restaurant Problem (KPRP for short) is a repeated game that was named after the city Kolkata in India. In KPRP there are $n$ cheap restaurants (Paise Restaurants) and $N$ laborers who choose among these places for their quick lunch break. If the restaurant they go to is crowded, they have to return to work hungry, since they do not have time to visit another restaurant, or lack the resources needed to travel to another area. This generalization of the El Farol Bar is described as follows: each of a large number $N$ of laborers has to choose between a large number $n$ of restaurants, where usually $N = n$. In order for a player to win, that is to eat lunch, only one player should go to each restaurant. If more than one players attend the same restaurant at the same time, an agent is chosen randomly and only this agent is served. The player who gets to eat has a payoff equal to $1$, whereas all others who also chose this restaurant have a payoff equal to $0$. Each agent prefers to go to an unoccupied restaurant, than visit a restaurant where there are other agents as well. This realization in turn implies that the pure strategy Nash equilibria of the stage game are Pareto efficient. Consequently, there are exactly $n!$ pure strategy Nash equilibria for the stage game. This, combined with the rationality of the players, leads to the conclusion that it is possible to sustain a pure strategy Nash equilibrium of the stage game as a sub-game perfect equilibrium of the KPRP.

	In \cite{banerjee2013kolkata} each agent has a rational preference over the restaurants and, despite the fact that the first restaurant is the most preferred, all agents prefer to be served even at their least preferable restaurant than not to be served at all. The prices are considered to be identical and each restaurant is allowed to serve only one agent. If more than one laborers attend the same restaurant, one laborer is chosen randomly, while the others remain starved for that day. The Kolkata Paise Restaurant problem is symmetric, given the preferences of the agents over the set of restaurants. The game is non-trivial because there is a hierarchy among the restaurants, with the first being the most preferable. Another approach stipulates that if multiple agents choose the same restaurant they have to share the same meal and as a result, none of them is happy. The choice of each player is secret and they have to choose simultaneously. The players choose their strategy based on the payoffs. It is assumed that the restaurants charge their meals with the same price. There is even a version where some restaurants offer much tastier meals than others. This game is a repeated game with a period of one day, and the choices of each player are known to the other players at the end of the day. The agents have their personal strategy as to where they intend to have lunch. In order to attain the optimal solution, the agents have to communicate and coordinate their actions, something which is forbidden. As a result, some agents may end up hungry and, at the same time, some restaurants may waste their food.

	Some authors study the case where the number of restaurants $n$ is small and the agents take coordinated actions. Then, they analyze the game as a sub-game of KPRP and estimate the possibility to preserve the cyclically fair norm. As a result, punishment schemes need to be designed in this case. Every evening the agent makes up her mind based on her past experiences and the available information about each restaurant, which is supposed to be known to every agent. Each agent decides on her own, with no interaction with the other players. If more than one customers arrive at the same restaurant, an agent is randomly chosen to eat and the rest have to starve. There is a ranking system among the restaurants \emph{shared} by the customers. The $n!$ Pareto efficient states can be achieved when all customers get served. The probability of this event is very low, due to the absence of cooperation and disclosure among the agents.

\subsection{The Travelling Salesman Problem}

	In discrete optimization problems, the variables take discrete values and, usually, the objective is to find a graph or another similar visualization, 
	from an infinite or finite set \cite{bonnans2013perturbation}. 
	The Travelling Salesman Problem (TSP) is a famous optimization problem described as follows: a salesman has to visit all the nearby cities starting from a specific city to which the salesman must return \cite{Feillet:2005:TSP:1063813.1063851}. The only constraint is that the salesman must start and finish at the same specific city and visit each city only once. The visiting order is to be determined by the salesman each time the problem arises. The cities are connected through railway or roads and the cost of each travel is modeled by the difficulty in traversing the edges of the graph. The salesman has just one purpose and that is to visit all the cities with the minimum possible travel cost. In this problem, the optimum solution is the fastest, shortest and cheapest solution. TSP is easily expressed as a mathematical problem that typically assumes the form of a graph, where each of its nodes are the cities that the salesman has to visit. TSP was formulated during the 1800s by Sir William Rowan Hamilton and Thomas Kirkman and it was first studied by Karl Menger during the 1930s at Harvard and Vienna \cite{Feillet:2005:TSP:1063813.1063851}. The purpose of TSP is for the salesman to determine the route with the lowest possible cost. Some of the typical applications of TSP are network optimization and hardware identification problems. It has kept researchers busy for decades and many solutions have emerged. TSP is an NP-hard problem and the results of the practical, heuristic solutions are not always optimal, but approximate \cite{Feillet:2005:TSP:1063813.1063851}. The simplest ``naive'' solution to this problem is, of course, to try all possibilities and explore all paths, but the cost in time and complexity is so huge that is practically impossible. In order to overcome that, when solving a TSP the pragmatic focus is a near-optimal route, instead of always the best. For the graph depicted in Figure \ref{fig:A small scale instance of a TSP} the optimal tour is $1 \rightarrow 3 \rightarrow 4 \rightarrow 2 \rightarrow 1$ with cost $7 + 12 + 19 + 11 = 49$.

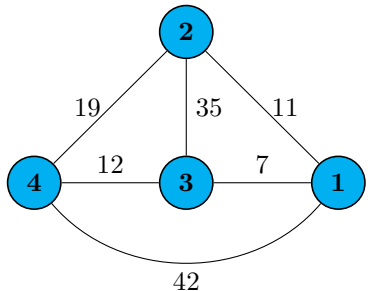
\begin{figure}[H] 
	\centering
	\begin{tikzpicture}
		[
		state/.style =
		{	
			circle,
			draw,
			fill = WordBlueVeryLight,
			minimum size = 7 mm,
			semithick
		},
		every loop/.style = {min distance = 12 mm},
		every edge/.style =										
		{	
			-,
			draw
		}
		]
		\node [state] (s1) at (2.0, 0.0) {\textbf{1}};
		\node [state] (s2) at (0.0, 2.0) {\textbf{2}};
		\node [state] (s3) at (0.0, 0.0) {\textbf{3}};
		\node [state] (s4) at (-2.0, 0.0) {\textbf{4}};
		\path [-]
		(s1) edge node [right] {$11$} (s2)
		(s1) edge node [above] {$7$} (s3)
		(s1) edge [bend left = 50] node [below] {$42$} (s4)
		(s2) edge node [right] {$35$} (s3)
		(s2) edge node [left] {$19$} (s4)
		(s3) edge node [above] {$12$} (s4);
	\end{tikzpicture}
	\caption{A small scale instance of a TSP.} \label{fig:A small scale instance of a TSP}
\end{figure}

\subsection{Related Work} \label{sec:Related}

\subsubsection{Games and the Kolkata Paise Restaurant Problem}

	The Kolkata Paise Restaurant Problem (KPRP) was initially introduced in an earlier form in 2007 \cite{chakrabarti2007kolkata}. Its current formulation appeared in 2009 in \cite{chakrabarti2009kolkata} and \cite{ghosh2010kolkata}. Subsequently, many creative ideas and different lines of thought have been published and even a quantum version of the game has arisen. In \cite{chakrabarti2009kolkata}, the importance of diversity is emphasized while herd behavior is penalized. Furthermore, the differences between the KPRP and the Minority Game are highlighted. One major difference is that in the KPRP the emphasis is placed on the \emph{simultaneous move many choice problem}, in contrast to the Minority Game, which studies a \emph{simultaneous move two choice problem}. Another important difference is the existence of a ranking system in the KPRP, but not in the Minority Game. Some of the strategies developed for the KPRP are discussed in \cite{Chakrabarti2017} which also discusses problems where these strategies can be successfully applied. Ghosh et al. in \cite{ghosh2010statistics} present a dictator's, or a social planner's as they call it, solution. In this solution the agents form a queue and the planner assigns each of them to a ranked restaurant depending on the queue of the first evening. The following evening the agents go to the next ranked restaurant and the last in the queue goes to the first ranked restaurant. This solution is called the \emph{fair social norm}. In real life, each agent decides in parallel or democratically every evening, so this solution may be considered somewhat unrealistic. However, the parallel decision or democratic decision strategy is not as efficient as the dictated one, with the last leading to one of the best solutions to this problem. Banerjee et al. in \cite{banerjee2013kolkata} offer a generalization of the problem in such a way that the cyclically fair norm is sustained. Each strategy is viewed as a sub-game of perfect equilibrium of the KPRP. In 2013, Ghosh et al. published an article about stochastic optimization strategies in the Minority Game and the KPRP \cite{ghosh2013kolkata}. There, they point out that a stochastic crowd avoiding strategy results in a efficient utilization in the KPRP. Reinforcement learning was first introduced in the KPRP by \cite{ghosh2017emergence}, together with six revision protocols aiming at efficient resource utilization. These protocols combine local information with reinforcement learning, Each revision protocol has two variants depending on whether or not customers who were once served by a restaurant remain loyal to that restaurant in all subsequent periods. Some of these protocols were experimentally tested and shown to improve the utilization rate. Another generalization was introduced by Yang et al. in \cite{yang2016mean} aiming at dynamic markets this time. They studied what happens when agents can either divert to another district or stay in the current one. Each agent may replace another agent with no prior knowledge of the game, following a Poisson distribution. Agarwal et al. in \cite{agarwal2016self} showed that the KPRP can be reduced to a Majority Game. In the latter, capacity is not restricted and agents aim at choosing with the herd. If more than one agents choose the same option, the utility decreases (see also \cite{milchtaich1996congestion} and \cite{Martin2019}). Abergel et al. in \cite{abergel2012econophysics} applied the KPRP in hospitals and beds. The local patients choose among the local hospitals those with the best ranking and compete with the other patients. If the patients are not treated in time it is a clear case of social waste of service for the rest of the hospitals. A brief presentation of the KPRP was given by Sharma et al. in \cite{sharma2017saga}, which included the origin and an overview of the game, strategies that may arise, several extensions and its applications in a variety of phenomena. The authors also presented an experimental analysis. Park et al. in \cite{park2017kolkata} introduced the KPRP in the Internet of Things (IoT) and IoT devices. They used a KPRP approach to develop a scheme for these devices, because it allowed them to model situations where multiple resources are shared among multiple users, each with individual preferences. In \cite{Sinha2020}, Sinha et al. propose a phase transition behavior, where if two or more agents visit the same restaurant, one is randomly picked to eat. The agents evolve their strategy based on the publicly available information about past choices in order for each of them to reach the best minority choice. In the same paper, they also develop two strategies for crowd-avoiding. 

	A significant trend, which has been quite evident in the last two decades, is to enhance classical games using unconventional means. The most prominent direction is to cast a classical game in a quantum setting. Since the pioneering works of Meyer \cite{Meyer1999} and Eisert et al. \cite{Eisert1999}, quantum versions for a plethora of well-known classical games have been studied in the literature. Starting from the most famous of all games, the Prisoners' Dilemma \cite{Eisert1999}, \cite{Li2014}, \cite{Deng2016}, \cite{Giannakis2019}, many researchers have sought to achieve better solutions by employing quantumness (see the recent \cite{Andronikos2018} and \cite{Giannakis2019} and references therein), or other tools, such as automata (\cite{Giannakis2015a}). It not surprising that unconventional approaches to classical games are undertaken because they promise clear advantages over the classical ones. Another line of research is to turn to biological systems for inspiration. The Prisoners' Dilemma features prominently in this setting also (see \cite{Kastampolidou2020} for a brief survey), but in reality most game situations can easily find analogues in biological and bio-inspired processes \cite{Theocharopoulou2019} and \cite{Kastampolidou2020a}. A quantum version of the KPRP was proposed in \cite{Sharif2012}, where the quantum Minority Game was expanded to a multiple choice version. The agents cannot communicate with each other and have to choose among $m$ choices, but an agent wins if she makes a unique choice. Higher payoffs than the classical version were observed due to shared entanglement and quantum operations. In Sharif's \cite{sharif2013introduction} review, quantum protocols for quantum games were introduced, including a protocol for a three-player quantum version of the KPRP. In \cite{ramzan2013three} the authors study the effect of quantum decoherence in a three-player quantum KPRP using tripartite entangled qutrit states. They observe that in the case of maximum decoherence the influence of the amplitude damping channel dominates over depolarizing and flipping channels. Furthermore, the Nash equilibrium of the problem does not change under decoherence.

\subsubsection{The Travelling Salesman Problem} \label{subsubsec:The TSP}

	The Travelling Salesman Problem is a well-known combinatorial optimization problem. In this problem a salesman must compute a route that begins from a particular node (the starting location), passes through all other nodes only once before returning to the starting location, and has the minimum cost. The first appearance of the term ``Travelling Salesman Problem'' probably occurred between 1931 and 1932. The core of the TSP problem, however, was first mentioned over a century before, in a 1832's German book \cite{voigt1981handlungsreisende}. The mathematical formulation was introduced by Hamilton and Kirkman \cite{voigt1981handlungsreisende} and is typically expressed as follows. A \emph{cycle} in a graph is a path that begins and ends at the same node and passes through all other nodes once. A \emph{Hamiltonian} cycle contains all the vertices of the graph. The Travelling Salesman Problem amounts to figuring the cheapest way to visit every city and return back. Research efforts on TSP and closely related problems include Ascheuer et al. \cite{Ascheuer2001} that addressed the asymmetric TSP-TW using more than three alternative integer programming formulations and more than ten neighborhood structures. Gutin and Punnen~\cite{gutin2006traveling} studied the effect of sorting-based initialization procedures. The authors claimed that understanding the algorithmic behavior is the best way to find solutions, since this would help in determining the best solution out of those available. Jones and Adamatzky~\cite{Jones2013} showed experimentally that using a sorting function within their algorithm was not functional and failed to return a feasible solution in some cases.

	The difficulty in tackling the TSP motivated researchers to explore other avenues. One such notable and particularly promising approach is based on metaheuristics. A metaheuristic is a high-level heuristic that is designed to recognize, build, or select a lower-level heuristic (such as a local search algorithm) that can provide a fairly good solution, particularly with missing or incomplete information or with limited computing capacity \cite{bianchi_survey_2009}. The term ``metaheuristics'' was coined by Glover. Metaheuristics can be used for a wide range of problems. Of course, it must be noted that metaheuristic procedures, in contrast to exact methods, do not guarantee a global optimal solution \cite{blum2003metaheuristics}. Papalitsas et al. \cite{Papalitsas2015} designed a metaheuristic based on VNS for the TSP with emphasis on Time Windows. Another quantum-inspired method, based on the original General Variable Neighborhood Search (GVNS), was proposed in order to solve the standard TSP \cite{Papalitsas2017}. This quantum-inspired procedure was also applied successfully to the solution of real-life problems that can be modeled as TSP instances \cite{Papalitsas2018}. A quantum-inspired procedure for solving the TSP with Time Windows was also presented in \cite{papalitsas2017b}. More recently, \cite{Papalitsas2019b} applied a quantum-inspired metaheuristic for tackling the practical problem of garbage collection with time windows that produced particularly promising experimental results, as further comparative analysis demonstrated in \cite{Papalitsas2019c}. A thorough statistical and computational analysis on asymmetric, symmetric, and national TSP benchmarks from the well known TSPLIB benchmark library, was conducted in \cite{Papalitsas2019a}. Very recently, Papalitsas et al. parameterized the TSPTW into the QUBO (Quadratic Unconstrained Binary Optimization) model \cite{Papalitsas2019}. The QUBO formulation enables TSPTW to run on a Quantum Annealer and is a critical step towards the ultimate goal of running the TSPTW with pure quantum optimization methods. Stochastic optimization can be implemented through several metaheuristic processes. The solution generated depends on the set of created random variables \cite{bianchi_survey_2009}. Metaheuristic processes may find successful solutions with less computational effort than accurate algorithms, iterative methods or basic heuristic procedures by looking for a wide variety of feasible solutions \cite{blum2003metaheuristics}. Hence, metaheuristic procedures are extremely useful and practical approaches for optimization because they can guarantee good solutions in a small amount of time. For example, a problem instance with thousands of nodes can be run for $30-40$ seconds and produce a solution with $3-5\%$ deviation from the optimal. This deviation depends on the implemented local search procedures inside the main part of the algorithm. An efficient design and choice of those improvement heuristics will define the deviation from the optimal solution. In view of the small amount of time they require and of the good quality of the solution they produce, we advocate their functional use in the Distributed Kolkata Paise Restaurant game.

\subsection{Contribution}

	Let us now briefly summarize the contributions of this paper.
	\begin{itemize}
		\item	We study the Kolkata Paise Restaurant Problem from an entirely new perspective. We identify and state explicitly certain implicit assumption that are inherent in the standard formulation of the game. We then take the unconventional step to abolish them entirely. This provides the opportunity for an entirely new setting and the adoption of a novel approach that leads to a new and more efficient strategy and, ultimately, to greater utilization for the restaurants.
		\item	For the first time, to the best of our knowledge, we focus on the spatial setting of the game and we propose a more realistic and plausible topological layout for the restaurants. We perceive the restaurants to be uniformly distributed in the entire city area. This, rather pragmatic and more probable in reality situation, has profound ramifications on the topological layout of the game: the restaurants now \emph{get closer} and, as their number $n$ increases, a standard assumption in the literature, the distances between nearby restaurants decrease. Due to the distribution of the restaurants, the resulting version of the game is aptly named the \emph{Distributed Kolkata Paise Restaurant Game}.
		\item	Thus, now it is realistic to assume that every agent has a second, a third, maybe even a fourth, chance. Every agent may visit, within the predefined time constraints, more than one restaurants. The agent is no longer a \emph{single destination and back} traveller. The agent now resembles the iconic \emph{travelling salesman}, who must pass through a network of cities, visiting every city once, coming back to the starting point, and all the time following the optimal route. This leads to the completely novel idea that each agent faces her own personalized TSP. We emphasize that the situation is specific for each agent, since the resulting network will vary. This is because each agent may have a different starting position and a different preference ranking of the restaurants. Of course, it is practically impossible to compute exact solutions for the TSP, as TSP is a famous NP-hard problem. However, this is a very small setback, as we may use metaheuristics. Metaheuristics can produce near-optimal solutions in a very short amount of time and this makes them indispensable tools of great practical value.
		\item	This entirely new setting is formalized and then rigorously analyzed via probabilistic tools. We derive general formulas that mathematically confirm the advantages of this policy and the increase in utilization. Detailed examples of typical instances of the game are given in a series of Tables and the derived equations are graphically depicted in order to demonstrate their qualitative and quantitative characteristics. Our scheme demonstrably achieves utilization ranging from $0.85$ and going to $0.95$ and even beyond from the first day. The steady state utilization, to which the game rapidly converges, is, as expected, $1.0$.
		\item	Finally, let us point out that the equations we derive generalize formulas that were previously presented in the literature, showing that the latter are actually special cases of our results.
	\end{itemize}

\subsection{Organization of the paper}

	The structure of this paper is as follows. In section \ref{sec:Introduction} we provide a  comprehensive description of the KPRP and the TSP. In subsection \ref{sec:Related} we mention some important works that deal with the KPRP and the TSP. The rigorous formulations of the KPRP and the TSP are presented in section \ref{sec:Background}. In section \ref{sec:DKPRG Formulation} we give a thorough explanation and presentation of the distributed version of the game, which we call Distributed Kolkata Paise Restaurant Game.
	We analyze mathematically the topological situation regarding the restaurants in section \ref{sec:Topological Considerations}, where the profound ramifications of the hypothesis that they follow the uniform probability distribution are developed. We formally prove the main results of the paper, which showcase the advantages of the distributed framework in a definitive manner in section \ref{sec:Mathematical Analysis} . Finally, in Section \ref{sec:Conclusion} we summarize our results and discuss future extensions of this work.

\section{Background} \label{sec:Background}

\subsection{Formulation of the standard Kolkata Paise Restaurant Problem} 

	In its most usual formulation, the Kolkata Paise Restaurant Problem is a repeated game with infinite rounds. There is a set of players, typically called \emph{agents} or \emph{customers}, that is denoted by $A = \{ a_1, \ldots, a_n \}$, a set of restaurants that is denoted by $R = \{ r_1, \ldots, r_n \}$, and a utility vector $u = ( u_1, \ldots, u_n ) \in \mathbb{R}^n$, which is associated with the restaurants and is common to every agent. On any given day, all agents decide to go to one of the $n$ restaurants for lunch. If it happens that just one agent arrives at a specific restaurant, then she will have lunch and she will be happy. If, however, two or more agents choose the same restaurant for lunch, then, it is generally assumed that just one of them will eat. The one to eat is chosen randomly. So, in such a case all but one will not be happy. Each agent has a utility and if they have lunch their utility is one, otherwise it is zero. In Chakrabarti et al. \cite{chakrabarti2009kolkata} the KPRP is modeled as a general one-shot restaurant game, where the set of agents is considered to be finite and the utilities are ranked as follows: $ 0 < u_n \leq \ldots \leq u_2 \leq u_1 $. The set of agents $A$ and the ranking of the utilities can be used to define the game. The latter can be represented as $G(u) = (A , S, \prod)$, where $A$ is the set of agents, $S$ is the set of strategies available to all agents, and $\prod = (\prod_1, \ldots, \prod_n )$ stands for the payoff vector. If the $i^{th}$ agent $a_i$ decides to go to the $j^{th}$ restaurant $r_j$, then the corresponding strategy is $s_i = j$. Every day each agent decides to which of the $n$ restaurants will go to eat. If $s_i = j$, this means that agent $a_i$ has decided to go to restaurant $r_j$. Given any strategy combination $s = (s_1, \ldots, s_n ) \in S^n$, the associated payoff vector is defined as $\prod ( s ) = ( \prod_1 ( s ), \ldots, \prod_n ( s ) )$, where the payoff $\prod_i ( s )$ of player $a_i$ is $\frac { u_{s_i} } { N_i (s) }$ and $N_i (s)$ is the total number of players that have made the same choice, i.e., restaurant $r_j$, as player $a_i$, including $a_i$. The strategy combination is in fact the restaurants the agents chose to eat to, and their payoff depends on their decision and the number of other agents that have made the same choice. In the literature, a game like KPRP, where there are potentially infinite rounds and in each round the same stage game is played, is referred to a \emph{supergame} \cite{mertens1989supergames}. A supergame is a situation where the same game is repeatedly played as a one-shot game and the agents count the payoff in the long run of the game. This makes the payoff function more complex due to the repetitions.

\subsection{Formulation of the TSP}

	The problem of finding the shortest Hamiltonian cycle is closely related to the TSP. The Hamiltonian graph problem, i.e., determining if a graph has a Hamiltonian cycle, is reducible to the traveling salesman problem. The trick is to assign zero length to the graph edges and, at the same time, create a new edge of length one for each missing edge. If the TSP solution for the resulting graph is zero, then there is a Hamiltonian cycle in the original graph; if the TSP solution is a positive number, then there is no Hamiltonian cycle in the original graph (see \cite{Lawler1985a}). In different fields, such as operational research and theoretical computer science, TSP, which is NP-hard, is of great significance. Usually TSP is represented by a graph. The fact that TSP is NP-hard implies that there is no known polynomial-time algorithm for finding an optimal solution regardless of the size of the problem instance \cite{Rego2011427}. There are two types of models for the TSP, \textit{symmetric} and \textit{asymmetric}. The former is represented by a complete undirected graph $G = (V, E)$ and the latter by a complete directed graph $G = (V, A)$. Assuming that $n$ denotes the number of cities (nodes), $V = \{ 1, 2, 3, \ldots, n \}$ is the set of vertices, $E = \{ (i, j) : i, j \in V, \text{ where } i < j \}$ is the set of edges, and $A = \{ (i, j) : i, j \in V, \text{ where } i \neq j \}$ is the set of arcs. A cost matrix $C = [ c_{i, j} ]$, which satisfies the triangle inequality $c_{i, j} \leq c_{i, k} + c_{k, j}$ for every $i, j, k$, is defined for each edge or arc. If $c_{i, j}$ is equal to $c_{j, i}$, the TSP is \emph{symmetric} (sTSP), otherwise it is called \emph{asymmetric} (aTSP). In particular, this is the case for problems where the vertices are points $P_i = (X_i, Y_i)$ of the Euclidean plane, and $c_{i, j} = \sqrt{ (X_j - X_i)^2 + (Y_j - Y_i)^2 }$ is the Euclidean distance. The triangle inequality holds if the quantity $c_{i, j}$ represents the length of the shortest path from $i$ to $j$ in the graph $G$ \cite{Johnson1989}. In the case of the symmetrical TSP, the number of all possible routes covering all cities and corresponding to all feasible solutions is given by $\frac {(n-1)!} {2}$ (recall that the number of cities is $n$). The cost of the route is the sum of the costs of the edges followed.

\section{Formulation of the DKPRG} \label{sec:DKPRG Formulation}

	The Kolkata Paise Restaurant problem (KPRP) is considered an extension of the minority game, as it involves multiple players ($n$) each having multiple choices ($N$). In its most general form it is possible that $n \neq N$. In this paper we follow the pretty much standard approach that the number of agents is equal to the number of restaurants, i.e., $n = N$. The novelty of our work lies on the fact that we advocate a spatially distributed and, in our view, more realistic version of the KPRP by taking into account the topology of the restaurants and by allowing the agents to begin their routes from different starting points. We call our version the \emph{Distributed Kolkata Paise Restaurant Game}, or DKPRG for from now on.

	In the original formulation of the KPRP one may readily point out the following important underlying assumptions.

	\begin{enumerate}
		\item[(\textbf{A1})]	All agents start from the same location.
		\item[(\textbf{A2})]	All restaurants are near enough to the point of origin of every customer, so that each customer can, in principle, go to any restaurant, eat there and return back to work in time, that is within the time window of the lunch break.
		\item[(\textbf{A3})]	Every restaurant is sufficiently far away from every other restaurant, so as to make prohibitive in terms of time constraints the possibility of any customer trying a second restaurant, in case her first choice proved fruitless.
	\end{enumerate}

	In the two dimensional setting of Kolkata, or, as a matter of fact, of any city, the above assumptions taken together imply something very close to the situation depicted in Figure \ref{fig:The topology of KPRP}. There, one can see that the agents are concentrated within a very narrow region, which can be viewed as the center of a conceptual ``circle.'' The restaurants are located on this ``circle'' and since \emph{no two of them are allowed to be close} they form something that resembles a ``regular polygon.''

\begin{figure}
	\centering
	\begin{tikzpicture} [scale = 1.25]
		\draw [line width = 1.5pt, MyBlue] (0, 0) circle [radius = 3cm];
		\node [rectangle, fill = WordRed] (Restaurants) at (0.0, 3.5){ \color{white} Restaurants };
		\node [rectangle, fill = WordRed] (r1) at (3.0, 0.0) { \color{white} $r_1$ };
		\node [rectangle, fill = WordRed] (r2) at ( { 3*cos(45) }, { 3*sin(45) } ) { \color{white} $r_2$ };
		\node [rectangle, fill = WordRed] (r3) at (0.0, 3.0){ \color{white} $r_3$ };
		\node [rectangle, fill = WordRed] () at ( { 3*cos(145) }, { 3*sin(145) } ) {};
		\node [rectangle, fill = WordRed] () at ( { 3*cos(180) }, { 3*sin(180) } ) {};
		\node [rectangle, fill = WordRed] () at ( { 3*cos(225) }, { 3*sin(225) } ) {};
		\node [rectangle, fill = WordRed] (rn-1) at (0.0, -3.0) { \color{white} $r_{n-1}$ };
		\node [rectangle, fill = WordRed] (rn) at ( { 3*cos(-45) }, { 3*sin(-45) } ) { \color{white} $r_n$ };
		\fill [WordAquaLighter60, line width = 1.5pt, rounded corners = 30pt](-1.5, -1.5) rectangle (1.5, 1.5);
		\node [rectangle, fill = WordBlueDarker50] (Agents) at (0.0, 0.0){ \color{white} Agents };
		\node [circle, fill = WordBlueDarker50] (a1) at ( { 1.3*cos(45) }, { 1.3*sin(45) } ) { \color{white}  ${\tiny a_1}$ };
		\node [circle, fill = WordBlueDarker50] () at ( { 1.3*cos(135) }, { 1.3*sin(135) } ) { };
		\node [circle, fill = WordBlueDarker50] () at ( { 1.3*cos(180) }, { 1.3*sin(180) } ) { };
		\node [circle, fill = WordBlueDarker50] () at ( { 1.3*cos(225) }, { 1.3*sin(225) } ) { };
		\node [circle, fill = WordBlueDarker50] (an) at ( { 1.3*cos(315) }, { 1.3*sin(315) } ) { \color{white}  ${\tiny a_n}$ };
	\end{tikzpicture}
	\caption{The assumed topology in the standard KPRP. The restaurants are located on a ``circle,'' forming a ``regular polygon.'' The agents are concentrated in a very narrow region around center of the ``circle.''} \label{fig:The topology of KPRP}
\end{figure}
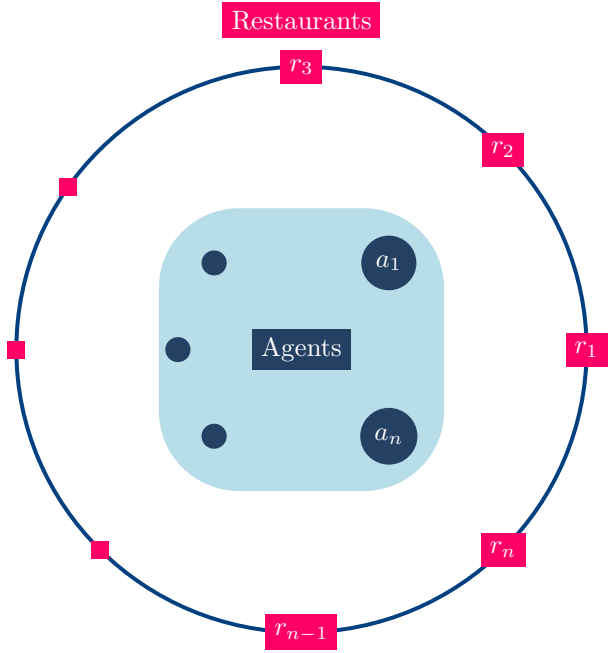

	This last remark is significant because it disallows a situation as the one shown in Figure \ref{fig:The forbidden topology of KPRP}. The spatial layout depicted in this Figure is strictly forbidden. The proximity of two, three or more restaurants would contradict the impossibility of a second chance. In the standard KPRP no agent is allowed a second chance. We write ``circle'' and ``regular polygon'' inside quotation marks because we are not obviously dealing with a perfect geometric circle or a perfect regular polygon, but two dimensional approximations resembling the aforementioned symmetric shapes. Clearly, this a very special topological layout, one that is highly unlikely to be observed in practice. There is no compelling reason for the restaurants to exhibit this regularity or the agents to be confined to approximately the same location. On the contrary, it would seem far more reasonable to assume that at least the restaurants and perhaps even the agents are uniformly distributed within a given area. Finally, the usual assumption that the preference ranking of the restaurants is common to all customers seems a bit too special and probably too restrictive.

\begin{figure}[H]
	\centering
	\begin{tikzpicture} [scale = 1.25]
		\draw [line width = 1.5pt, MyBlue] (0, 0) circle [radius = 3cm];
		\node [rectangle, fill = WordRed] (Restaurants) at (0.0, 3.5){ \color{white} Restaurants };
		\node [rectangle, fill = WordRed] (r1) at ( { 3*cos(0) }, { 3*sin(0) } ) { };
		\node [rectangle, fill = WordRed] (r1) at ( { 3*cos(35) }, { 3*sin(35) } ) { };
		\node [rectangle, fill = WordRed] (rj-1) at ( { 3*cos(70) }, { 3*sin(70) } ) { \color{white} $r_{j-1}$ };
		\node [rectangle, fill = WordRed] (rj) at ( { 3*cos(90) }, { 3*sin(90) } ) { \color{white} $r_j$ };
		\node [rectangle, fill = WordRed] (rj+1) at ( { 3*cos(110) }, { 3*sin(110) } ) { \color{white} $r_{j+1}$ };
		\node [rectangle, fill = WordRed] () at ( { 3*cos(145) }, { 3*sin(145) } ) {};
		\node [rectangle, fill = WordRed] () at ( { 3*cos(180) }, { 3*sin(180) } ) {};
		\node [rectangle, fill = WordRed] () at ( { 3*cos(225) }, { 3*sin(225) } ) {};
		\node [rectangle, fill = WordRed] (rn-1) at ( { 3*cos(260) }, { 3*sin(260) } ) { \color{white} $r_p$ };
		\node [rectangle, fill = WordRed] (rn-1) at ( { 3*cos(280) }, { 3*sin(280) } ) { \color{white} $r_q$ };
		\node [rectangle, fill = WordRed] (rn) at ( { 3*cos(-45) }, { 3*sin(-45) } ) { };
		\fill [WordAquaLighter60, line width = 1.5pt, rounded corners = 30pt](-1.5, -1.5) rectangle (1.5, 1.5);
		\node [rectangle, fill = WordBlueDarker50] (Agents) at (0.0, 0.0){ \color{white} Agents };
		\node [circle, fill = WordBlueDarker50] (a1) at ( { 1.3*cos(45) }, { 1.3*sin(45) } ) { \color{white}  ${\tiny a_1}$ };
		\node [circle, fill = WordBlueDarker50] () at ( { 1.3*cos(135) }, { 1.3*sin(135) } ) { };
		\node [circle, fill = WordBlueDarker50] () at ( { 1.3*cos(180) }, { 1.3*sin(180) } ) { };
		\node [circle, fill = WordBlueDarker50] () at ( { 1.3*cos(225) }, { 1.3*sin(225) } ) { };
		\node [circle, fill = WordBlueDarker50] (an) at ( { 1.3*cos(315) }, { 1.3*sin(315) } ) { \color{white}  ${\tiny a_n}$ };
	\end{tikzpicture}
	\caption{The spatial layout depicted in this Figure is strictly forbidden. The proximity of two, three or more restaurants would contradict the impossibility of a second chance. In the standard KPRP no agent is allowed a second chance.} \label{fig:The forbidden topology of KPRP}
\end{figure}
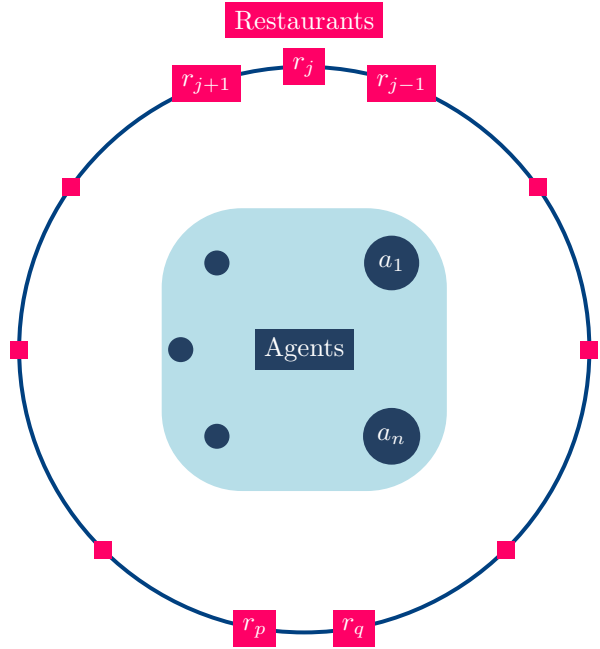

	With that motivation in mind, in this work we propose to abolish all these assumptions. The resulting game is spatially distributed in terms of restaurants and as such is called the Distributed Kolkata Paise Restaurant Game (DKPRG). In our setting, each customer may have her own staring point, which is, in general, different from the starting locations of the other customers. The staring locations can either be concentrated in a small region of the Kolkata city area, precisely like the standard KPRP, or they may be assumed to follow a random distribution. The fundamental difference with prior approaches is that now the restaurants are viewed as being \emph{uniformly distributed} over the city of Kolkata. This uniform randomness in the placement of the restaurants implies that there must be clusters of restaurants sufficiently near each other. This conclusion becomes inescapable, particularly in the case where the number of restaurants is large ($n \to \infty$). As will be shown in the following sections, the \emph{expected distance} between ``adjacent'' restaurants will be relatively short and will only decrease as the number $n$ of restaurants increases.

	The assumption of the random placement of restaurants leads to a \emph{personalized} situation for each individual agent: each agent is in effect faced with a \emph{personalized} Travelling Salesman Problem. To every agent corresponds an individual graph, which is assumed to be complete. The completeness assumption is not absolutely essential for the TSP, but, in any case, seems reasonable in the sense that one can go from any given restaurant to any other. This graph has $n + 1$ nodes, which are the locations of the $n$ restaurants plus the location of the starting point of the customer. The costs assigned to the edges of the graph are also \emph{personalized}; each agent combines an objective factor, the spatial distances between the restaurants, with a subjective factor, her personal preferences. Recall that in the DKPRG we forego the common preference restriction and we let every customer have a distinct preference, i.e., she may prefer a particular restaurant and dislike another. Let us clarify however, that getting served, even at the least preferable restaurant, is more desirable than not getting served at all! This in turn will lead to a possibly unique ordering of the restaurants from the most preferable to the least for \emph{each agent}. For instance, if two restaurants $r$ and $r'$ are equidistant from the staring point $s$ of a certain customer, something that is obviously an objective fact, but the customer in question has a clear preference for $r$ over $r'$, then she adjusts the costs $c_{s, r}$ and $c_{s, r'}$ corresponding to the edges $(s, r)$ and $(s, r')$, respectively, so that $c_{s, r} < c_{s, r'}$.

	Hence, every agent is faced with a distinctive network topology, which is the combined result of the inherent randomness of the spatial locations and the subjectiveness of her preferences. The topology of the restaurants has a further implication of the utmost importance: a customer whose first choice is a particular restaurant, will now have with very high probability the opportunity to visit a second, a third, or even a fourth restaurant in the same area if need be. For each agent the time cost is dominated by the time taken to visit the first restaurant; the trip to other nearby restaurants in the same region incurs a relatively negligible time cost due to their spatial proximity. The customer has a second (even a third) chance to be served within the time window of the lunch break. Thus, an efficient, if not optimal, method for every customer to make well-informed decisions regarding her first, second, third, etc. choice is to solve the Travelling Salesman Problem for her \emph{personalized} graph. Obviously, the TSP being an NP-hard problem, precludes the possibility of exact solutions. Nonetheless, near-optimal solutions of great practical value can easily be achieved in very short time by employing metaheuristics, as we have pointed out in subsection \ref{sec:Related}.

	From this perspective, we proceed now to propose an effective distributed strategy that, if adopted by every agent, will lead to an efficient global solution. All of them will use a common high-level strategy that is tailored and fine-tuned according to their individual preferences. To enhance clarity we explicitly state below the hypotheses and that define and characterize the DKPRG variant.

	\begin{enumerate}
		\item[(\textbf{H1})]	DKPRG is an infinitely repeated game.
		\item[(\textbf{H2})]	There are two main protagonists in the game. First, the $n$ \emph{agents} (also referred to as \emph{customers}) with different, in general, starting locations. The set of agents is denoted by $A = \{ a_1, \ldots, a_n \}$. Second, the $n$ \emph{restaurants} that are uniformly distributed within the same area. The set of restaurants is denoted by $R = \{ r_1, \ldots, r_n \}$. All agents know the locations of the restaurants, but each one of them need not know the starting locations of the other agents.
		\item[(\textbf{H3})]	To each agent $a \in A$ corresponds a distinct \emph{personal preference ordering} $P_a = (r_{j_1}, r_{j_2}, \ldots, r_{j_n})$ such that restaurant $r_{j_1}$ is her first preference, $r_{j_2}$ is her second preference, and so on, with $r_{j_n}$ being the least preferable restaurant for $a$.
		\item[(\textbf{H4})]	We adopt the standard convention that each restaurant can accommodate only one customer at a time. The immediate ramification of this convention is that if two or more customers arrive at a restaurant, only one can be served. The one to be served is chosen randomly.
		\item[(\textbf{H5})]	The aforementioned hypotheses immediately bring to the front the novelty and contribution of our approach. The positions of the restaurants relative to the starting point of each customer create for every customer a distinct topology, a distinct network of restaurants. Specifically, each agent $a \in A$ perceives a \emph{personalized graph} $G_a = (V_a, E_a)$. $G_a$ is a complete undirected graph having $n + 1$ vertices $v_0, v_1, \ldots, v_n$, where $v_0$ is the starting location of $a$ and $v_j$ is the location of restaurant $j, 1 \leq j \leq n$. For each pair of distinct vertices $u, v \in V_a$ there exists an undirected edge $(u, v) \in E_a$. The graph $G_a$ is complemented with the (symmetric) \emph{cost matrix} $C_a$, that assigns to each edge $(u, v)$ a cost $c_{u, v}$. We may surmise that the costs are computed by a function $f_a$ that incorporates geographical data, i.e., the distances between the restaurants, and the preference ordering $P_a$. The topological layout of the restaurants is an objective and global reality that is common to all customers and is undeniably crucial to a rational computation of the travel costs. On the other hand, it would be illogical if an agent did not take into account her preferences. The weight assigned to the spatial distances need not be equal to the weight assigned to the preferences. A conservative approach could assign a far greater weight to the distances compared to the preferences. A more idiosyncratic approach would deal with both on an equal footing by assigning equal weights to distances and preferences. It is plausible that for customer $a$ the personal preferences may play a more prominent role than for customer $a'$, in which case we may allow for the possibility that, in the process of computing the costs, each customer assigns completely different weights. In any event, we regard each cost matrix $C_a$ as distinct, which, along with the uniqueness of each $V_a$, explains why the resulting networks $G_a$ are all considered different, that is every customer is confronted with her own unique and \emph{personalized TSP}.
		\item[(\textbf{H6})]	Each customer $a \in A$ solves the corresponding TSP using an efficient metaheuristic that outputs a near-optimal solution. In that manner, $a$ computes a (near-optimal) \emph{tour} $T_a = ( l_0, l_1, \dots, l_n, l_{n+1} )$. The tour is represented by the ordered list $( l_0, l_1, \dots, l_n, l_{n+1} )$, where $l_0 = l_{n+1}$ is the starting point of $a$ and $l_k, 1 \leq k \leq n$, is the index of the restaurant in the $k^{th}$ position of the tour. Endowed with their individual route $T_a$, which is an integral part of their strategy, all customers follow a simple \emph{common} strategy. From their starting location $l_0$ they first travel to the restaurant $r_{l_1}$. Once there, those that get served conclude their route successfully. Those that do not get served, proceed to the restaurant $r_{l_2}$. If their attempt at getting lunch also fails at $r_{l_2}$, then they proceed to $r_{l_3}$, and so on. Obviously, the time constraints, that is the fact that the agent must have returned to her staring point by the time the lunch break is over, means that the agent will not have the opportunity to exhaust the entire tour. The customer must interrupt the tour at some point in order to return. This may happen after travelling unsuccessfully to two, three, or more restaurants, depending on the topology of the network. 
		\item[(\textbf{H7})]	\textbf{The Revision Strategy}. We adopt the standard assumption that the agents operate independently and no communication takes place between any two of them. Therefore, each customer is completely unaware of the routes of the other customers. They revise their strategy every evening taking into account only what happened during the present day. This means that they decide using only information from the last day and no prior information or history need to be kept. We assume that all agents follow the same policy. If they got served at a specific restaurant this day, then tomorrow they go straight to the \emph{same} restaurant. This applies even if this restaurant is not in the first place of their tour. For example, those agents that failed to get lunch at their first choice, but managed to do so at their second, or third choice, tomorrow go straight the restaurant that served them despite the fact that this particular restaurant is not their most preferable. Those that failed to get lunch, only know \emph{which restaurants were left vacant}, i.e., not visited by any agent today. Further or more elaborate information, such as the choices of other players or if they got served and at which restaurant, seems unnecessary. The unserved agents construct and solve their new personalized TSP, this time using as vertices only the \emph{vacant restaurants} (plus of course their starting location).
	\end{enumerate}

	Having explained the details of the DKPRG, we shall proceed to analyze the mathematical characteristics and evaluate the resulting utilization of this policy in the following sections.

\section{Topological considerations} \label{sec:Topological Considerations}

	We begin this section by fixing the notation and giving some definitions to clarify the most important concepts of our exposition.

	\begin{definition} \
		\begin{itemize}
			\item   The one-shot DKPRG takes place every day. We use the parameter $t = 1, 2, \ldots,$ to designate the day under consideration.
			\item   To each agent $a \in A$ corresponds the \emph{personalized network} $G_a = (V_a, E_a)$ together with the \emph{personalized cost matrix} $C_a$, which are constructed in the way we outlined in the previous section. Agent $a$ follows the tour $T_a = ( l_0, l_1, \dots, l_n, l_{n+1} )$, which is the solution to her \emph{personalized TSP}. As we have emphasized, by using metaheuristics it is possible to obtain near-optimal solutions in a very short amount of time. 
			\item   The quality and efficiency of the strategy is measured by the \emph{utilization} ratio $f$. This is of course the fraction of agents being served in a day, or, equivalently, the fraction of restaurants serving customers in a day. The equivalence is obvious because there are $n$ customers and $n$ restaurants.
		\end{itemize}
	\end{definition}

	In section \ref{sec:Mathematical Analysis} we shall revisit the concept of utilization and we shall be more precise by asserting the expected utilization per day as a function of the game parameters.

	An agent $a$ who has opted to follow tour $T_{a} = ( l_0, l_1, \dots, l_n, l_{n+1} )$ will initially try to get lunch at restaurant $r_{l_1}$. If she succeeds, she will eat and then return to her starting point. If she fails, she will visit the next restaurant in the tour, i.e., $r_{l_2}$. If she gets lunch there, she will subsequently go back to work. This process will go on until either she gets served or runs out of time, in which case she must interrupt the tour and return to work. If the time constraints allow her to pass through the \emph{first} $m$ restaurants in her tour, in the worst-case scenario of $m - 1$ consecutive failures, then we say that $T_{a}$ is an $m$-\emph{stop} tour. To facilitate our mathematical analysis we take for granted that \emph{all} customers follow $m$-stop tours. We have already explained why, in our view, $m$ must be $\geq 2$. The case where $m = 1$ reduces to the standard treatment of the KPRP, which has already been analyzed extensively in the literature. In the rest of this work we study the case where $m \geq 2$. All these considerations motivate the next definition.

	\begin{definition} \
		\begin{itemize}
			\item   The tour $T_{a} = ( l_0, l_1, \dots, l_n, l_{n+1} )$ associated with agent $a$ is an $m$-\emph{stop} tour, $m \geq 2$, if, in the worst case, agent $a$ can visit restaurants $l_1, l_2, \ldots, l_m$ in this order without violating her time constraints. In such a tour, $l_1$ is the \emph{first} stop, $l_2$ is the \emph{second} stop, and so on, with $l_m$ being the final $m^{th}$ stop. 
			\item   If $\forall a \in A$, $T_{a}$ is an $m$-stop tour, then the resulting game is the $m$-stop DKPRG.
		\end{itemize}
	\end{definition}

	Let us now explore the spatial ramifications of our assumption that the restaurants are uniformly distributed within the overall city area. We now give the formal definition of uniform distribution.

	\begin{definition} \label{def:Uniform Distribution}
		Given a region $B$ on the plane, a random variable $L$ has \emph{uniform distribution} on $B$, if given any subregion\footnote{If one wants to be overly technical, one should assume that both $B$ and $C$ are measurable sets. In the current setting, we believe that it is unnecessary to go to such a technical depth.} $C$ the following holds:

		\begin{align} \label{eq:Uniform Distribution Definition}
			P(L \in C) = \frac{ \text{area} ( C ) } { \text{area} ( B ) }, \quad C \subset B \ .
		\end{align}

		We assume of course that $L$ takes values in $B$.
	\end{definition}

	The above definition is adapted from \cite{Pitman1993}. For a more general and sophisticated definition in terms of measures we refer the interested reader to \cite{Klenke2014}.

	\begin{proposition} \label{thr: Expected Number of Restaurants}
		Assuming that the $n$ restaurants are uniformly distributed on the whole city area, then if the city area is partitioned into $n$ regions of equal area, the \emph{expected number} $\overline{N_p}$ of restaurants in each region is exactly $1$.

		\begin{align} \label{eq:Expected Value Np}
			\overline{N_p} = 1 \ , \ 1 \leq p \leq n \ . 
		\end{align}

	\end{proposition}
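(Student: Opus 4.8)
The plan is to reduce the statement to a direct application of Definition \ref{def:Uniform Distribution} combined with linearity of expectation, so the argument is short. First I would set up notation: let $L_1, \ldots, L_n$ be the random positions of the $n$ restaurants, each a random variable taking values in the city area $B$ and distributed uniformly on $B$ in the sense of \eqref{eq:Uniform Distribution Definition}. Let $C_1, \ldots, C_n$ denote the cells of the assumed partition, so that $B = \bigcup_{p=1}^{n} C_p$, the cells are essentially disjoint, and $\operatorname{area}(C_p) = \operatorname{area}(B)/n$ for every $p$ with $1 \le p \le n$.

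Next I would introduce, for each restaurant index $i$ and each cell index $p$, the indicator random variable $X_{i,p} = \mathds{1}\{ L_i \in C_p \}$, and write the number of restaurants located in $C_p$ as $N_p = \sum_{i=1}^{n} X_{i,p}$. Applying \eqref{eq:Uniform Distribution Definition} with the subregion $C = C_p$ gives $P(L_i \in C_p) = \operatorname{area}(C_p) / \operatorname{area}(B) = 1/n$, and therefore $E[X_{i,p}] = P(L_i \in C_p) = 1/n$ for every $i$. Invoking linearity of expectation, which needs no independence among the $L_i$, I would then conclude
\begin{align}
	\overline{N_p} = E\!\left[ \sum_{i=1}^{n} X_{i,p} \right] = \sum_{i=1}^{n} E[X_{i,p}] = n \cdot \frac{1}{n} = 1 ,
\end{align}
for every $p$ with $1 \le p \le n$, which is precisely \eqref{eq:Expected Value Np}.

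I do not anticipate any genuine obstacle. The only points worth a brief remark are that the argument uses solely the marginal uniformity of each restaurant position — neither independence nor any joint assumption on the $L_i$ is required — and the measure-theoretic caveat already flagged in the footnote to Definition \ref{def:Uniform Distribution}, namely that $B$ and the cells $C_p$ be taken measurable so that $\operatorname{area}(\cdot)$ is well defined; as noted there, we shall not belabour this. As a consistency check one may also note that $\sum_{p=1}^{n} N_p = n$ holds deterministically, hence $\sum_{p=1}^{n} \overline{N_p} = n$, in agreement with each summand being equal to $1$.
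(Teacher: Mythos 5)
Your proof is correct and takes essentially the same route as the paper's: indicator variables for each restaurant--region pair, the uniform-distribution definition \eqref{eq:Uniform Distribution Definition} giving probability $\frac{1}{n}$, and linearity of expectation. Your added remarks (that independence of the positions is not needed, and the consistency check $\sum_{p} \overline{N_p} = n$) are accurate but do not alter the argument.
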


	\begin{proof}
		Let $B$ stand for the whole city area and let $B_1, \ldots, B_n$ be the $n$ regions. The hypotheses assert that:

		\begin{enumerate}
			\item	$B_1 \cup \ldots \cup B_n = B$,
			\item	$B_p \cap B_q = \emptyset$, if $1 \leq p \neq q \leq n$, and
			\item	$\text{area} ( B_1 ) = \text{area} ( B_2 ) = \ldots = \text{area} ( B_n ) = \frac {\text{area} ( B ) } { n }$.
		\end{enumerate}

		Invoking the fact that the $n$ restaurants are uniformly distributed on the whole city, we deduce from (\ref{eq:Uniform Distribution Definition}) that for every restaurant $r_j, 1 \leq j \leq n$, and for every region $B_p, 1 \leq p \leq n$,

		\begin{align} \label{eq:Probability rj In Bp}
			P(r_j \in B_p) = \frac{ \text{area} ( B_p ) } { \text{area} ( B ) } = \frac { 1 } { n }
			\ , \quad 1 \leq j, p \leq n \ .
			\tag{ \ref{thr: Expected Number of Restaurants}.i }
		\end{align}

		We may now define the following collection of auxiliary random variables $N_{pj}$, where $1 \leq p, j \leq n$.

		\begin{align} \label{eq:Random Variables Npj}
			N_{p j} =
			\left\{
			\begin{matrix*}[l]
				1 & \text{if restaurant } r_j \text{ is located in region } B_p \\
				0 & \text{otherwise}
			\end{matrix*}
			\right.
			\ .
			\tag{ \ref{thr: Expected Number of Restaurants}.ii }
		\end{align}

		By combining the result from (\ref{eq:Probability rj In Bp}) with definition (\ref{eq:Random Variables Npj}), we may conclude that

		\begin{align} \label{eq:Probability of Random Variables Npj}
			N_{p j} =
			\left\{
			\begin{matrix*}[l]
				1 & \text{with probability } \frac{ 1 } { n } \\
				0 & \text{with probability } \frac{ n - 1 } { n }
			\end{matrix*} 
			\right.
			\ , \quad 1 \leq p, j \leq n \ .
			\tag{ \ref{thr: Expected Number of Restaurants}.iii }
		\end{align}

		Then, the random variable

		\begin{align} \label{eq:Random Variables Np}
			N_p = \sum_{ j = 1 }^{ n } N_{p j} \quad ( 1 \leq p \leq n ) \tag{ \ref{thr: Expected Number of Restaurants}.iv }
		\end{align}

		gives the number of restaurants in region $B_p$, $1 \leq p \leq n$. We are not interested in the actual value of the random variable $N_p$ per se, but in its expected value $E \left[ N_p \right]$. The latter can be easily computed if we use the above results and the linearity of the expected value operator.

		\begin{align} \label{eq:Expected Value of Random Variables Np}
			\overline{N_p} = E \left[ N_{ p } \right]
			\overset{ (\ref{eq:Random Variables Np}) } { = }
			E \left[ \sum_{ j = 1 }^{ n } N_{p j} \right] =
			\sum_{ j = 1 }^{ n } E \left[ N_{p j} \right]
			\overset{ (\ref{eq:Probability of Random Variables Npj}) } { = }
			\sum_{ j = 1 }^{ n } \left( 1 \cdot \frac{ 1 }{ n } \right) =
			1  \tag{ \ref{thr: Expected Number of Restaurants}.v }
		\end{align}
	
		This establishes that the expected number of restaurants in each region is precisely $1$ and proves formula (\ref{eq:Expected Value Np}).
	\end{proof}

	Partitioning a city area into $n$ disjoint regions of equal area might not be an easy task. The point is that for large values of $n$, as is the standard assumption in the literature, it is certainly doable. We stress the fact the shape of the regions need not be the same. Indeed, the validity of Proposition \ref{thr: Expected Number of Restaurants} holds irrespective of whether the regions have the same shape or any particular shape for that matter.

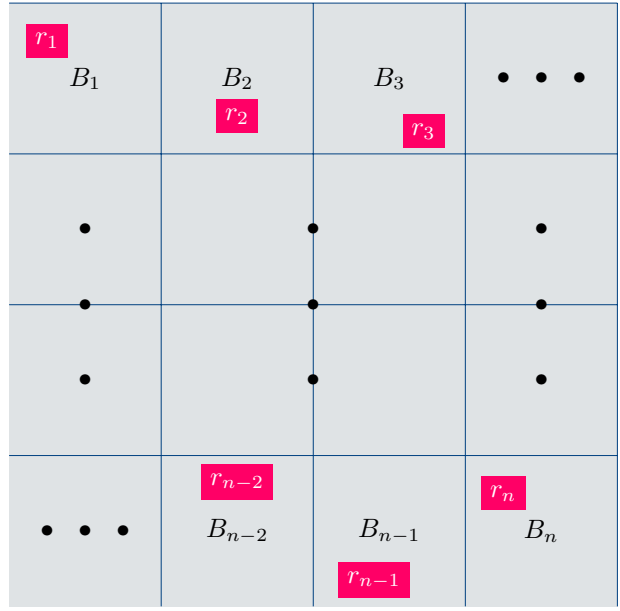
\begin{figure}
	\centering
	\begin{tikzpicture} [scale = 1.0]
		\fill [WordIceBlue] (-4.0, -4.0) rectangle (4.0, 4.0);
		\draw [step = 2.0 cm, MyBlue, line width = 0.25pt] (-4.0, -4.0) grid (4.0, 4.0);
		\node [rectangle, minimum size = 2.0 cm] () at (-3.0, 3.0) { $B_1$ };
		\node [rectangle, minimum size = 2.0 cm] () at (-1.0, 3.0) { $B_2$ };
		\node [rectangle, minimum size = 2.0 cm] () at (1.0, 3.0) { $B_3$ };
		\node [rectangle, minimum size = 2.0 cm] () at (-1.0, -3.0) { $B_{n-2}$ };
		\node [rectangle, minimum size = 2.0 cm] () at (1.0, -3.0) { $B_{n-1}$ };
		\node [rectangle, minimum size = 2.0 cm] () at (3.0, -3.0) { $B_n$ };
		\node [rectangle] at (2.5, 3.0){ $\bullet$ };
		\node [rectangle] at (3.0, 3.0){ $\bullet$ };
		\node [rectangle] at (3.5, 3.0){ $\bullet$ };
		\node [rectangle] at (-3.0, 1.0){ $\bullet$ };
		\node [rectangle] at (-3.0, 0.0){ $\bullet$ };
		\node [rectangle] at (-3.0, -1.0){ $\bullet$ };
		\node [rectangle] at (0.0, 1.0){ $\bullet$ };
		\node [rectangle] at (0.0, 0.0){ $\bullet$ };
		\node [rectangle] at (0.0, -1.0){ $\bullet$ };
		\node [rectangle] at (3.0, 1.0){ $\bullet$ };
		\node [rectangle] at (3.0, 0.0){ $\bullet$ };
		\node [rectangle] at (3.0, -1.0){ $\bullet$ };
		\node [rectangle] at (-3.5, -3.0){ $\bullet$ };
		\node [rectangle] at (-3.0, -3.0){ $\bullet$ };
		\node [rectangle] at (-2.5, -3.0){ $\bullet$ };
		\node [rectangle, fill = WordRed] (Restaurants) at (0.0, 4.5){ \color{white} Uniformly Distributed Restaurants };
		\node [rectangle, fill = WordRed] (r1) at (-3.5, 3.5) { \color{white} $r_1$ };
		\node [rectangle, fill = WordRed] (r2) at (-1.0, 2.5) { \color{white} $r_2$ };
		\node [rectangle, fill = WordRed] (r3) at (1.45, 2.3) { \color{white} $r_3$ };
		\node [rectangle, fill = WordRed] (rn-1) at (-1.0, -2.35) { \color{white} $r_{n-2}$ };
		\node [rectangle, fill = WordRed] (rn-1) at (0.8, -3.65) { \color{white} $r_{n-1}$ };
		\node [rectangle, fill = WordRed] (rn) at (2.5, -2.5) { \color{white} $r_n$ };
	\end{tikzpicture}
	\caption{Kolkata can be conceptually partitioned into $n$ regions $B_1, \ldots, B_n$ of \emph{equal} area. If the $n$ restaurants are uniformly distributed in the overall Kolkata area, then the \emph{expected number} of restaurants in each region $B_j, 1 \leq j \leq n$, is $1$.} \label{fig:Uniformly Distributed Restaurants}
\end{figure}

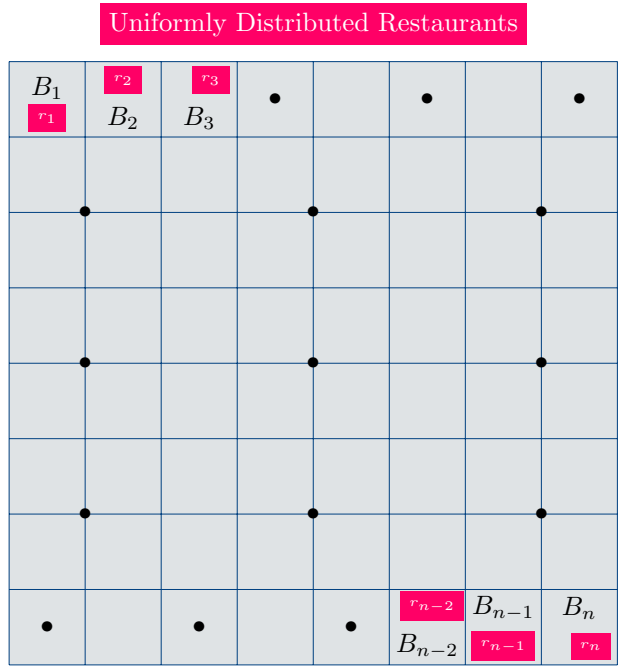
\begin{figure}
	\centering
	\begin{tikzpicture} [scale = 1.0]
		\fill [WordIceBlue] (-4.0, -4.0) rectangle (4.0, 4.0);
		\draw [step = 1.0 cm, MyBlue, line width = 0.25pt] (-4.0, -4.0) grid (4.0, 4.0);
		\node [rectangle, minimum size = 2.0 cm] () at (-3.5, 3.65) { $B_1$ };
		\node [rectangle, minimum size = 2.0 cm] () at (-2.5, 3.25) { $B_2$ };
		\node [rectangle, minimum size = 2.0 cm] () at (-1.5, 3.25) { $B_3$ };
		\node [rectangle, minimum size = 2.0 cm] () at (1.5, -3.75) { $B_{n-2}$ };
		\node [rectangle, minimum size = 2.0 cm] () at (2.5, -3.25) { $B_{n-1}$ };
		\node [rectangle, minimum size = 2.0 cm] () at (3.5, -3.25) { $B_n$ };
		\node [rectangle] at (-0.5, 3.5){ $\bullet$ };
		\node [rectangle] at (1.5, 3.5){ $\bullet$ };
		\node [rectangle] at (3.5, 3.5){ $\bullet$ };
		\node [rectangle] at (-3.0, 2.0){ $\bullet$ };
		\node [rectangle] at (-3.0, 0.0){ $\bullet$ };
		\node [rectangle] at (-3.0, -2.0){ $\bullet$ };
		\node [rectangle] at (0.0, 2.0){ $\bullet$ };
		\node [rectangle] at (0.0, 0.0){ $\bullet$ };
		\node [rectangle] at (0.0, -2.0){ $\bullet$ };
		\node [rectangle] at (3.0, 2.0){ $\bullet$ };
		\node [rectangle] at (3.0, 0.0){ $\bullet$ };
		\node [rectangle] at (3.0, -2.0){ $\bullet$ };
		\node [rectangle] at (-3.5, -3.5){ $\bullet$ };
		\node [rectangle] at (-1.5, -3.5){ $\bullet$ };
		\node [rectangle] at (0.5, -3.5){ $\bullet$ };
		\node [rectangle, fill = WordRed] (Restaurants) at (0.0, 4.5){ \color{white} Uniformly Distributed Restaurants };
		\node [rectangle, fill = WordRed] (r1) at (-3.5, 3.25) { \color{white} \tiny $r_1$ };
		\node [rectangle, fill = WordRed] (r2) at (-2.5, 3.75) { \color{white} \tiny $r_2$ };
		\node [rectangle, fill = WordRed] (r3) at (-1.35, 3.75) { \color{white} \tiny $r_3$ };
		\node [rectangle, fill = WordRed] (rn-1) at (1.56, -3.22) { \color{white} \tiny $r_{n-2}$ };
		\node [rectangle, fill = WordRed] (rn-1) at (2.5, -3.75) { \color{white} \tiny $r_{n-1}$ };
		\node [rectangle, fill = WordRed] (rn) at (3.65, -3.75) { \color{white} \tiny $r_n$ };
	\end{tikzpicture}
	\caption{As $n \to \infty$, the expected number of restaurants in each region remains $1$, but the expected distance between restaurants in neighbouring regions decreases.} \label{fig:Uniformly Distributed Restaurants for Large n}
\end{figure}

	This topological layout of the restaurants is shown in Figures \ref{fig:Uniformly Distributed Restaurants} and \ref{fig:Uniformly Distributed Restaurants for Large n}. In these Figures, the regions are drawn are squares, but this is just for convenience and to facilitate their graphic depiction. As we have explained, the regions are not required to have the same shape and nor does their shape need to resemble a regular two dimensional figure. For very large values of $n$, partitioning a city into very small identical squares is a good approximation, as we know from the field of image representation.

	It is useful to contrast the two Figures. The latter depicts the situation where the number of restaurants is much larger compared to the number of restaurants in the former Figure. This demonstrates clearly what happens when $n$ increases significantly, i.e., when $n \to \infty$. Irrespective of the size of magnitude of $n$, the expected number of restaurants in each of the $n$ regions (recall that they are pairwise disjoint and of equal area) remains $1$. What does change however is the \emph{area} of each region, which \emph{decreases} with $n$ and, as a consequence, the expected \emph{distance} between restaurants located in adjacent regions.

\begin{figure}
	\centering
	\begin{tikzpicture} [scale = 1.0]
		\fill [WordIceBlue] (-4.0, -4.0) rectangle (4.0, 4.0);
		\draw [step = 2.0 cm, MyBlue, line width = 0.1pt] (-4.0, -4.0) grid (4.0, 4.0);
		\node [rectangle, minimum size = 2.0 cm] () at (-2.5, 2.5) { \footnotesize $B_1$ };
		\node [rectangle, minimum size = 2.0 cm] () at (-1.5, 2.85) { \footnotesize $B_2$ };
		\node [rectangle, minimum size = 2.0 cm] () at (1.0, 3.5) { \footnotesize $B_3$ };
		\node [rectangle, minimum size = 2.0 cm] () at (-0.45, -2.9) { \footnotesize $B_{n \! - \! 2}$ };
		\node [rectangle, minimum size = 2.0 cm] () at (1.4, -3.07) { \footnotesize $B_{n \! - \! 1}$ };
		\node [rectangle, minimum size = 2.0 cm] () at (3.0, -2.5) { \footnotesize $B_n$ };
		\node [rectangle] at (2.5, 3.0){ \Large $\cdot$ };
		\node [rectangle] at (3.0, 3.0){ \Large $\cdot$ };
		\node [rectangle] at (3.5, 3.0){ \Large $\cdot$ };
		\node [rectangle] at (-3.0, 1.0){ \Large $\cdot$ };
		\node [rectangle] at (-3.0, 0.0){ \Large $\cdot$ };
		\node [rectangle] at (-3.0, -1.0){ \Large $\cdot$ };
		\node [rectangle] at (0.0, 1.0){ \Large $\cdot$ };
		\node [rectangle] at (0.0, 0.0){ \Large $\cdot$ };
		\node [rectangle] at (0.0, -1.0){ \Large $\cdot$ };
		\node [rectangle] at (3.0, 1.0){ \Large $\cdot$ };
		\node [rectangle] at (3.0, 0.0){ \Large $\cdot$ };
		\node [rectangle] at (3.0, -1.0){ \Large $\cdot$ };
		\node [rectangle] at (-3.5, -3.0){ \Large $\cdot$ };
		\node [rectangle] at (-3.0, -3.0){ \Large $\cdot$ };
		\node [rectangle] at (-2.5, -3.0){ \Large $\cdot$ };
		\node [rectangle, fill = WordRed] (r1) at (-3.65, 3.75) { \color{white} \tiny $r_1$ };
		\node [rectangle, fill = WordRed] (r2) at (-1.0, 2.25) { \color{white} \tiny $r_2$ };
		\node [rectangle, fill = WordRed] (r3) at (1.45, 2.3) { \color{white} \tiny $r_3$ };
		\node [rectangle, fill = WordRed] (rn-1) at (-0.5, -2.25) { \color{white} \tiny $r_{n \! - \! 2}$ };
		\node [rectangle, fill = WordRed] (rn-1) at (1.1, -3.75) { \color{white} \tiny $r_{n \! - \! 1}$ };
		\node [rectangle, fill = WordRed] (rn) at (2.35, -2.5) { \color{white} \tiny $r_n$ };
		\draw [GreenTeal, dashed, line width = 0.5pt] (-4.0, 2.0) -- node [above left, rotate = 45] { \tiny $diam B_1$} (-2.0, 4.0);
		\draw [GreenTeal, dashed, line width = 0.5pt] (-2.0, 4.0) -- node [above = 1.5 pt, rotate = -45] { \tiny $ diam B_2$} (0.0, 2.0);
		\draw [GreenTeal, dashed, line width = 0.5pt] (0.0, 2.0) -- node [below = 1.5 pt, rotate = 45] { \tiny $ diam B_3$} (2.0, 4.0);
		\draw [GreenTeal, dashed, line width = 0.5pt] (-2.0, -2.0) -- node [below = 1.5 pt, rotate = -45] { \tiny $ diam B_{n \! - \! 2}$} (0.0, -4.0);
		\draw [GreenTeal, dashed, line width = 0.5pt] (0.0, -4.0) -- node [above right, rotate = 45] { \tiny $ diam B_{n \! - \! 1}$} (2.0, -2.0);
		\draw [GreenTeal, dashed, line width = 0.5pt] (2.0, -4.0) -- node [below = 1.5 pt, rotate = 45] { \tiny $ diam B_n$} (4.0, -2.0);
		\draw [MyDarkBlue, line width = 1.0pt] (-3.51, 3.62) -- node [above, rotate = -27] { \footnotesize $\mathbf{d(r_1, r_2)}$} (-1.15, 2.38);
		\draw [MyDarkBlue, line width = 1.0pt] (-0.85, 2.25) -- node [above, rotate = 2] { \footnotesize $\mathbf{d(r_2, r_3)}$} (1.3, 2.3);
		\draw [MyDarkBlue, line width = 1.0pt] (-0.30, -2.38) -- node [above, rotate = -47] { \footnotesize $\mathbf{d(r_{n \! - \! 2}, r_{n \! - \! 1})}$} (0.9, -3.62);
		\draw [MyDarkBlue, line width = 1.0pt] (1.28, -3.63) -- node [below, rotate = 52] { \footnotesize $\mathbf{d(r_{n \! - \! 1}, r_n)}$} (2.21, -2.62);
	\end{tikzpicture}
	\caption{This figure shows the expected distances between restaurants located in adjacent regions.} \label{fig:Distances Between Adjucent Restaurants}
\end{figure}
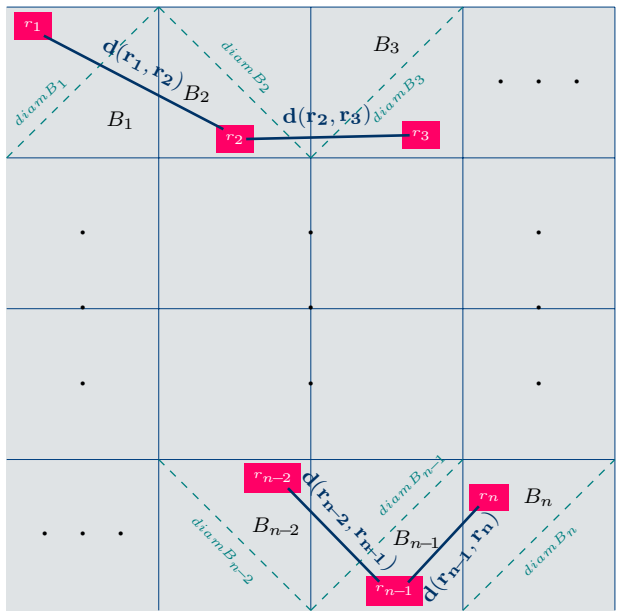

	Let us make the rather obvious observation that there is a meaningful notion of \emph{distance} defined between \emph{any two points}, or locations if you prefer, in the entire city area. In reality, this can be the geographical distance between any two locations, expressed in meters or kilometers or in some other unit of length. For instance, let us consider two points $x$ and $y$ with spatial coordinates $(x_1, x_2)$ and $(y_1, y_2)$, respectively. A typical manifestation of the notion of distance is the \emph{Euclidean} distance: $\sqrt{ (x_2 - x_1)^2 + (y_2 - y_1)^2}$ between $x$ and $y$. In any event, we take for granted the existence of such a distance function defined on every pair of points $(x, y)$ of the city, which is denoted by $d (x, y)$.

	\begin{definition} \ \label{def:Distance Adjacency Diameter Definition}
		\begin{itemize}
			\item	The distance between two regions $B_p$ and $B_q$ is defined as
			\begin{align} \label{eq:Region Distance}
				d ( B_p, B_q ) = \inf \{ d (x, y) : x \in B_p \text{ and } y \in B_q \} \ .
			\end{align}
			\item	Two regions $B_p$ and $B_q$ are \emph{adjacent} if
			\begin{align} \label{eq:Region Adjacency}
				d ( B_p, B_q ) = 0 \ .
			\end{align}
			\item	We define the concept of \emph{diameter} (see \cite{Munkres2000} for details) for the regions $B_p, 1 \leq p \leq n$. In particular, we define 
			\begin{align} \label{eq:Diameter}
				diam B_p = \sup \{ d (x, y) : x, y \in B_p \} \ .
			\end{align}
		\end{itemize}
	\end{definition}

	\begin{proposition} \label{thr: Expected Distance Between Adjacent Restaurants}
		Let the $n$ restaurants be uniformly distributed on the city area and assume that the whole area is partitioned into $n$ regions of equal area. If $r_p$ and $r_q$ are the restaurants located at adjacent regions $B_p$ and $B_q$ respectively, where $1 \leq p \neq q \leq n$, then the distance $d ( r_p, r_q )$ between them is bounded above by $diam B_p + diam B_q$:

		\begin{align} \label{eq:Upper Bound on Restaurant Distance}
			d ( r_p, r_q ) \leq diam B_p + diam B_q \ , \quad 1 \leq p \neq q \leq n \ .
		\end{align}
	\end{proposition}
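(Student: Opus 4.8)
The plan is to obtain (\ref{eq:Upper Bound on Restaurant Distance}) directly from the triangle inequality together with the definition of adjacency in (\ref{eq:Region Adjacency}). I would first observe that neither the uniform distribution of the restaurants nor the equality of the areas of the regions is actually needed for this particular estimate: the bound holds for \emph{any} points $r_p \in B_p$ and $r_q \in B_q$ the moment $B_p$ and $B_q$ are adjacent. These hypotheses are kept only because they are the standing assumptions under which the subsequent expected-distance analysis is conducted.

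The key steps are as follows. Since $B_p$ and $B_q$ are adjacent, (\ref{eq:Region Adjacency}) together with (\ref{eq:Region Distance}) gives $\inf \{ d(x, y) : x \in B_p,\ y \in B_q \} = 0$, so for every $\epsilon > 0$ there exist points $x_\epsilon \in B_p$ and $y_\epsilon \in B_q$ with $d(x_\epsilon, y_\epsilon) < \epsilon$. Because $r_p$ and $x_\epsilon$ both lie in $B_p$, the definition of diameter in (\ref{eq:Diameter}) yields $d(r_p, x_\epsilon) \leq diam B_p$, and likewise $d(y_\epsilon, r_q) \leq diam B_q$. Two applications of the triangle inequality then give
\begin{align*}
	d(r_p, r_q) \leq d(r_p, x_\epsilon) + d(x_\epsilon, y_\epsilon) + d(y_\epsilon, r_q) \leq diam B_p + \epsilon + diam B_q \ .
\end{align*}
Since $\epsilon > 0$ was arbitrary, letting $\epsilon \to 0$ produces precisely (\ref{eq:Upper Bound on Restaurant Distance}).

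I do not expect any genuine obstacle here; the proof is elementary. The one point that requires a little care — and the reason the argument is phrased with an arbitrary $\epsilon$ rather than with a single minimizing pair of points — is that the infimum defining $d(B_p, B_q)$ need not be attained by points actually lying in $B_p$ and $B_q$. Indeed, since the regions are pairwise disjoint (as in the proof of Proposition \ref{thr: Expected Number of Restaurants}), two adjacent regions can meet only along a shared portion of their boundaries that belongs to neither, so the value $0$ is realized only in the limit. I would also remark in passing that $diam B_p$ and $diam B_q$ are finite, since each region is a bounded subset of the bounded city area, so the right-hand side of (\ref{eq:Upper Bound on Restaurant Distance}) is a bona fide real number.
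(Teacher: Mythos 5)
Your proof is correct and follows essentially the same route as the paper's: both extract, from $d(B_p, B_q) = 0$, points $x \in B_p$ and $y \in B_q$ with $d(x,y) \leq \varepsilon$, apply the triangle inequality through $x$ and $y$, bound $d(r_p, x)$ and $d(y, r_q)$ by the respective diameters, and let $\varepsilon \to 0$. Your added remarks (that the distributional hypotheses are not actually used, that the infimum need not be attained, and that the diameters are finite) are accurate and slightly sharpen the exposition, but the underlying argument is the same.
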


	\begin{proof}
		Consider two adjacent regions $B_p$ and $B_q$. By (\ref{eq:Region Adjacency}), this means that $d ( B_p, B_q ) = 0$, which in turn implies that $\forall \varepsilon \ \exists x \in B_p \ \exists y \in B_q \text{ such that } d ( x, y ) \leq \varepsilon \quad (\star)$. In view of Proposition \ref{thr: Expected Number of Restaurants}, one expects to find exactly one restaurant in $B_p$ and exactly one restaurant in $B_q$. So, let $r_p$ and $r_q$ be the restaurants located at regions $B_p$ and $B_q$, respectively, and consider the distance $d ( r_p, r_q )$ between them. By the triangle inequality, which is a fundamental property of every distance function, we may write that $d ( r_p, r_q ) \leq d ( r_p, x ) + d ( x, y ) + d ( y, r_q ), \forall x \in B_p \ \forall y \in B_q \quad (\star\star)$. From $(\star)$ and $(\star\star)$ we conclude that $\forall \varepsilon \ \exists x \in B_p \ \exists y \in B_q \text{ such that } d ( r_p, r_q ) \leq d ( r_p, x ) + d ( y, r_q ) + \varepsilon \quad (\star\star\star)$. Now, according to (\ref{eq:Diameter}), $d ( r_p, x ) \leq diam B_p$ and $d ( y, r_q ) \leq diam B_q$. These last two relations combined with $(\star\star\star)$, give that $d ( r_p, r_q ) \leq diam B_p + diam B_q$, as desired.
	\end{proof}

	The above upper bound can be simplified if we further assume that all regions have the same geometric shape. This regularity does not impose any serious restriction on the overall setting of the game and allows us to assert that  $diam B_1 = \ldots = diam B_n = D$, in which case inequality (\ref{eq:Upper Bound on Restaurant Distance}) becomes:

	\begin{align} \label{eq:Regular Upper Bound on Restaurant Distance}
		d ( r_p, r_q ) \leq 2 D \ , \quad 1 \leq p \neq q \leq n \ .
	\end{align}

	In the special case where the regions are squares, as depicted in Figures \ref{fig:Distances Between Adjucent Restaurants} and \ref{fig:Distances Between Adjucent Restaurants Decrease}, one can easily see that the diameter $D$ is proportional to $\sqrt {\frac { 2 } { n } }$:

	\begin{align} \label{eq:Square Diameter}
		D \propto \sqrt {\frac { 2 } { n } } \ .
	\end{align}

	A comparison between Figures \ref{fig:Distances Between Adjucent Restaurants} and \ref{fig:Distances Between Adjucent Restaurants Decrease} demonstrates that the expected distance between restaurants which lie in adjacent regions is quite short, as it is bounded above by the sum of the diameters of the corresponding regions. The diameter of the regions decreases as $n$ increases, and in the special case shown in these two Figures, the diameter decreases in proportion to $\frac { 1 } { \sqrt { n } }$. In layman terms, this means that \emph{adjacent restaurants get very close to each other} as $n \to \infty$. Once the agent arrives at a restaurant, then, with high probability, visiting an adjacent restaurant will only incur a negligible extra cost that will not violate her time constraints.

\begin{figure}
	\centering
	\begin{tikzpicture} [scale = 1.2]
		\fill [WordIceBlue] (-4.0, -4.0) rectangle (4.0, 4.0);
		\draw [step = 1.0 cm, MyBlue, line width = 0.1pt] (-4.0, -4.0) grid (4.0, 4.0);
		\node [rectangle, minimum size = 2.0 cm] () at (-3.8, 3.85) { \tiny $B_1$ };
		\node [rectangle, minimum size = 2.0 cm] () at (-2.8, 3.15) { \tiny $B_2$ };
		\node [rectangle, minimum size = 2.0 cm] () at (-1.8, 3.85) { \tiny $B_3$ };
		\node [rectangle, minimum size = 2.0 cm] () at (1.3, -3.85) { \tiny $B_{n \! - \! 2}$ };
		\node [rectangle, minimum size = 2.0 cm] () at (2.3, -3.15) { \tiny $B_{n \! - \! 1}$ };
		\node [rectangle, minimum size = 2.0 cm] () at (3.8, -3.85) { \tiny $B_n$ };
		\node [rectangle] at (-0.5, 3.5){ \Large $\cdot$ };
		\node [rectangle] at (1.5, 3.5){ \Large $\cdot$ };
		\node [rectangle] at (3.5, 3.5){ \Large $\cdot$ };
		\node [rectangle] at (-3.0, 2.0){ \Large $\cdot$ };
		\node [rectangle] at (-3.0, 0.0){ \Large $\cdot$ };
		\node [rectangle] at (-3.0, -2.0){ \Large $\cdot$ };
		\node [rectangle] at (0.0, 2.0){ \Large $\cdot$ };
		\node [rectangle] at (0.0, 0.0){ \Large $\cdot$ };
		\node [rectangle] at (0.0, -2.0){ \Large $\cdot$ };
		\node [rectangle] at (3.0, 2.0){ \Large $\cdot$ };
		\node [rectangle] at (3.0, 0.0){ \Large $\cdot$ };
		\node [rectangle] at (3.0, -2.0){ \Large $\cdot$ };
		\node [rectangle] at (-3.5, -3.5){ \Large $\cdot$ };
		\node [rectangle] at (-1.5, -3.5){ \Large $\cdot$ };
		\node [rectangle] at (0.5, -3.5){ \Large $\cdot$ };
		\node [rectangle, fill = MyLightRed] (r1) at (-3.3, 3.25) { \color{gray} \tiny $r_1$ };
		\node [rectangle, fill = MyLightRed] (r2) at (-2.275, 3.78) { \color{gray} \tiny $r_2$ };
		\node [rectangle, fill = MyLightRed] (r3) at (-1.28, 3.22) { \color{gray} \tiny $r_3$ };
		\node [rectangle, fill = MyLightRed] (rn-1) at (1.41, -3.22) { \color{gray} \tiny $r_{n \! - \! 2}$ };
		\node [rectangle, fill = MyLightRed] (rn-1) at (2.4, -3.75) { \color{gray} \tiny $r_{n \! - \! 1}$ };
		\node [rectangle, fill = MyLightRed] (rn) at (3.3, -3.2) { \color{gray} \tiny $r_n$ };
		\draw [GreenTeal, dashed, line width = 0.5pt] (-4.0, 3.0) -- node [above, rotate = 45] { \tiny $diam B_1$} (-3.0, 4.0);
		\draw [GreenTeal, dashed, line width = 0.5pt] (3.0, -4.0) -- node [below, rotate = 45] { \tiny $ diam B_n$} (4.0, -3.0);
		\draw [MyDarkBlue, line width = 1.0pt] (-3.07, 3.42) -- node [above, rotate = 20] { \tiny $\mathbf{d(r_1, r_2)}$} (-2.51, 3.61);
		\draw [MyDarkBlue, line width = 1.0pt] (-2.03, 3.59) -- node [below, rotate = -15] { \tiny $\mathbf{d(r_2, r_3)}$} (-1.53, 3.4);
		\draw [MyDarkBlue, line width = 1.0pt] (1.79, -3.39) -- node [below, rotate = -35] { \tiny $\mathbf{d(r_{n \! - \! 2}, r_{n \! - \! 1})}$} (2.02, -3.55);
		\draw [MyDarkBlue, line width = 1.0pt] (2.75, -3.55) -- node [above, rotate = 35] { \tiny $\mathbf{d(r_{n \! - \! 1}, r_n)}$} (3.04, -3.36);
	\end{tikzpicture}
	\caption{When $n$ increases, the area and the diameter of the regions $B_1, \ldots, B_n$ decrease. As a result the expected distances between restaurants located in adjacent regions \emph{decrease}. In other words, as $n \to \infty$, the restaurants in adjacent regions get closer and closer.} \label{fig:Distances Between Adjucent Restaurants Decrease}
\end{figure}
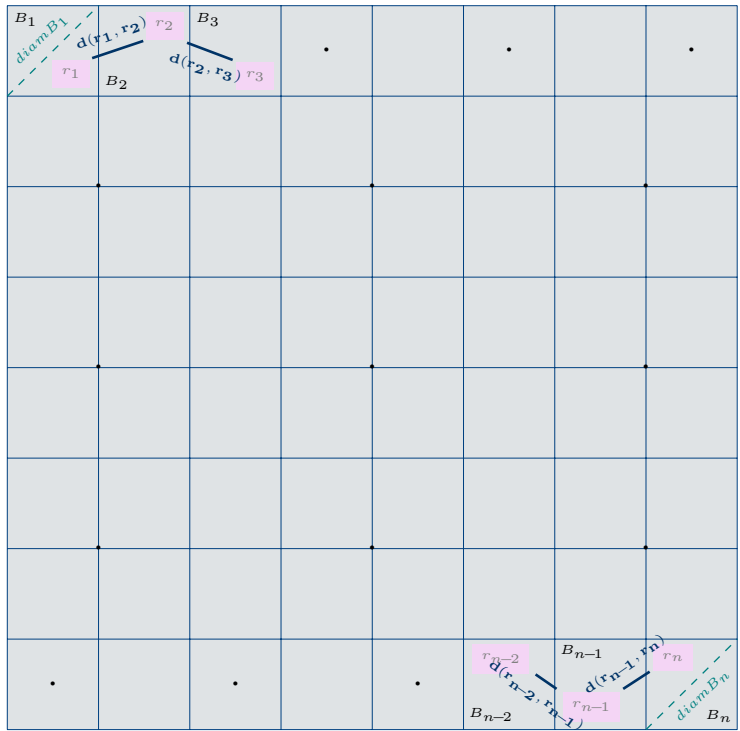

	We clarify that we are not making any assumption about the probabilistic distribution of the agents. One possibility is that the agents might be \emph{concentrated} in the ``center,'' or in another specific location of the city area, as is tacitly assumed by the original KPRP. Another possibility is that the agents follow a random distribution over the area, for instance they might also follow the uniform distribution. The former case is depicted in Figure \ref{fig:Concentrated Agents} and the latter in Figure \ref{fig:Uniformly Distributed Agents}. The crucial observation is that in both cases any of the $n$ agent can, potentially, have lunch in any of the $n$ restaurants and return back in time. This fact implies that, assuming each agent follows the (near-optimal) tour produced as a solution to her individual TSP, she may visit a second, or even a third, restaurant if her previous choices proved fruitless. To see why this is indeed so, one may consider for instance agent $a_1$ in both Figures \ref{fig:Concentrated Agents} and \ref{fig:Uniformly Distributed Agents} and the restaurant that is furthest apart. Without loss of generality let us say that in both cases this is restaurant $r_n$. Being able to visit $r_n$ while adhering to her time constraints, implies being also able to pass through adjacent restaurants within the same time window.

\begin{figure}
	\centering
	\begin{tikzpicture} [scale = 1.0]
		\fill [WordIceBlue] (-4.0, -4.0) rectangle (4.0, 4.0);
		\draw [step = 2.0 cm, MyBlue, line width = 0.25pt] (-4.0, -4.0) grid (4.0, 4.0);
		\node [rectangle, minimum size = 2.0 cm] () at (-3.0, 3.0) { $B_1$ };
		\node [rectangle, minimum size = 2.0 cm] () at (-1.0, 3.0) { $B_2$ };
		\node [rectangle, minimum size = 2.0 cm] () at (1.0, 3.0) { $B_3$ };
		\node [rectangle, minimum size = 2.0 cm] () at (-1.0, -3.0) { $B_{n-2}$ };
		\node [rectangle, minimum size = 2.0 cm] () at (1.0, -3.0) { $B_{n-1}$ };
		\node [rectangle, minimum size = 2.0 cm] () at (3.0, -3.0) { $B_n$ };
		\node [rectangle] at (2.5, 3.0){ \Large $\cdot$ };
		\node [rectangle] at (3.0, 3.0){ \Large $\cdot$ };
		\node [rectangle] at (3.5, 3.0){ \Large $\cdot$ };
		\node [rectangle] at (-3.0, 1.0){ \Large $\cdot$ };
		\node [rectangle] at (-3.0, 0.0){ \Large $\cdot$ };
		\node [rectangle] at (-3.0, -1.0){ \Large $\cdot$ };
		\node [rectangle] at (0.0, 1.0){ \Large $\cdot$ };
		\node [rectangle] at (0.0, 0.0){ \Large $\cdot$ };
		\node [rectangle] at (0.0, -1.0){ \Large $\cdot$ };
		\node [rectangle] at (3.0, 1.0){ \Large $\cdot$ };
		\node [rectangle] at (3.0, 0.0){ \Large $\cdot$ };
		\node [rectangle] at (3.0, -1.0){ \Large $\cdot$ };
		\node [rectangle] at (-3.5, -3.0){ \Large $\cdot$ };
		\node [rectangle] at (-3.0, -3.0){ \Large $\cdot$ };
		\node [rectangle] at (-2.5, -3.0){ \Large $\cdot$ };
		\node [rectangle, fill = WordRed] (r1) at (-3.5, 3.5) { \color{white} \tiny $r_1$ };
		\node [rectangle, fill = WordRed] (r2) at (-1.0, 2.5) { \color{white} \tiny $r_2$ };
		\node [rectangle, fill = WordRed] (r3) at (1.45, 2.3) { \color{white} \tiny $r_3$ };
		\node [rectangle, fill = WordRed] (rn-1) at (-1.0, -2.35) { \color{white} \tiny $r_{n-2}$ };
		\node [rectangle, fill = WordRed] (rn-1) at (0.8, -3.65) { \color{white} \tiny $r_{n-1}$ };
		\node [rectangle, fill = WordRed] (rn) at (2.5, -2.5) { \color{white} \tiny $r_n$ };
		\fill [WordAquaLighter60, line width = 1.5pt, rounded corners = 30pt](-1.15, -1.15) rectangle (1.15, 1.15);
		\node [rectangle, fill = WordBlueDarker50] (Agents) at (0.0, 4.5){ \color{white} Agents Concentrated in a Small Region};
		\node [circle, fill = WordBlueDarker50] (a1) at ( { 0.8*cos(45) }, { 0.8*sin(45) } ) { \color{white} \tiny $a_1$ };
		\node [circle, fill = WordBlueDarker50] () at ( { 0.8*cos(135) }, { 0.8*sin(135) } ) { };
		\node [circle, fill = WordBlueDarker50] () at ( { 0.8*cos(180) }, { 0.8*sin(180) } ) { };
		\node [circle, fill = WordBlueDarker50] () at ( { 0.8*cos(225) }, { 0.8*sin(225) } ) { };
		\node [circle, fill = WordBlueDarker50] (an) at ( { 0.8*cos(315) }, { 0.8*sin(315) } ) { \color{white} \tiny $a_n$ };
	\end{tikzpicture}
	\caption{The above figure depicts the situation where all $n$ agents are concentrated within a small region of the Kolkata city, while the $n$ restaurants are uniformly distributed in the overall Kolkata area.} \label{fig:Concentrated Agents}
\end{figure}
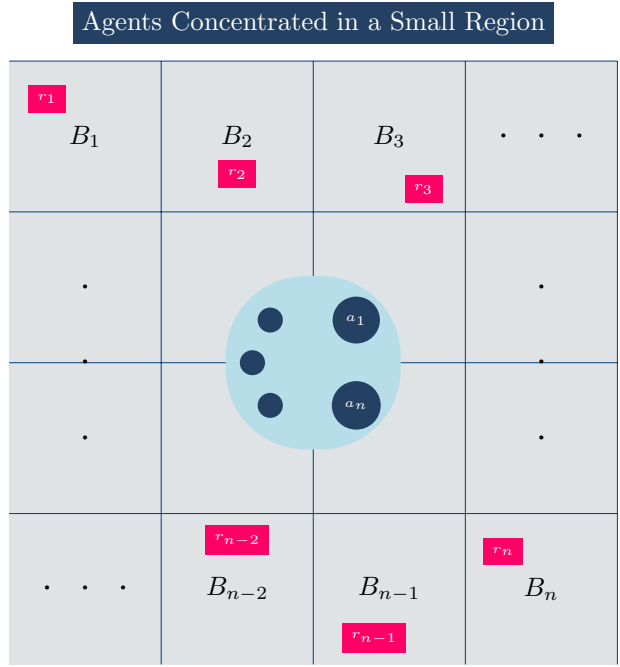

\begin{figure}
	\centering
	\begin{tikzpicture} [scale = 1.0]
		\fill [WordIceBlue] (-4.0, -4.0) rectangle (4.0, 4.0);
		\draw [step = 2.0 cm, MyBlue, line width = 0.25pt] (-4.0, -4.0) grid (4.0, 4.0);
		\node [rectangle, minimum size = 2.0 cm] () at (-3.0, 3.0) { $B_1$ };
		\node [rectangle, minimum size = 2.0 cm] () at (-1.0, 3.0) { $B_2$ };
		\node [rectangle, minimum size = 2.0 cm] () at (1.0, 3.0) { $B_3$ };
		\node [rectangle, minimum size = 2.0 cm] () at (-0.75, -3.0) { $B_{n-2}$ };
		\node [rectangle, minimum size = 2.0 cm] () at (0.75, -3.0) { $B_{n-1}$ };
		\node [rectangle, minimum size = 2.0 cm] () at (3.0, -3.0) { $B_n$ };
		\node [rectangle] at (2.5, 3.0){ \Large $\cdot$ };
		\node [rectangle] at (3.0, 3.0){ \Large $\cdot$ };
		\node [rectangle] at (3.5, 3.0){ \Large $\cdot$ };
		\node [rectangle] at (-3.0, 1.0){ \Large $\cdot$ };
		\node [rectangle] at (-3.0, 0.0){ \Large $\cdot$ };
		\node [rectangle] at (-3.0, -1.0){ \Large $\cdot$ };
		\node [rectangle] at (0.0, 1.0){ \Large $\cdot$ };
		\node [rectangle] at (0.0, 0.0){ \Large $\cdot$ };
		\node [rectangle] at (0.0, -1.0){ \Large $\cdot$ };
		\node [rectangle] at (3.0, 1.0){ \Large $\cdot$ };
		\node [rectangle] at (3.0, 0.0){ \Large $\cdot$ };
		\node [rectangle] at (3.0, -1.0){ \Large $\cdot$ };
		\node [rectangle] at (-3.5, -3.0){ \Large $\cdot$ };
		\node [rectangle] at (-3.0, -3.0){ \Large $\cdot$ };
		\node [rectangle] at (-2.5, -3.0){ \Large $\cdot$ };
		\node [rectangle, fill = WordRed] (r1) at (-3.5, 3.5) { \color{white} \tiny $r_1$ };
		\node [rectangle, fill = WordRed] (r2) at (-1.0, 2.5) { \color{white} \tiny $r_2$ };
		\node [rectangle, fill = WordRed] (r3) at (1.45, 2.3) { \color{white} \tiny $r_3$ };
		\node [rectangle, fill = WordRed] (rn-1) at (-1.0, -2.35) { \color{white} \tiny $r_{n-2}$ };
		\node [rectangle, fill = WordRed] (rn-1) at (0.8, -3.65) { \color{white} \tiny $r_{n-1}$ };
		\node [rectangle, fill = WordRed] (rn) at (2.5, -2.5) { \color{white} \tiny $r_n$ };
		\node [rectangle, fill = WordBlueDarker50] (Agents) at (0.0, 4.5){ \color{white} Agents Uniformly Distributed};
		\node [circle, fill = WordBlueDarker50] (a1) at (-2.5, 2.5) { \color{white} \tiny $a_1$ };
		\node [circle, fill = WordBlueDarker50] (a1) at (-0.5, 3.5) { \color{white} \tiny $a_2$ };
		\node [circle, fill = WordBlueDarker50] (a1) at (0.5, 2.5) { \color{white} \tiny $a_3$ };
		\node [circle, fill = WordBlueDarker50] (an) at (-1.5, -3.5) { \color{white} \tiny $a_{n-2}$ };
		\node [circle, fill = WordBlueDarker50] (an) at (1.5, -2.5) { \color{white} \tiny $a_{n-1}$ };
		\node [circle, fill = WordBlueDarker50] (an) at (3.5, -3.5) { \color{white} \tiny $a_n$ };
	\end{tikzpicture}
	\caption{This figure reflects the situation where both the $n$ restaurants and the $n$ agents are uniformly distributed in the overall Kolkata area.} \label{fig:Uniformly Distributed Agents}
\end{figure}
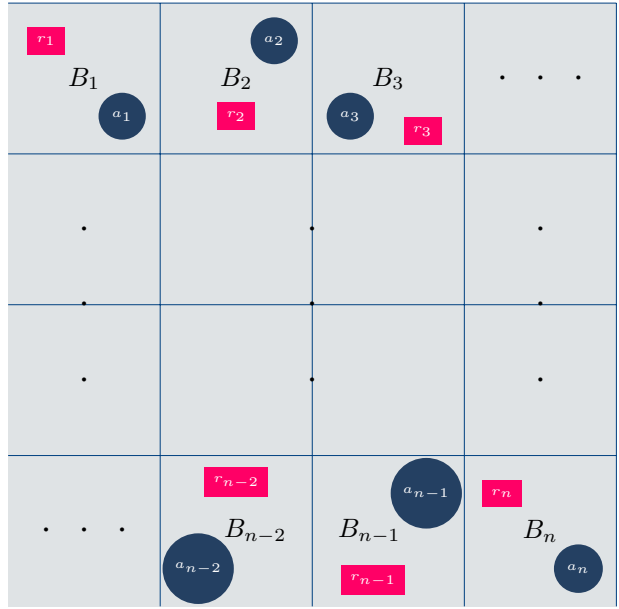

\section{Mathematical analysis of the utilization} \label{sec:Mathematical Analysis}

	The current section is devoted to the analytic estimation of the evolution of the game parameters and the daily utilization of the proposed strategy scheme. Let us briefly summarize the policy that regulates the $m$-DKPRG. 

	\begin{itemize}
		\item	At the beginning of day $1$ all $n$ agents are in the same position, in that they have not got lunch yet, and they in a precarious state not knowing if they will manage to eat eventually. So, at this point in time they are all \emph{unsatisfied}. The situation with the restaurants is symmetrical. All $n$ restaurants face uncertainty in that it is yet unknown whether they will be chosen by at least one customer. Therefore, at this point they are all \emph{vacant}.
		\item	The situation is quite different at the end of day $1$. A significant percentage of the $n$ agents, as will be shown in this section, managed to get lunch. An equal percentage of the $n$ restaurants was utilized. The common strategy followed by all agents ensures that the same agents will get lunch next day, the day after the next, etc. These agents are \emph{satisfied}, since they have effectively ``won'' the game. Symmetrically, the same restaurants will be utilized every day from now on. They will be permanently \emph{reserved}.
		\item	At the beginning of day $2$, only those agents that failed to eat yesterday will essentially play the game. These will the \emph{active} players of day $2$. The active players will strive to get lunch exclusively to the restaurants that did not serve any customer yesterday. The rest of the agents are already satisfied and will certainly have lunch today, each one at the specific restaurant that (eventually) served her yesterday.
		\item	By the end of day $2$, a significant percentage of the active agents will have succeeded in getting lunch. Thus, the total number of satisfied agents will increase by the amount of today's \emph{gains}. Of course, an equal percentage of yesterday's vacant restaurants will also be utilized for the first time today.
		\item	This process will continue ad infinitum.
	\end{itemize}

	The next concepts will prove useful in our analysis.

	\begin{definition} \
		\begin{itemize}
			\item	The expected number of agents that managed to eat lunch during day $1$ is denoted by $\overline{A_{1}^{s}}$ and the expected number of agents that failed to eat lunch during day $1$ is denoted by $\overline{A_{1}^{u}}$.
			\item	The expected number of agents that got lunch \emph{for the first time} during day $t, t = 2, 3, \ldots$, is denoted by $\overline{A_{t}^{s}}$. The expected number of agents that failed to get lunch during day $t, t = 2, 3, \ldots$, is denoted by $\overline{A_{t}^{u}}$.
			\item	Symmetrically, the expected number of restaurants that served lunch during day $1$ is denoted by $\overline{R_{1}^{r}}$ and the expected number of restaurants that did not serve lunch during day $1$ is denoted by $\overline{R_{1}^{v}}$.
			\item	The expected number of restaurants that served a customer \emph{for the first time} during day $t, t = 2, 3, \ldots$, is denoted by $\overline{R_{t}^{r}}$. The expected number of agents that failed to serve lunch during day $t, t = 2, 3, \ldots$, is denoted by $\overline{R_{t}^{v}}$.
			\item	The \emph{vacancy probability} of day $1$ is the probability that a restaurant did not accommodate any customer during day $1$ and is designated by $VP_{1}$.
			\item	The \emph{vacancy probability} of day $t, t = 2, 3, \ldots$, designated by $VP_{t}$, is the probability that a restaurant that \emph{has not served any customer before day} $t$ did not serve a customer during day $t$ either.
			\item	In the $m$-stop DKPRG, only the customers that have yet to get lunch participate actively in today's game. The agents that actually play the game at the \emph{beginning} of day $t$, seeking a restaurant to get lunch, are called \emph{active} players and their expected number is denoted by $n_t$.
			\item   The \emph{expected utilization} of day $t, t = 1, 2, \ldots$, denoted by $\overline{f_t}$, is the fraction of the expected number of agents that were served during day $t$. The \emph{steady state utilization} is defined as $f_\infty = \sup \{ f_t : t \in \mathbb{N} \}$.
		\end{itemize}
	\end{definition}

	\textbf{Equiprobability of tours.} The following analysis is based on the premise that \emph{all} $n!$ \emph{tours} are \emph{equiprobable}. In the rest of this paper we shall refer to this assumption as the \emph{equiprobability of tours assumption} (EPT for short). In view of the discussion in the previous sections, this premise is well justified.

	An immediate consequence of the EPT assumption is the \emph{equiprobability of each restaurant} appearing in any position. In particular, let us recall that in the tour $T_{a} = ( l_0, l_1, \dots, l_n, l_{n+1} )$, corresponding to agent $a$, $l_0 = l_{n+1}$ is the starting point of $a$ and $l_k, 1 \leq k \leq n$, is the index of the restaurant in the $k^{th}$ position of the tour. We may easily calculate the probability that a restaurant is in a specific position of the tour, as well as the probability of the complementary event. For easy reference, these facts are collected in the next Proposition whose proof is trivial and thus omitted.

	\begin{proposition} \label{thr: Probability of rj in position k of Ta}
		Assuming the equiprobability of tours, the following hold. 

		\begin{align} \label{eq:Probability of rj in position k of Ta}
			\forall a \in A \ \ \forall r \in R \ \ \forall k, 1 \leq k \leq n, \quad P(r \text{ is in position } k \text{ of } T_{a} ) = \frac{1}{n}
		\end{align}

		\begin{align} \label{eq:Probability of rj not in position k of Ta}
			\forall a \in A \ \ \forall r \in R \ \ \forall k, 1 \leq k \leq n, \quad P(r \ \mathrm{not} \text{ in position } k \text{ of } T_{a} ) = \frac{n - 1}{n}
		\end{align}

		The above can be generalized to handle the case of a restaurant $r$ appearing in one of $w$ distinct positions $k_1, k_2, \ldots, k_w$, where $1 < w \leq n$.

		\begin{align} \label{eq:Probability of rj in positions kw of Ta}
			\forall a \in A \ \ \forall r \in R \quad P( r \text{ is in } \mathrm{one} \text{ of positions } k_1, \ldots, k_w \text{ of } T_{a} ) = \frac{w}{n}
		\end{align}

		\begin{align} \label{eq:Probability of rj not in positions kw of Ta}
			\forall a \in A \ \ \forall r \in R \quad P( r \ \mathrm{not} \text{ in } \mathrm{any} \text{ of positions } k_1, \ldots, k_w \text{ of } T_{a} ) = \frac{n - w}{n}
		\end{align}

	\end{proposition}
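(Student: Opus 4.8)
The plan is to argue by elementary counting. First I would observe that a tour $T_{a} = ( l_0, l_1, \dots, l_n, l_{n+1} )$ is, once the fixed starting point $l_0 = l_{n+1}$ is set aside, nothing but an arrangement of the $n$ restaurant indices in the slots $1, \dots, n$; hence there are exactly $n!$ admissible tours, and the EPT assumption makes each of them occur with probability $\frac{1}{n!}$. This reduces everything to counting favourable arrangements.

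To establish (\ref{eq:Probability of rj in position k of Ta}), I would fix an agent $a$, a restaurant $r$, and a position $k$ with $1 \leq k \leq n$, and count the tours in which $r$ occupies slot $k$: pinning $r$ there leaves the other $n - 1$ restaurants free to fill the remaining $n - 1$ slots in any order, which happens in $(n-1)!$ ways. Dividing favourable by total yields $P(r \text{ is in position } k \text{ of } T_{a}) = \frac{(n-1)!}{n!} = \frac{1}{n}$, and (\ref{eq:Probability of rj not in position k of Ta}) follows immediately by passing to the complementary event, $1 - \frac{1}{n} = \frac{n-1}{n}$.

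For the generalisation to $w$ distinct positions $k_1, \dots, k_w$, I would introduce the events $E_i = \{\, r \text{ occupies position } k_i \text{ of } T_{a} \,\}$ for $1 \leq i \leq w$ and note that they are pairwise mutually exclusive, since one restaurant cannot sit in two different slots of the same tour. Then finite additivity together with (\ref{eq:Probability of rj in position k of Ta}) gives $P\!\left( \bigcup_{i=1}^{w} E_i \right) = \sum_{i=1}^{w} P(E_i) = w \cdot \frac{1}{n} = \frac{w}{n}$, which is (\ref{eq:Probability of rj in positions kw of Ta}), and (\ref{eq:Probability of rj not in positions kw of Ta}) again follows by complementation. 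There is no genuine obstacle in this argument --- consistent with the paper's remark that the proof is trivial; the only points that deserve a moment's care are the mutual exclusivity invoked in the last step (equivalently, that a direct count of the tours placing $r$ in one of the $w$ designated slots gives $w \cdot (n-1)!$) and the tacit hypothesis $w \leq n$, which is precisely what keeps all the asserted probabilities inside $[0,1]$.
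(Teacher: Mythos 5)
Your proposal is correct and supplies exactly the standard counting argument that the paper has in mind when it declares the proof ``trivial and thus omitted'': under EPT each of the $n!$ tours has probability $\frac{1}{n!}$, fixing $r$ in slot $k$ leaves $(n-1)!$ arrangements, and the $w$-position case follows by finite additivity over the mutually exclusive events of $r$ occupying each designated slot. Since the paper gives no proof at all, there is nothing to contrast; your write-up is a valid filling-in of the omitted argument.
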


	We only mention that the above hold for every restaurant, every position, and, of course, for every tour. Since the probability that restaurant $r \in R$ is in the $k^{th}$ position of the tour of agent $a$ is $\frac{1}{n}$, the probability of the complementary event, i.e., that restaurant $r$ is \emph{not} in the $k^{th}$ position of $T_{a}$ is $\frac{n - 1}{n}$. If we deem as ``success'' the case where $r$ is indeed in the $k^{th}$ position of $T_{a}$ and as ``failure'' the case where $r$ is not, then this situation is a typical example of a \emph{Bernoulli trial}, having probability of success $\frac{1}{n}$ (also referred to as \emph{parameter}, see \cite{DasGupta2010}) and probability of failure $\frac{n - 1}{n}$. In view of (\ref{eq:Probability of rj in position k of Ta}) we denote this as 

	\begin{align} \label{eq:Bernoulli Trial Parameter}
		P(r \text{ is in position } k \text{ of } T_{a}) \sim Ber(\frac{1}{n}) \ , \quad \forall a \in A \ \ \forall r \in R \ \ \forall k, 1 \leq k \leq n \ .
	\end{align}

	Analogously, the probability that restaurant $r \in R$ appears in \emph{one} of $w, 1 < w \leq n$, \emph{distinct} positions of the tour of agent $a$ is $\frac{w}{n}$. The probability of the complementary event, i.e., that restaurant $r$ is \emph{not} in any one of these $w$ positions of $T_{a}$ is $\frac{n - w}{n}$. This time, one may view as ``success'' the case where $r$ is indeed in one of the designated $w$ positions of $T_{a}$ and as ``failure'' the case where $r$ is not. So, once again we are facing with a Bernoulli trial, this time with parameter $\frac{w}{n}$. 

	\begin{align} \label{eq:General Bernoulli Trial Parameter}
		P( r \text{ is in } \mathrm{one} \text{ of positions } k_1, \ldots, k_w \text{ of } T_{a} ) \sim Ber(\frac{w}{n}) \ , \quad \forall a \in A \ \ \forall r \in R \ .
	\end{align}

	The fact that the $n$ agents calculate their tours \emph{independently}, implies that $n$ independent Bernoulli trials take place simultaneously, all with the same success and failure probabilities. This situation is described by the \emph{binomial} distribution with parameters $(n, p)$\footnote{We refer the reader to \cite{DasGupta2010} and \cite{Pitman1993} for a more detailed analysis.}, denoted by $Bin(n, p)$, where $p = \frac{1}{n}$ in the simple case of one position and $p = \frac{w}{n}$ in the general case of $w$ positions. By employing well-known formulas from probability textbooks we may assert the following Proposition, whose proof is also trivial.

	\begin{proposition} \label{thr: Probability of of l successes in n trials}
		Given a restaurant $r$, if its appearance in a specified position $k$ in one tour counts as one success, whereas its failure to appear in the specified position $k$ in one tour counts as one failure, then the probability of exactly $l$ appearances in position $k$ in total is given by
		\begin{align} \label{eq:Probability of l successes in n trials}
			\forall r \in R \ \ \forall k, 1 \leq k \leq n, \quad P( r \text{ appears } l \text{ times in position } k \text{ in } n \text{ tours} ) = \binom { n } { l } \left( \frac{1}{n} \right)^l \left( \frac{n - 1}{n} \right)^{n - l} \ .
		\end{align}
		In the special case, where $r$ \emph{never} appears, that is it appears $0$ times, in the specified position $k$, the above formula becomes:
		\begin{align} \label{eq:Probability of 0 successes in n trials}
			\forall r \in R \ \ \forall k, 1 \leq k \leq n, \quad &P( r \text{ \emph{never} appears in position } k \text{ in } n \text{ tours} ) \nonumber \\
			&= \binom { n } { 0 } \left( \frac{1}{n} \right)^0 \left( \frac{n - 1}{n} \right)^n = \left( \frac{n - 1}{n} \right)^n \ .
		\end{align}
		More generally, the probability that restaurant $r$ appears exactly $l$ times in total in one of the $w$ distinct positions $k_1, \ldots, k_w, 1 < w \leq n$ is given by
		\begin{align} \label{eq:Probability of l successes in w positions in n trials}
			\forall r \in R \quad P( r \text{ appears } l \text{ times in } \mathrm{one} \text{ of positions } k_1, \ldots, k_w \text{ in } n \text{ tours} ) = \binom { n } { l } \left( \frac{w}{n} \right)^l \left( \frac{n - w}{n} \right)^{n - l} \ .
		\end{align}
		If $r$ \emph{never} appears, that is it appears $0$ times, in anyone of the $w$ designated positions $k_1, \ldots, k_w, 1 < w \leq n$, the previous formula reduces to:
		\begin{align} \label{eq:Probability of 0 successes in w positions in n trials}
			\forall r \in R \quad &P( r \ \mathrm{never} \text{ appears in } \mathrm{any} \text{ of positions } k_1, \ldots, k_w \text{ in } n \text{ tours} ) \nonumber \\
			&= \binom { n } { 0 } \left( \frac{w}{n} \right)^0 \left( \frac{n - w}{n} \right)^n = \left( \frac{n - w}{n} \right)^n \ .
		\end{align}
	\end{proposition}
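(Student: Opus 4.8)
The plan is to observe that this statement is nothing more than the passage from a single Bernoulli trial to the binomial distribution, applied to the family of independent trials already isolated in (\ref{eq:Bernoulli Trial Parameter}) and (\ref{eq:General Bernoulli Trial Parameter}). First I would fix a restaurant $r \in R$ and a position $k$ with $1 \leq k \leq n$. By Proposition \ref{thr: Probability of rj in position k of Ta}, for every agent $a \in A$ the event that $r$ occupies position $k$ of the tour $T_a$ has probability $\frac{1}{n}$, while its complement has probability $\frac{n-1}{n}$; declaring the former a ``success'' and the latter a ``failure'' exhibits the placement of $r$ in a single agent's tour as a Bernoulli trial with parameter $\frac{1}{n}$, precisely as recorded in (\ref{eq:Bernoulli Trial Parameter}).

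Next I would invoke the hypothesis, emphasised in the discussion just above, that the $n$ agents compute their tours independently. This makes the $n$ trials ``$r$ sits in position $k$ of $T_{a_1}$'', $\ldots$, ``$r$ sits in position $k$ of $T_{a_n}$'' mutually independent and identically distributed, each with success probability $\frac{1}{n}$. Hence the total number of successes, namely the number of tours that place $r$ in position $k$, is distributed as $Bin(n, \tfrac{1}{n})$, and the textbook binomial mass function gives
\begin{align*}
P(r \text{ appears } l \text{ times in position } k \text{ in } n \text{ tours}) = \binom{n}{l}\left(\frac{1}{n}\right)^{l}\left(\frac{n-1}{n}\right)^{n-l},
\end{align*}
which is (\ref{eq:Probability of l successes in n trials}). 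Substituting $l = 0$ and using $\binom{n}{0} = 1$ and $x^{0} = 1$ instantly collapses the right-hand side to $\left(\frac{n-1}{n}\right)^{n}$, yielding (\ref{eq:Probability of 0 successes in n trials}).

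For the remaining two formulas I would rerun the same argument with the single target position $k$ replaced by a set of $w$ distinct positions $k_1, \ldots, k_w$, $1 < w \leq n$: by (\ref{eq:Probability of rj in positions kw of Ta}) and (\ref{eq:General Bernoulli Trial Parameter}) each agent now contributes a Bernoulli trial with parameter $\frac{w}{n}$, and independence across agents again makes the count of successes a $Bin(n, \tfrac{w}{n})$ variable, which delivers (\ref{eq:Probability of l successes in w positions in n trials}) and, upon setting $l = 0$, (\ref{eq:Probability of 0 successes in w positions in n trials}). I do not expect any genuine obstacle here; the only point deserving a line of care is that the informal phrase ``the agents calculate their tours independently'' is being used in its full \emph{probabilistic} sense, so that the joint distribution of the $n$ trials factorises. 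This is exactly what licenses replacing the single-trial Bernoulli law by the $n$-trial binomial law, and everything else is the standard computation.
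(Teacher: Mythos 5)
Your proposal is correct and follows exactly the route the paper takes: the paper declares the proof ``trivial'' precisely because the preceding discussion already sets up each agent's tour as an independent Bernoulli trial with parameter $\frac{1}{n}$ (respectively $\frac{w}{n}$) and identifies the count of successes across the $n$ independent tours as $Bin(n, \frac{1}{n})$ (respectively $Bin(n, \frac{w}{n})$), from which the stated formulas and their $l = 0$ specializations follow by the standard binomial mass function. Your added remark that the independence of the agents' tour computations must be read in the full probabilistic sense is the right point to flag, and it is the same assumption the paper relies on.
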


	We must emphasize that the above hold for every restaurant $r \in R$, for every position $k, 1 \leq k \leq n$, and for every set of positions $\{ k_1, \ldots, k_w \}, 1 < w \leq n$ . In other words, for every restaurant, the probability that it does not appear in one specific position in any of the $n$ tours is $\left( \frac{n - 1}{n} \right)^n$, and the probability that it does not appear in any of $w$ distinct positions in any of the $n$ tours is $\left( \frac{n - w}{n} \right)^n$.

	According to the strategy scheme employed in the $m$-stop DKPRG, at the start of the second (third, etc.) day, the satisfied customers always go straight to the restaurant that \emph{eventually} served them the previous day. We stress the word eventually because an agent may have failed to get lunch during stop $1$ of the previous day, but she may have succeeded during the second, third, or $m^{th}$ stop. This strategy is followed by all agents, something that guarantees that those customers that were satisfied on the previous day will remain satisfied today. Effectively, this strategy implies that the satisfied agents have ``won'' the game and from now on they do not need to solve their personalized TSP. The game will be played competitively by the unsatisfied agents of the previous day. We assume that they are aware of the unoccupied restaurants and, therefore, each one of them will once again solve her personalized TSP to compute her near-optimal tour. Of course, today the network of restaurants will consist of only the \emph{unoccupied} restaurants, i.e., it will be significantly smaller that yesterday. The one-shot $m$-stop DKPRG of today will be different from the one-shot game of the previous day in a critical factor: the number of ``actively competing'' players will be significantly smaller. By the nature of the game, the number of active players at the \emph{beginning} of stop $1$ of the present day is equal to the number of unsatisfied customers at the end of the previous day. The way the expected number of active players varies with each passing day is captured by the following Theorem \ref{thr: DKPRG Quantitative Characteristics}.

	\begin{theorem} \label{thr: DKPRG Quantitative Characteristics}
		The daily progression of the $m$-DKPRG is described by the following formulas, where $t$ stands for the day in question.
		\begin{align}
			VP_{t} &= \left( \frac{ n_{t} - m }{ n_{t} } \right)^{ n_{t} } \ , \ n_t \geq m, \ t = 1, 2 , \ldots \ , \label{eq:Vacancy Probability VPt} \\
			\overline{R_{t}^{v}} &= n_{t} \left( \frac{ n_{t} - m }{ n_{t} } \right)^{ n_{t} } \ , \ n_t \geq m, \ t = 1, 2 , \ldots \ , \label{eq:Expected Number of Vacant Restaurants at End of Day t} \\
			\overline{R_{t}^{r}} &= n_{t} \left( 1 - \left( \frac{ n_{t} - m }{ n_{t} } \right)^{ n_{t} } \right) \ , \ n_t \geq m, \ t = 1, 2 , \ldots \ , \label{eq:Expected Number of Reserved Restaurants at End of Day t} \\
			\overline{A_{t}^{u}} &= n_{t} \left( \frac{ n_{t} - m }{ n_{t} } \right)^{ n_{t} } \ , \ n_t \geq m, \ t = 1, 2 , \ldots \ , \label{eq:Expected Number of Unsatisfied Customers at End of Day t} \\
			\overline{A_{t}^{s}} &= n_{t} \left( 1 - \left( \frac{ n_{t} - m }{ n_{t} } \right)^{ n_{t} } \right) \ , \ n_t \geq m, \ t = 1, 2 , \ldots \ , \label{eq:Expected Number of Satisfied Customers at End of Day t} \\
			n_{1} &= n \ , \label{eq:Expected Number of Active Agents on Day 1} \\
			n_{t + 1} &= n_{t} \left( \frac{ n_{t} - m }{ n_{t} } \right)^{ n_{t} } \ , \ n_t \geq m, \ t = 2, 3, \ldots \ , \label{eq:Expected Number of Active Agents on Day t+1} \\
			\overline{f_{t}} &=
			\frac { \sum_{d = 1}^{t} n_{d} \left( 1 - \left( \frac{ n_{d} - m }{ n_{d} } \right)^{ n_{d} } \right) } { n }
			= \frac { n - n_t \left( \frac{n_t - m}{n_t} \right)^{n_t} } { n }
			\ , \ n_t \geq m, \ t = 1, 2 , \ldots \ . \label{eq:Expected Utilization on Day t}
		\end{align}
	\end{theorem}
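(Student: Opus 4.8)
The plan is to prove all eight displayed formulas together by a single induction on the day $t$, with equations (\ref{eq:Vacancy Probability VPt})--(\ref{eq:Expected Number of Satisfied Customers at End of Day t}) being essentially one probabilistic computation carried out on the $n_t$ active agents and $n_t$ vacant restaurants of day $t$, and equations (\ref{eq:Expected Number of Active Agents on Day 1})--(\ref{eq:Expected Utilization on Day t}) following by bookkeeping. The base case is $n_1 = n$, which is immediate from hypothesis (\textbf{H2}): on day $1$ every agent is active and every restaurant is vacant.

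For the inductive step, fix day $t$ and assume there are $n_t$ active agents, each solving a personalized TSP on the graph whose vertices are her starting point together with the $n_t$ currently vacant restaurants; by the EPT assumption all $n_t!$ resulting tours are equiprobable. First I would establish the characterization that a currently vacant restaurant $r$ is still vacant at the end of day $t$ precisely when $r$ occupies none of the first $m$ positions in the tour of any active agent: if $r$ is absent from every agent's first $m$ stops then, since every tour is an $m$-stop tour, no agent ever reaches $r$; conversely, if $r$ appears among some agent's first $m$ stops it receives a visitor and, by the one-customer-per-restaurant convention of (\textbf{H4}), serves someone. Granting this, equation (\ref{eq:Probability of 0 successes in w positions in n trials}) of Proposition \ref{thr: Probability of of l successes in n trials}, applied with $n$ replaced by $n_t$, with $w = m$, and with the distinguished positions $k_1 = 1, \dots, k_m = m$, gives $VP_t = \left( \frac{n_t - m}{n_t} \right)^{n_t}$, which is (\ref{eq:Vacancy Probability VPt}). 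Attaching an indicator random variable to each of the $n_t$ vacant restaurants and using linearity of expectation yields $\overline{R_t^v} = n_t\, VP_t$, that is (\ref{eq:Expected Number of Vacant Restaurants at End of Day t}), and the complementary count gives (\ref{eq:Expected Number of Reserved Restaurants at End of Day t}).

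The agent-side formulas (\ref{eq:Expected Number of Unsatisfied Customers at End of Day t}) and (\ref{eq:Expected Number of Satisfied Customers at End of Day t}) then come for free from the exact identity $A_t^u = R_t^v$ (equivalently $A_t^s = R_t^r$): on day $t$ the agents that get served are in bijection with the restaurants that serve, so the number of unsatisfied agents equals the number of vacant restaurants, and passing to expectations transfers the formulas verbatim. The recursion (\ref{eq:Expected Number of Active Agents on Day t+1}) is then simply the observation that the active players of day $t+1$ are exactly the unsatisfied agents of day $t$, so $n_{t+1} = \overline{A_t^u} = n_t \left( \frac{n_t - m}{n_t} \right)^{n_t}$. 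Finally, for the utilization (\ref{eq:Expected Utilization on Day t}) I would write $\overline{f_t} = \frac{1}{n}\sum_{d=1}^{t}\overline{A_d^s}$ and note that $\overline{A_d^s} = n_d - \overline{A_d^u} = n_d - n_{d+1}$, so the sum telescopes to $n_1 - n_{t+1} = n - n_t \left( \frac{n_t - m}{n_t} \right)^{n_t}$; dividing by $n$ produces both displayed forms.

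The step I expect to require the most care is the vacancy characterization in the second paragraph: strictly speaking, an agent who has $r$ among her first $m$ stops might be served at an earlier stop and never reach $r$, so one must argue (or else fold into the $m$-stop modelling hypothesis) that this does not alter the count, and it is exactly at this point that the clean Bernoulli/binomial picture of Propositions \ref{thr: Probability of rj in positions kw of Ta} and \ref{thr: Probability of of l successes in n trials} is invoked. A secondary, more cosmetic point is that $n_t$ is really an expected value being substituted into a nonlinear expression, so the recursion and the formulas should be read in the mean-field sense in which the entire analysis is conducted; the standing hypothesis $n_t \geq m$ is precisely what keeps $\frac{n_t - m}{n_t}$ a legitimate probability at every stage.
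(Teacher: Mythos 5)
Your proposal is correct and follows essentially the same route as the paper: the same appeal to Proposition \ref{thr: Probability of of l successes in n trials} with $w=m$ applied to positions $1,\dots,m$, the same indicator-variable/linearity-of-expectation computation for $\overline{R_t^v}$ and $\overline{R_t^r}$, the same agent--restaurant bijection for the $A$-formulas and the recursion $n_{t+1}=\overline{A_t^u}$, and the same two expressions for $\overline{f_t}$ (your telescoping identity $\overline{A_d^s}=n_d-n_{d+1}$ is just a tidier reconciliation of the sum form with the $n-\overline{A_t^u}$ form). The only substantive difference is presentational --- the paper builds the vacancy probability up stop by stop through intermediate quantities $VP_{t,z}$ before arriving at the $m$-position event, whereas you invoke it directly --- and, to your credit, you explicitly flag the subtlety (an agent served at an early stop never reaches a restaurant listed later among her first $m$ positions, so absence from positions $1,\dots,m$ is sufficient but not obviously necessary for vacancy) that the paper's conditions (\textbf{P1})--(\textbf{P2}) silently treat as an equivalence.
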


	\begin{proof}
		The proof of the above formulas goes as follows.

		\begin{enumerate}
			\item	We first prove the auxiliary result that the vacancy probability at the \emph{beginning} of stop $z, 1 \leq z \leq m$, of day $t$ is
					\begin{align} \label{eq:Vacancy Probability VPtz}
						VP_{t, z} =  \left( \frac{ n_{t} + 1 - z }{ n_{t} } \right)^{ n_{t} } \ , \ 1 \leq z \leq m \ .
					\end{align}
					\begin{itemize}
						\item	Indeed, at the beginning of stop $1$ of day $t$, the expected number of restaurants that have not served any customer yet is equal to the expected number $n_{t}$ of active agents. On day $t$, the game is all about the active agents and the restaurants that have never been utilized up to now. At this moment in time all these restaurants are still unutilized, so vacancy is a certainty. Thus, indeed $VP_{t, 1} = 1$, which is in agreement with (\ref{eq:Vacancy Probability VPtz}) when $z = 1$.
						\item	We recall that, according to our strategy, at the beginning of day $t$ the expected number of restaurants that have not served any customer yet is equal to the expected number $n_t$ of active players. At the beginning of stop $2$ of day $t$, the probability that one of these restaurants $r$ has not served lunch yet is precisely the probability that $r$ never appears in position $1$ in any tour of the active players. This last probability is given from (\ref{eq:Probability of 0 successes in n trials}), where of course $n$ must now be replaced by $n_{t}$. Hence,
								\begin{align} \label{eq:Vacancy Probability VPt2}
									VP_{t, 2} = \left( \frac{n_{t} - 1}{n_{t}} \right)^{n_{t}} \ , \tag{ \ref{thr: DKPRG Quantitative Characteristics}.i }
								\end{align}
						which is also in agreement with (\ref{eq:Vacancy Probability VPtz}) when $z = 2$.
						\item	Let us now carefully examine what happens during stop $2$ of day $t$ of the game. According to our scheme, those customers who have failed to get lunch at their first destination will immediately proceed to their second destination. For example, if customer $a$, who follows tour $T_{a} = ( l_0, l_1, \dots, l_n, l_{n+1} )$, was not served at restaurant $r_{l_1}$, she will try restaurant $r_{l_2}$. However, an added complication arises now. It may well be the case that $r_{l_2}$ is already occupied from stop $1$. In such a case $r_{l_2}$ is completely unavailable, i.e., it is now serving another active agent. In view of this fact, we may conclude that the restaurants that are vacant at the beginning of stop $3$ must satisfy two properties:
								\begin{enumerate}
									\item[(\textbf{P1})]	they must be vacant at the beginning of stop $2$, which means that must never appear in position $1$ in any tour of the active players, and
									\item[(\textbf{P2})]	they must never appear in position $2$ in any tour of the active players.
								\end{enumerate}
								The above are summarized more succinctly in the following rule.
								\begin{enumerate}
									\item[(\textbf{C})]		The restaurants that have not served any customer up to day $t$ and are still vacant at the beginning of stop $3$ of day $t$, never appear in position $1$ or position $2$ in any tour of the active players.
								\end{enumerate}
								Therefore,
								\begin{align} \label{eq:Vacancy Probability VPt3}
									VP_{t, 3} = P( r \ \emph{never} \text{ appears in positions } 1 \text{ or } 2 \text{ in } n_t \text{ tours} )
									\overset{ (\ref{eq:Probability of 0 successes in w positions in n trials}) } { = } \left( \frac{n_t - 2}{n_t} \right)^{n_t} \ , \tag{ \ref{thr: DKPRG Quantitative Characteristics}.ii }
								\end{align}
								which is again in agreement with (\ref{eq:Vacancy Probability VPtz}) when $z = 3$.
						\item	The same reasoning can be employed to show that the vacancy probability $VP_{t, z}$ at the beginning of stop $z$ of day $t$ is
								\begin{align}
									VP_{t, z} = P( r \text{ \emph{never} appears in positions } 1, \ldots, z - 1 ) \overset{ (\ref{eq:Probability of 0 successes in w positions in n trials}) } { = } \left( \frac{n_t + 1 - z}{n_t} \right)^{n_t} \ . \tag{ \ref{thr: DKPRG Quantitative Characteristics}.iii }
								\end{align}
								Hence, we have proved the validity of (\ref{eq:Vacancy Probability VPtz}).
						\item	Finally, to calculate the probability that one of the restaurants $r$ that have not served any customer up to day $t$ is still vacant at the end of stop $m$ of day $t$, which in effect means at the end of day $t$, we must determine the probability that $r$ never appears in positions $1$, or, $2$, or $\ldots$, or $m$ in any tour of the active players. Thus,
								\begin{align}
									VP_{t} = P( r \text{ \emph{never} appears in positions } 1, \ldots, m ) \overset{ (\ref{eq:Probability of 0 successes in w positions in n trials}) } { = } \left( \frac{n_t - m}{n_t} \right)^{n_t} \ , \tag{ \ref{thr: DKPRG Quantitative Characteristics}.iv }
								\end{align}
								which verifies (\ref{eq:Vacancy Probability VPt}), as desired. However, there is one final detail that we must emphasize here. Probabilities are real numbers taking values in the real line interval $[0, 1]$. For the equation (\ref{eq:Vacancy Probability VPt}) to be valid, it must hold that $n_t \geq m$, otherwise it cannot be regarded as a probability. The physical meaning of this restriction is that (\ref{eq:Vacancy Probability VPt}) is meaningful and correct as long as there are at least as many active players as stops $m$. If on some day $t$ we have that $n_t \leq m$, then the strategy we adhere to will make sure that all $n_t$ active players will manage to get lunch during day $t$.
					\end{itemize}
			\item	Let us clarify that our sample space consists precisely of the restaurants that have not served any customer up to day $t$. The expected number of restaurants in our sample space that remained vacant at the \emph{end} of day $t$ is given by $\overline{R_{t}^{v}}$. First, we express probabilistically those restaurants of our sample space that remain vacant after all agents visit their first $m$ destinations. We define the family of random variables $R_{t j}^{v}, \ 1 \leq j \leq n_t$. The random variable $R_{t j}^{v}$ indicates whether restaurant $r_j$ is vacant or not at the end of day $t$. Specifically, if $R_{t j}^{v}$ has the value $1$, then restaurant $r_j$ is \emph{vacant} at the end of day $t$, whereas if $R_{t j}^{v}$ is $0$, then $r_j$ is occupied.
					\begin{align} \label{eq:Random Variables Rtjv}
						R_{t j}^{v} =
						\left\{
						\begin{matrix*}[l]
							1 & \text{if restaurant } r_j \text{ is \emph{vacant} at the end of day } t \\
							0 & \text{otherwise}
						\end{matrix*} 
						\right.
						\ , \quad 1 \leq j \leq n_t \ . \tag{ \ref{thr: DKPRG Quantitative Characteristics}.v }
					\end{align}
					By combining definition (\ref{eq:Random Variables Rtjv}) and equation (\ref{eq:Vacancy Probability VPt}), we deduce that
					\begin{align} \label{eq:Probability of Random Variables Rtjv}
						R_{t j}^{v} =
						\left\{
						\begin{matrix*}[l]
							1 & \text{with probability } \left( \frac{n_t - m}{n_t} \right)^{n_t} \\
							0 & \text{with probability } 1 - \left( \frac{n_t - m}{n_t} \right)^{n_t}
						\end{matrix*}
						\right.
						\ , \quad 1 \leq j \leq n_t \ . \tag{ \ref{thr: DKPRG Quantitative Characteristics}.vi }
					\end{align}
					Having done that, we define the random variable $R_{t}^{v}$, which counts the \emph{the number of restaurants that are vacant} at the end of day $t$.
					\begin{align} \label{eq:Random Variable Rtv}
						R_{t}^{v} = \sum_{ j = 1 }^{ n_t } R_{t j}^{v} \ . \tag{ \ref{thr: DKPRG Quantitative Characteristics}.vii }
					\end{align}
					As always, in this probabilistic setting, we are interested not in the actual value of the random variable $R_{t}^{v}$, but in its expected value $E [ R_{t}^{v} ]$. In view of definition (\ref{eq:Random Variable Rtv}) and the linearity of the expected value operator, we derive that 
					\begin{align} \label{eq:Expected Value of Random Variable R_{t}^{v}}
						\overline{R_{t}^{v}} &= E \left[ R_{t}^{v} \right]
						\overset{ (\ref{eq:Random Variable Rtv}) } { = }
						E \left[ \sum_{ j = 1 }^{ n_t } R_{t j}^{v} \right] =
						\sum_{ j = 1 }^{ n_t } E \left[ R_{t j}^{v} \right] \nonumber \\
						&\overset{ (\ref{eq:Probability of Random Variables Rtjv}) } { = }
						\sum_{ j = 1 }^{ n_t } \left( 1 \cdot \left( \frac{n_t - m}{n_t} \right)^{n_t} \right) = n_t \left( \frac{n_t - m}{n_t} \right)^{n_t}
						\ , \tag{ \ref{thr: DKPRG Quantitative Characteristics}.viii }
					\end{align}
					which verifies (\ref{eq:Expected Number of Vacant Restaurants at End of Day t}).
			\item	Recall that our sample space contains exactly those restaurants that have not served any customer up to day $t$. $\overline{R_{t}^{r}}$ denotes the expected number of the restaurants of the sample space that were visited by an agent by the \emph{end} of day $t$. Now, we define the family of random variables $R_{t j}^{r}, \ 1 \leq j \leq n_t$, which indicate whether restaurant $r_j$ is occupied or not at the end of day $t$. Specifically, if $R_{t j}^{r}$ has the value $1$, then restaurant $r_j$ is \emph{occupied} at the end of day $t$, whereas if $R_{t j}^{r}$ is $0$, then $r_j$ is vacant.
					\begin{align} \label{eq:Random Variables Rtjr}
						R_{t j}^{r} =
						\left\{
						\begin{matrix*}[l]
							1 & \text{if restaurant } r_j \text{ is \emph{occupied} at the end of day } t \\
							0 & \text{otherwise}
						\end{matrix*} 
						\right.
						\ , \quad 1 \leq j \leq n_t \ . \tag{ \ref{thr: DKPRG Quantitative Characteristics}.ix }
					\end{align}
					By combining definition (\ref{eq:Random Variables Rtjr}) and equation (\ref{eq:Vacancy Probability VPt}), we deduce that
					\begin{align} \label{eq:Probability of Random Variables Rtjr}
						R_{t j}^{r} =
						\left\{
						\begin{matrix*}[l]
							1 & \text{with probability } 1 - \left( \frac{n_t - m}{n_t} \right)^{n_t} \\
							0 & \text{with probability } \left( \frac{n_t - m}{n_t} \right)^{n_t}
						\end{matrix*}
						\right.
						\ , \quad 1 \leq j \leq n_t \ . \tag{ \ref{thr: DKPRG Quantitative Characteristics}.x }
					\end{align}
					Having done that, we define the random variable $R_{t}^{r}$, which counts the \emph{the number of restaurants that are occupied} at the end of day $t$.
					\begin{align} \label{eq:Random Variable Rtr}
						R_{t}^{r} = \sum_{ j = 1 }^{ n_t } R_{t j}^{r} \ . \tag{ \ref{thr: DKPRG Quantitative Characteristics}.xi }
					\end{align}
					We are not interested in the actual value of the random variable $R_{t}^{r}$, but in its expected value $E [ R_{t}^{r} ]$. In view of definition (\ref{eq:Random Variable Rtr}) and the linearity of the expected value operator, we derive that
					\begin{align} \label{eq:Expected Value of Random Variable R_{t}^{r}}
						\overline{R_{t}^{r}} &= E \left[ R_{t}^{r} \right] 
						\overset{ (\ref{eq:Random Variable Rtr}) } { = }
						E \left[ \sum_{ j = 1 }^{ n_t } R_{t j}^{r} \right] =
						\sum_{ j = 1 }^{ n_t } E \left[ R_{t j}^{r} \right] \nonumber \\
						&\overset{ (\ref{eq:Probability of Random Variables Rtjr}) } { = }
						\sum_{ j = 1 }^{ n_t } 1 \cdot \left( 1 - \left( \frac{n_t - m}{n_t} \right)^{n_t} \right) = n_t \left( 1 - \left( \frac{n_t - m}{n_t} \right)^{n_t} \right)
						\ , \tag{ \ref{thr: DKPRG Quantitative Characteristics}.xii }
					\end{align}
					which verifies (\ref{eq:Expected Number of Reserved Restaurants at End of Day t}).
			\item	The rules of the game stipulate that the number of customers that have not managed to eat lunch at the end of day $t$ is equal to the number of restaurants that have not served any customer at the end of day $t$. Hence, their expected values are also equal, which means that $\overline{A_{t}^{u}} = \overline{R_{t}^{v}}$ and (\ref{eq:Expected Number of Unsatisfied Customers at End of Day t}) is proved.
			\item	Likewise, the adopted strategy ensures that the number of the active players that succeeded in getting lunch at the \emph{end} of day $t$ is equal to the number of restaurants that, although they had not served any agent up to day $t$, they managed to accommodate a customer by the end of day $t$. Hence, their expected values are also equal, which means that $\overline{A_{t}^{s}} = \overline{R_{t}^{r}}$ and (\ref{eq:Expected Number of Satisfied Customers at End of Day t}) is proved.
			\item	We are now in a position that enables us to assert the expected number of active players.
					\begin{itemize}
						\item	At the beginning of the first day, the numbers of active players is exactly $n$. This trivial observation confirms the initial condition (\ref{eq:Expected Number of Active Agents on Day 1}).
						\item	As we have previously explained, the adopted strategy in the $m$-DKPRG ensures that the number of agents that have not got lunch at the end of day $t$ is always equal to the number of active players on day $t + 1$. Thus, the expected number of active agents on day $t + 1$ is equal to the expected number of unsatisfied agents at the end of day $t$:
								\begin{align} \label{eq:Expected Number of Active Agents Equal to Unsatisfied Agents}
									n_{t + 1} = \overline{A_{t}^{u}} \ . \tag{ \ref{thr: DKPRG Quantitative Characteristics}.xiii }
								\end{align}
								By combining the previous result (\ref{eq:Expected Number of Active Agents Equal to Unsatisfied Agents}) with (\ref{eq:Expected Number of Unsatisfied Customers at End of Day t}), we derive
								\begin{align} \label{Expected Number of Active Agents}
									n_{t + 1} = n_{t} \left( \frac{ n_{t} - m }{ n_{t} } \right)^{ n_{t} } \ , \tag{ \ref{thr: DKPRG Quantitative Characteristics}.xiv }
								\end{align}
								which establishes the validity of (\ref{eq:Expected Number of Active Agents on Day t+1}), as desired.
					\end{itemize}
			\item	The expected utilization $\overline{f_t}$ for day $t = 1, 2 , \ldots$, is the ratio of the expected number of agents that were served during day $t$. This last numbers is equal to the expected number of customers that got lunch on day $1$, plus the expected number of the \emph{additional} customers that got lunch on day $2$, and so on. The \emph{additional} agents of day $t$ are precisely those agents that had failed to get lunch prior to day $t$, but succeeded in eating on day $t$. Their expected number is $\overline{A_{t}^{s}}$, which is given by equation (\ref{eq:Expected Number of Satisfied Customers at End of Day t}). Hence, the total number of agents that have eaten lunch up to and including day $t$ is given by
					\begin{align} \label{eq:Total Expected Number of Satisfied Agents on Day t}
						\sum_{d = 1}^{t} A_{d}^{s}
						\overset{ (\ref{eq:Expected Number of Satisfied Customers at End of Day t}) } { = }
						\sum_{d = 1}^{t} n_{d} \left( 1 - \left( \frac{ n_{d} - m }{ n_{d} } \right)^{ n_{d} } \right)
						\ , \ t = 1, 2 , \ldots \ . \tag{ \ref{thr: DKPRG Quantitative Characteristics}.xv }
					\end{align}
					An equivalent way to compute this exact number is by subtracting from the total number of agents $n$ the expected number of agents that failed to get lunch on day $t$, which is $\overline{A_{t}^{u}}$, which is given by equation (\ref{eq:Expected Number of Unsatisfied Customers at End of Day t}). Thus,
					\begin{align} \label{eq:Expected Number of Remaing Unsatisfied Agents on Day t}
						n - \overline{A_{t}^{u}} 
						\overset{ (\ref{eq:Expected Number of Unsatisfied Customers at End of Day t}) } { = }
						n - n_t \left( \frac{n_t - m}{n_t} \right)^{n_t}
						\ , \ t = 1, 2 , \ldots \ . \tag{ \ref{thr: DKPRG Quantitative Characteristics}.xvi }
					\end{align}
					Together the two formulas (\ref{eq:Total Expected Number of Satisfied Agents on Day t}) and (\ref{eq:Expected Number of Remaing Unsatisfied Agents on Day t}), allow us to conclude that
					\begin{align} \label{Expected Utilization}
						\overline{f_{t}}
						\overset{ (\ref{eq:Total Expected Number of Satisfied Agents on Day t}) } { = }
						\frac { \sum_{d = 1}^{t} n_{d} \left( 1 - \left( \frac{ n_{d} - m }{ n_{d} } \right)^{ n_{d} } \right) } { n }
						\overset{ (\ref{eq:Expected Number of Remaing Unsatisfied Agents on Day t}) } { = }
						\frac { n - n_t \left( \frac{n_t - m}{n_t} \right)^{n_t} } { n }
						\ , \ t = 1, 2 , \ldots \ , \tag{ \ref{thr: DKPRG Quantitative Characteristics}.xvii }
					\end{align}
					which establishes the validity of (\ref{eq:Expected Utilization on Day t}), as desired.
		\end{enumerate}
	\end{proof}

	Let us now make an important observation: formula (\ref{eq:Expected Number of Vacant Restaurants at End of Day t}) that we derived above, and which gives the expected number of vacant restaurants at the end of day $t$, is completely general and subsumes more special formulas found in the literature. Take for example the special case where $t = 1$ and $m = 1$. For these values, (\ref{eq:Expected Number of Vacant Restaurants at End of Day t}) computes the expected number of vacant restaurants at the end of day $1$ for the standard one-stop KPRP. By subtracting this quantity from $n$, the number of initially available restaurants, and then dividing by $n$, we derive the expected utilization ratio for day $1$. Indeed

	\begin{align} \label{eq:General Form Day 1 Utilization}
		\overline{f_1} = \frac { n - n \left( \frac{n - 1}{n} \right)^n } { n } = 1 - \left( \frac{n - 1}{n} \right)^n \ .
	\end{align}

	One assumption that is taken for granted in the literature is that the number of agents $n$ tends to infinity. It is straightforward to see how the above formula simplifies when $n \to \infty$. We recall a very useful fact from calculus (see for instance \cite{RobertAdams2016}), namely that
	\begin{align} \label{eq:The function Expx as a Limit}
		\forall x \in \mathbb{R}, \quad \lim_{n \to \infty} \left( 1 + \frac{x}{n}\right)^{n} \to e^{x} \ .
	\end{align}
	Under this premise, we see that $\lim_{n \to \infty} f \to 1 - e^{-1}$, which is in complete agreement with a well-known result of the literature.

	\begin{corollary} \label{thr: Approximation of DKPRG Quantitative Characteristics}
		If we assume that $n \to \infty$, then the following approximations hold, where $t$ is the day in question.
		\begin{align}
			n_{t + 1} &\approx n e^{-t m} \ , \ t = 1, 2 , \ldots \ , \label{eq:Approximation of Expected Number of Active Agents on Day t+1} \\
			\overline{A_{t}^{u}} &\approx n e^{-t m} \ , \ t = 1, 2 , \ldots \ , \label{eq:Approximation of Expected Number of Unsatisfied Customers on Day t} \\
			\overline{A_{t}^{s}} &\approx n \left( 1 - e^{ - m } \right) e^{ - ( t - 1 ) m } 
			= n e^{ - ( t - 1 ) m } - n e^{ -t m } \ , \ t = 1, 2 , \ldots \ , \label{eq:Approximation of Expected Number of Satisfied Agents on Day t} \\
			\overline{R_{t}^{v}} &\approx n e^{-t m} \ , \ t = 1, 2 , \ldots \ , \label{eq:Approximation of Expected Number of Vacant Restaurants on Day t} \\
			\overline{R_{t}^{r}} &\approx n e^{ -(t - 1) m} - n e^{ -t m} \ , \ t = 1, 2 , \ldots \ , \label{eq:Approximation of Expected Number of Reserved Restaurants on Day t} \\
			VP_{t} &\approx e^{-t m} \ , \ t = 1, 2 , \ldots \ , \label{eq:Approximation of Vacancy Probability on Day t} \\
			\overline{f_{t}} &\approx 1 - e^{ -t m } \ , \ t = 1, 2 , \ldots \ . \label{eq:Approximate Expected Utilization on Day t} \\
			\overline{f_{\infty}} &\approx 1 \ . \label{eq:Approximate Steady State Utilization}
		\end{align}
	\end{corollary}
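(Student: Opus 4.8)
The plan is to obtain every one of the displayed approximations by substituting the exact formulas of Theorem~\ref{thr: DKPRG Quantitative Characteristics} into the elementary calculus limit (\ref{eq:The function Expx as a Limit}); the only step with any content is a short induction on $t$ that justifies iterating this limit day after day. Throughout, ``$\approx$'' (and ``$\sim$'' below) are understood as asymptotic equivalence as $n \to \infty$ with the day index $t$ held fixed.

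First I would prove the master estimate $n_t \sim n\,e^{-(t-1)m}$ for every fixed $t = 1, 2, \ldots$ by induction. The base case $t = 1$ is just the initial condition (\ref{eq:Expected Number of Active Agents on Day 1}), $n_1 = n$. For the inductive step, assume $n_t \sim n\,e^{-(t-1)m}$; since $e^{-(t-1)m} > 0$ is a constant, this forces $n_t \to \infty$ along with $n$, so (\ref{eq:The function Expx as a Limit}) applies with $x = -m$ and gives $\left( 1 - \frac{m}{n_t} \right)^{n_t} \to e^{-m}$. Feeding this into the recursion (\ref{eq:Expected Number of Active Agents on Day t+1}) yields
\[
	n_{t+1} = n_t \left( \frac{ n_t - m }{ n_t } \right)^{ n_t } \sim n\,e^{-(t-1)m}\cdot e^{-m} = n\,e^{-tm},
\]
which is (\ref{eq:Approximation of Expected Number of Active Agents on Day t+1}) and closes the induction. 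Note that for $n$ large enough the constraint $n_t \ge m$ of the Theorem is automatically met, since $n_t \to \infty$.

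All remaining estimates are then direct substitutions. Because $\overline{A_t^u} = \overline{R_t^v} = n_{t+1}$ by (\ref{eq:Expected Number of Unsatisfied Customers at End of Day t}), (\ref{eq:Expected Number of Vacant Restaurants at End of Day t}) and (\ref{eq:Expected Number of Active Agents Equal to Unsatisfied Agents}), the master estimate immediately gives (\ref{eq:Approximation of Expected Number of Unsatisfied Customers on Day t}) and (\ref{eq:Approximation of Expected Number of Vacant Restaurants on Day t}). From (\ref{eq:Expected Number of Satisfied Customers at End of Day t}) one has $\overline{A_t^s} = n_t - n_{t+1} \sim n\,e^{-(t-1)m} - n\,e^{-tm} = n\left( 1 - e^{-m} \right)e^{-(t-1)m}$, which is (\ref{eq:Approximation of Expected Number of Satisfied Agents on Day t}), and the identity $\overline{R_t^r} = \overline{A_t^s}$ gives (\ref{eq:Approximation of Expected Number of Reserved Restaurants on Day t}). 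For the utilization, (\ref{eq:Expected Utilization on Day t}) reads $\overline{f_t} = 1 - n_{t+1}/n$, hence $\overline{f_t} \approx 1 - e^{-tm}$, which is (\ref{eq:Approximate Expected Utilization on Day t}); and since the overall fraction of restaurants still unvisited after day $t$ is $\overline{R_t^v}/n = n_{t+1}/n \approx e^{-tm}$, we obtain (\ref{eq:Approximation of Vacancy Probability on Day t}). Finally, $\overline{f_t} \approx 1 - e^{-tm}$ is strictly increasing in $t$ and tends to $1$, so $\overline{f_\infty} = \sup_t \overline{f_t} \approx 1$, which is (\ref{eq:Approximate Steady State Utilization}).

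The single delicate point --- and the main obstacle --- is the legitimacy of the iterated limit: at day $t$ we invoke (\ref{eq:The function Expx as a Limit}) with the \emph{variable} exponent $n_t$, which is permissible only because the inductive hypothesis already pins $n_t$ to a fixed positive multiple of $n$ and therefore forces $n_t \to \infty$. Making this airtight amounts to carrying the relation $n_t / \left( n\,e^{-(t-1)m} \right) \to 1$ cleanly through each inductive step, with $t$ fixed while $n \to \infty$, together with the remark that a product of finitely many convergent factors converges to the product of the limits; everything else is bookkeeping.
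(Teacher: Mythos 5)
Your proposal is correct and follows essentially the same route as the paper's proof: establish $n_t \approx n e^{-(t-1)m}$ by iterating the limit (\ref{eq:The function Expx as a Limit}) through the recursion (\ref{eq:Expected Number of Active Agents on Day t+1}), then read off every other quantity by substitution into the exact formulas of Theorem \ref{thr: DKPRG Quantitative Characteristics}. Where the paper merely computes $n_2$ and $n_3$ and declares the general case ``routine,'' you make the induction explicit and, more importantly, you justify the one genuinely delicate point --- that the inductive hypothesis forces $n_t \to \infty$ together with $n$, so that the limit with the variable exponent $n_t$ is legitimately applied at each fixed day $t$. That is a real improvement in rigor over the paper's version.

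The one place where you genuinely depart from the paper is (\ref{eq:Approximation of Vacancy Probability on Day t}). The paper claims this follows by ``substituting in formula (\ref{eq:Vacancy Probability VPt}) the approximation,'' but direct substitution of $\left( \frac{n_t - m}{n_t} \right)^{n_t} \approx e^{-m}$ into (\ref{eq:Vacancy Probability VPt}) yields $VP_t \approx e^{-m}$, \emph{not} $e^{-tm}$: the exact $VP_t$ is a probability conditional on the restaurant having remained unserved before day $t$, and it carries no dependence on $t$ beyond that of $n_t$. You instead derive $e^{-tm}$ as the \emph{unconditional} fraction $\overline{R_t^v}/n$ of all $n$ restaurants still vacant after day $t$, which does equal $e^{-tm}$ but is a different quantity from the $VP_t$ defined in the paper and computed in (\ref{eq:Vacancy Probability VPt}). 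So your argument delivers the displayed formula, but only by silently reinterpreting $VP_t$; you should flag that (\ref{eq:Approximation of Vacancy Probability on Day t}) is consistent with the exact formula (\ref{eq:Vacancy Probability VPt}) only for $t = 1$, and that for $t \geq 2$ either the corollary's statement or the definition of $VP_t$ needs to be amended. Everything else in your write-up matches the paper's derivation.
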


	\begin{proof}
		The above approximations are easily proved as shown below.

		\begin{enumerate}
			\item	If we invoke property (\ref{eq:The function Expx as a Limit}), we deduce that $\left( \frac{n - m}{n} \right)^n \xrightarrow [n \to \infty] {} e^{-m}$ and $\left( \frac{ n_{t} - m }{ n_{t} } \right)^{ n_{t} } \xrightarrow [n_t \to \infty] {} e^{-m}$. When $n$ and $n_t$ do not take very large values, these limits can only serve as good approximations. Thus, it is more accurate to write  
					\begin{align} \label{eq:Exponential Approximation}
						\left( \frac{n - m}{n} \right)^n \approx e^{-m}
						\quad \text{ and } \quad
						\left( \frac{ n_{t} - m }{ n_{t} } \right)^{ n_{t} } \approx e^{-m} \ . \tag{ \ref{thr: Approximation of DKPRG Quantitative Characteristics}.i }
					\end{align}
					Formulas (\ref{eq:Expected Number of Active Agents on Day t+1}) and (\ref{eq:Expected Number of Active Agents on Day 1}) imply that
					\begin{align} \label{eq:Approximate Number of Active Agents on Day 2}
						n_2 =  n \left( \frac{n - m}{n} \right)^n
						\overset{ (\ref{eq:Exponential Approximation}) } { \approx }
						n e^{-m} \ . \tag{ \ref{thr: Approximation of DKPRG Quantitative Characteristics}.ii }
					\end{align}
					In an identical manner, we see that
					\begin{align} \label{eq:Approximate Number of Active Agents on Day 3}
						n_3 =  n_2 \left( \frac{n_2 - m}{n_2} \right)^{n_2}
						\overset{ (\ref{eq:Exponential Approximation}) } { \approx } n_2 e^{-m}
						\overset{ (\ref{eq:Approximate Number of Active Agents on Day 2}) } { \approx } n e^{-m} e^{-m} = n e^{-2 m} \ . \tag{ \ref{thr: Approximation of DKPRG Quantitative Characteristics}.iii }
					\end{align}
					Following this line of thought, it is now routine to see that (\ref{eq:Approximation of Expected Number of Active Agents on Day t+1}) holds.
			\item	The proof is trivial because the number of customers that have not eaten lunch by the end of day $t$ is equal to the number of active agents at the beginning of day $t + 1$.
			\item
					Considering the fact that $\left( \frac{ n_{t} - m }{ n_{t} } \right)^{ n_{t} } \xrightarrow [n_t \to \infty] {} e^{-m}$, we may deduce that $1 - \left( \frac{ n_{t} - m }{ n_{t} } \right)^{ n_{t} } \xrightarrow [n_t \to \infty] {} 1 - e^{-m}$. In this way we have derived the next approximation.
					\begin{align} \label{eq:Approximate Rate of success}
						1 - \left( \frac{ n_{t} - m }{ n_{t} } \right)^{ n_{t} } \approx 1 - e^{-m} \ . \tag{ \ref{thr: Approximation of DKPRG Quantitative Characteristics}.iv }
					\end{align}
					It follows from (\ref{eq:Approximation of Expected Number of Active Agents on Day t+1}) that
					\begin{align} \label{eq:Approximation of Expected Number of Active Agents on Day t}
						n_{t} \approx n e^{ -(t - 1) m} \ . \tag{ \ref{thr: Approximation of DKPRG Quantitative Characteristics}.v }
					\end{align}
					Together (\ref{eq:Approximate Rate of success}) and (\ref{eq:Approximation of Expected Number of Active Agents on Day t}) imply that 
					\begin{align} \label{eq:Approximate Expected Number of Ssatisfied Customers}
						\overline{A_{t}^{s}} &= n_{t} \left( 1 - \left( \frac{ n_{t} - m }{ n_{t} } \right)^{ n_{t} } \right)
						\overset{ (\ref{eq:Approximation of Expected Number of Active Agents on Day t}), (\ref{eq:Approximate Rate of success}) } { \approx }
						n \left( 1 - e^{ - m } \right) e^{ - ( t - 1 ) m } \nonumber \\
						&= n e^{ - ( t - 1 ) m } - n e^{ - ( t - 1 ) m } e^{ - m }
						= n e^{ - ( t - 1 ) m } - n e^{ -t m }
						\ , \tag{ \ref{thr: Approximation of DKPRG Quantitative Characteristics}.vi }
					\end{align}
					which proves equation (\ref{eq:Expected Utilization on Day t}), as desired.
			\item	Again, this result is obvious because the number of vacant restaurants at the end of day $t$ is equal to the number of unsatisfied customers at the end of day $t$.
			\item	The number of restaurants that served a customer for the first time during day $t$ is equal to the number of agents that managed to get lunch for the first time during day $t$. Hence, the result is immediate.
			\item	This can be easily shown by substituting in formula (\ref{eq:Vacancy Probability VPt}) the approximation (\ref{eq:Exponential Approximation}).
			\item	A good approximation for the \emph{total} number of restaurants that were utilized during day $t$ can be found by subtracting from $n$ the approximate expected number of customers that did not have lunch on day $t$. In view of (\ref{eq:Approximation of Expected Number of Vacant Restaurants on Day t}), this implies that
					\begin{align} \label{eq:Approximate Expected Utilization I}
						\overline{f_{t}}
						\overset{ (\ref{eq:Approximation of Expected Number of Vacant Restaurants on Day t}) } { \approx }
						\frac { n - n e^{ -t m } } { n }
						= 1 - e^{ -t m }
						\ , \ t = 1, 2 , \ldots \ . \tag{ \ref{thr: Approximation of DKPRG Quantitative Characteristics}.vii }
					\end{align}
					Alternatively, one may use the expected number of agents that ate lunch on day $t$, which can be computed as the sum $\sum_{d = 1}^{t} \overline{A_{d}^{s}}$. Using the relation (\ref{eq:Approximation of Expected Number of Satisfied Agents on Day t}), this sum can be written as
					\begin{align} \label{eq:Approximate Sum of Satisfied Customers}
						\sum_{d = 1}^{t} n \left( 1 - e^{ - m } \right) e^{ - ( d - 1 ) m }
						&=  n \left( 1 - e^{ - m } \right) \sum_{d = 1}^{t}e^{ - ( d - 1 ) m } \nonumber \\
						&=  n \left( 1 - e^{ - m } \right) \frac { 1 - e^{ - t m } } { 1 - e^{ m } }
						=  n \left(  1 - e^{ - t m } \right)
						\ , \ t = 1, 2 , \ldots \ . \tag{ \ref{thr: Approximation of DKPRG Quantitative Characteristics}.viii }
					\end{align}
					In the above derivation, we used a well-know fact (see for instance \cite{Axler2013}), namely that the sum of the first $t$ terms of a geometric sequence with first term $g_1$ and ration $\rho$ is given by the formula $g_1 + g_1 \rho + g_1 \rho^2 + \dots + g_1 \rho^{t - 1} = g_1 \frac { 1 - \rho^t } { 1 - \rho }$. This second way to estimate the expected utilization confirms, as expected, that
					\begin{align} \label{eq:Approximate Expected Utilization II}
						\overline{f_{t}} = \sum_{d = 1}^{t} \overline{A_{d}^{s}}
						\overset{ (\ref{eq:Approximate Sum of Satisfied Customers}) } { \approx }
						\frac { n \left(  1 - e^{ - t m } \right) } { n }
						= 1 - e^{ -t m }
						\ , \ t = 1, 2 , \ldots \ , \tag{ \ref{thr: Approximation of DKPRG Quantitative Characteristics}.ix }
					\end{align}
					which proves equation (\ref{eq:Approximate Expected Utilization on Day t}), as desired.
			\item	A trivial consequence of (\ref{eq:Approximate Expected Utilization on Day t}), as $t \to \infty$.
		\end{enumerate}
	\end{proof}

	To demonstrate how the exact formulas (\ref{eq:Vacancy Probability VPt}) - (\ref{eq:Expected Utilization on Day t}) reflect the daily evolution of the $m$-DKPRG we study five typical instances of the game. The first four are instances of $2$-DKPRG games with substantially different number of players. In the first four games, the number of steps $m$ is $2$, meaning that each agent may visit two restaurants if the need arises. In the first example the number of agents $n$ is $100$, a relatively small number, and its detailed progression is shown in Table \ref{tbl:The progression of the mDKPRG for m 2 and n 100}. The steady state utilization is, as expected, $1$ and it is achieved by the end of day $3$.

\begin{table}[H]
	\centering
	\STautoround{5}
	\begin{spreadtab}{{tabular}{cc||ccccc}}
		\hline
		\multicolumn{2}{c||}{@ Number of stops $m$} & \multicolumn{5}{c}{:={2}} \\
		\multicolumn{2}{c||}{@ Number of agents $n$} & \multicolumn{5}{c}{:={100}} \\
		\multicolumn{2}{c||}{@ Day ($t$)}
		& @ $n_t$ 
		& @ $VP_{t}$ 
		& @ $\overline{A_{t}^{s}} = \overline{R_{t}^{r}}$
		& @ $\overline{A_{t}^{u}} = \overline{R_{t}^{v}}$
		& @ $\overline{f_{t}}$ \\
		\hline
		@ {Start of day} & 1 & c2 & 1 & 0 & c2 & 0 \\
		@ {End of day} & 1
		& ifgt([0,-1], c1, [0,-1] * ( ( [0,-1] - c1 ) / [0,-1] )^[0,-1], 0)
		& ifgt([-1,-1], c1, ( ( [-1,-1] - c1 ) / [-1,-1] )^[-1,-1], 0)
		& ifgt([-2,-1], c1, [-2,-1] - ( [-2,-1] * [-1,0] ), [-2,-1])
		& ifgt([-3,-1], c1, [-3,-1] * [-2,0], 0)
		& ifgt([-4,-1], c1, ( c2 - [-1,0] ) / c2, 1) \\
		\hline
		@ {Start of day} & [0,-2] + 1
		& [0,-1]
		& 1
		& 0
		& [0,-1]
		& [0,-1] \\
		@ {End of day} & [0,-2] + 1
		& ifgt([0,-1], c1, [0,-1] * ( ( [0,-1] - c1 ) / [0,-1] )^[0,-1], 0)
		& ifgt([-1,-1], c1, ( ( [-1,-1] - c1 ) / [-1,-1] )^[-1,-1], 0)
		& ifgt([-2,-1], c1, [-2,-1] - ( [-2,-1] * [-1,0] ), [-2,-1])
		& ifgt([-3,-1], c1, [-3,-1] * [-2,0], 0)
		& ifgt([-4,-1], c1, ( c2 - [-1,0] ) / c2, 1) \\
		\hline
		@ {Start of day} & [0,-2] + 1 
		& [0,-1]
		& 1
		& 0
		& [0,-1]
		& [0,-1] \\
		@ {End of day} & [0,-2] + 1
		& ifgt([0,-1], c1, [0,-1] * ( ( [0,-1] - c1 ) / [0,-1] )^[0,-1], 0)
		& ifgt([-1,-1], c1, ( ( [-1,-1] - c1 ) / [-1,-1] )^[-1,-1], 0)
		& ifgt([-2,-1], c1, [-2,-1] - ( [-2,-1] * [-1,0] ), [-2,-1])
		& ifgt([-3,-1], c1, [-3,-1] * [-2,0], 0)
		& ifgt([-4,-1], c1, ( c2 - [-1,0] ) / c2, 1) \\
		\hline
		\hline
	\end{spreadtab}
	\caption{This Table demonstrates the progression of the $m$-DKPRG for $m = 2$ and $n = 100$. It can be seen that all restaurants are utilized by the end of day $3$.} \label{tbl:The progression of the mDKPRG for m 2 and n 100}
\end{table}

	In the second example the number of agents $n$ is $1000$ and its progression is shown in Table \ref{tbl:The progression of the mDKPRG for m 2 and n 1000}. The steady state utilization is, as expected, $1$ and it is achieved by the end of day $5$, i.e., $2$ days later compared to the previous example.

\begin{table}[H]
	\centering
	\STautoround{5}
	\begin{spreadtab}{{tabular}{cc||ccccc}}
		\hline
		\multicolumn{2}{c||}{@ Number of stops $m$} & \multicolumn{5}{c}{:={2}} \\
		\multicolumn{2}{c||}{@ Number of agents $n$} & \multicolumn{5}{c}{:={1000}} \\
		\multicolumn{2}{c||}{@ Day ($t$)}
		& @ $n_t$ 
		& @ $VP_{t}$ 
		& @ $\overline{A_{t}^{s}} = \overline{R_{t}^{r}}$
		& @ $\overline{A_{t}^{u}} = \overline{R_{t}^{v}}$
		& @ $\overline{f_{t}}$ \\
		\hline
		@ {Start of day} & 1 & c2 & 1 & 0 & c2 & 0 \\
		@ {End of day} & 1
		& ifgt([0,-1], c1, [0,-1] * ( ( [0,-1] - c1 ) / [0,-1] )^[0,-1], 0)
		& ifgt([-1,-1], c1, ( ( [-1,-1] - c1 ) / [-1,-1] )^[-1,-1], 0)
		& ifgt([-2,-1], c1, [-2,-1] - ( [-2,-1] * [-1,0] ), [-2,-1])
		& ifgt([-3,-1], c1, [-3,-1] * [-2,0], 0)
		& ifgt([-4,-1], c1, ( c2 - [-1,0] ) / c2, 1) \\
		\hline
		@ {Start of day} & [0,-2] + 1
		& [0,-1]
		& 1
		& 0
		& [0,-1]
		& [0,-1] \\
		@ {End of day} & [0,-2] + 1
		& ifgt([0,-1], c1, [0,-1] * ( ( [0,-1] - c1 ) / [0,-1] )^[0,-1], 0)
		& ifgt([-1,-1], c1, ( ( [-1,-1] - c1 ) / [-1,-1] )^[-1,-1], 0)
		& ifgt([-2,-1], c1, [-2,-1] - ( [-2,-1] * [-1,0] ), [-2,-1])
		& ifgt([-3,-1], c1, [-3,-1] * [-2,0], 0)
		& ifgt([-4,-1], c1, ( c2 - [-1,0] ) / c2, 1) \\
		\hline
		@ {Start of day} & [0,-2] + 1
		& [0,-1]
		& 1
		& 0
		& [0,-1]
		& [0,-1] \\
		@ {End of day} & [0,-2] + 1
		& ifgt([0,-1], c1, [0,-1] * ( ( [0,-1] - c1 ) / [0,-1] )^[0,-1], 0)
		& ifgt([-1,-1], c1, ( ( [-1,-1] - c1 ) / [-1,-1] )^[-1,-1], 0)
		& ifgt([-2,-1], c1, [-2,-1] - ( [-2,-1] * [-1,0] ), [-2,-1])
		& ifgt([-3,-1], c1, [-3,-1] * [-2,0], 0)
		& ifgt([-4,-1], c1, ( c2 - [-1,0] ) / c2, 1) \\
		\hline
		@ {Start of day} & [0,-2] + 1
		& [0,-1]
		& 1
		& 0
		& [0,-1]
		& [0,-1] \\
		@ {End of day} & [0,-2] + 1
		& ifgt([0,-1], c1, [0,-1] * ( ( [0,-1] - c1 ) / [0,-1] )^[0,-1], 0)
		& ifgt([-1,-1], c1, ( ( [-1,-1] - c1 ) / [-1,-1] )^[-1,-1], 0)
		& ifgt([-2,-1], c1, [-2,-1] - ( [-2,-1] * [-1,0] ), [-2,-1])
		& ifgt([-3,-1], c1, [-3,-1] * [-2,0], 0)
		& ifgt([-4,-1], c1, ( c2 - [-1,0] ) / c2, 1) \\
		\hline
		@ {Start of day} & [0,-2] + 1 
		& [0,-1]
		& 1
		& 0
		& [0,-1]
		& [0,-1] \\
		@ {End of day} & [0,-2] + 1
		& ifgt([0,-1], c1, [0,-1] * ( ( [0,-1] - c1 ) / [0,-1] )^[0,-1], 0)
		& ifgt([-1,-1], c1, ( ( [-1,-1] - c1 ) / [-1,-1] )^[-1,-1], 0)
		& ifgt([-2,-1], c1, [-2,-1] - ( [-2,-1] * [-1,0] ), [-2,-1])
		& ifgt([-3,-1], c1, [-3,-1] * [-2,0], 0)
		& ifgt([-4,-1], c1, ( c2 - [-1,0] ) / c2, 1) \\
		\hline
		\hline
	\end{spreadtab}
	\caption{This Table demonstrates the progression of the $m$-DKPRG for $m = 2$ and $n = 1000$. One may ascertain that all customers eat lunch by the end of day $5$.} \label{tbl:The progression of the mDKPRG for m 2 and n 1000}
\end{table}

	The third example is more meaningful and interesting because in this case the number $n$ of agents is $10^6$, which may be thought of as representing the average case. Table \ref{tbl:The progression of the mDKPRG for m 2 and n 1000000} contains the analytical evolution of this instance. Even when confronted with a significant number of agents, the steady state utilization $1$ is rapidly achieved by the end of day $8$. Although it takes longer to reach that stage, utilization upwards of $0.98$ is established from day $2$.

\begin{table}
	\centering
	\STautoround{5}
	\begin{spreadtab}{{tabular}{cc||ccccc}}
		\hline
		\multicolumn{2}{c||}{@ Number of stops $m$} & \multicolumn{5}{c}{:={2}} \\
		\multicolumn{2}{c||}{@ Number of agents $n$} & \multicolumn{5}{c}{:={1000000}} \\
		\multicolumn{2}{c||}{@ Day ($t$)}
		& @ $n_t$ 
		& @ $VP_{t}$ 
		& @ $\overline{A_{t}^{s}} = \overline{R_{t}^{r}}$
		& @ $\overline{A_{t}^{u}} = \overline{R_{t}^{v}}$
		& @ $\overline{f_{t}}$ \\
		\hline
		@ {Start of day} & 1 & c2 & 1 & 0 & c2 & 0 \\
		@ {End of day} & 1
		& ifgt([0,-1], c1, [0,-1] * ( ( [0,-1] - c1 ) / [0,-1] )^[0,-1], 0)
		& ifgt([-1,-1], c1, ( ( [-1,-1] - c1 ) / [-1,-1] )^[-1,-1], 0)
		& ifgt([-2,-1], c1, [-2,-1] - ( [-2,-1] * [-1,0] ), [-2,-1])
		& ifgt([-3,-1], c1, [-3,-1] * [-2,0], 0)
		& ifgt([-4,-1], c1, ( c2 - [-1,0] ) / c2, 1) \\
		\hline
		@ {Start of day} & [0,-2] + 1
		& [0,-1]
		& 1
		& 0
		& [0,-1]
		& [0,-1] \\
		@ {End of day} & [0,-2] + 1
		& ifgt([0,-1], c1, [0,-1] * ( ( [0,-1] - c1 ) / [0,-1] )^[0,-1], 0)
		& ifgt([-1,-1], c1, ( ( [-1,-1] - c1 ) / [-1,-1] )^[-1,-1], 0)
		& ifgt([-2,-1], c1, [-2,-1] - ( [-2,-1] * [-1,0] ), [-2,-1])
		& ifgt([-3,-1], c1, [-3,-1] * [-2,0], 0)
		& ifgt([-4,-1], c1, ( c2 - [-1,0] ) / c2, 1) \\
		\hline
		@ {Start of day} & [0,-2] + 1
		& [0,-1]
		& 1
		& 0
		& [0,-1]
		& [0,-1] \\
		@ {End of day} & [0,-2] + 1
		& ifgt([0,-1], c1, [0,-1] * ( ( [0,-1] - c1 ) / [0,-1] )^[0,-1], 0)
		& ifgt([-1,-1], c1, ( ( [-1,-1] - c1 ) / [-1,-1] )^[-1,-1], 0)
		& ifgt([-2,-1], c1, [-2,-1] - ( [-2,-1] * [-1,0] ), [-2,-1])
		& ifgt([-3,-1], c1, [-3,-1] * [-2,0], 0)
		& ifgt([-4,-1], c1, ( c2 - [-1,0] ) / c2, 1) \\
		\hline
		@ {Start of day} & [0,-2] + 1
		& [0,-1]
		& 1
		& 0
		& [0,-1]
		& [0,-1] \\
		@ {End of day} & [0,-2] + 1
		& ifgt([0,-1], c1, [0,-1] * ( ( [0,-1] - c1 ) / [0,-1] )^[0,-1], 0)
		& ifgt([-1,-1], c1, ( ( [-1,-1] - c1 ) / [-1,-1] )^[-1,-1], 0)
		& ifgt([-2,-1], c1, [-2,-1] - ( [-2,-1] * [-1,0] ), [-2,-1])
		& ifgt([-3,-1], c1, [-3,-1] * [-2,0], 0)
		& ifgt([-4,-1], c1, ( c2 - [-1,0] ) / c2, 1) \\
		\hline
		@ {Start of day} & [0,-2] + 1
		& [0,-1]
		& 1
		& 0
		& [0,-1]
		& [0,-1] \\
		@ {End of day} & [0,-2] + 1
		& ifgt([0,-1], c1, [0,-1] * ( ( [0,-1] - c1 ) / [0,-1] )^[0,-1], 0)
		& ifgt([-1,-1], c1, ( ( [-1,-1] - c1 ) / [-1,-1] )^[-1,-1], 0)
		& ifgt([-2,-1], c1, [-2,-1] - ( [-2,-1] * [-1,0] ), [-2,-1])
		& ifgt([-3,-1], c1, [-3,-1] * [-2,0], 0)
		& ifgt([-4,-1], c1, ( c2 - [-1,0] ) / c2, 1) \\
		\hline
		@ {Start of day} & [0,-2] + 1
		& [0,-1]
		& 1
		& 0
		& [0,-1]
		& [0,-1] \\
		@ {End of day} & [0,-2] + 1
		& ifgt([0,-1], c1, [0,-1] * ( ( [0,-1] - c1 ) / [0,-1] )^[0,-1], 0)
		& ifgt([-1,-1], c1, ( ( [-1,-1] - c1 ) / [-1,-1] )^[-1,-1], 0)
		& ifgt([-2,-1], c1, [-2,-1] - ( [-2,-1] * [-1,0] ), [-2,-1])
		& ifgt([-3,-1], c1, [-3,-1] * [-2,0], 0)
		& ifgt([-4,-1], c1, ( c2 - [-1,0] ) / c2, 1) \\
		\hline
		@ {Start of day} & [0,-2] + 1
		& [0,-1]
		& 1
		& 0
		& [0,-1]
		& [0,-1] \\
		@ {End of day} & [0,-2] + 1
		& ifgt([0,-1], c1, [0,-1] * ( ( [0,-1] - c1 ) / [0,-1] )^[0,-1], 0)
		& ifgt([-1,-1], c1, ( ( [-1,-1] - c1 ) / [-1,-1] )^[-1,-1], 0)
		& ifgt([-2,-1], c1, [-2,-1] - ( [-2,-1] * [-1,0] ), [-2,-1])
		& ifgt([-3,-1], c1, [-3,-1] * [-2,0], 0)
		& ifgt([-4,-1], c1, ( c2 - [-1,0] ) / c2, 1) \\
		\hline
		@ {Start of day} & [0,-2] + 1 
		& [0,-1]
		& 1
		& 0
		& [0,-1]
		& [0,-1] \\
		@ {End of day} & [0,-2] + 1
		& ifgt([0,-1], c1, [0,-1] * ( ( [0,-1] - c1 ) / [0,-1] )^[0,-1], 0)
		& ifgt([-1,-1], c1, ( ( [-1,-1] - c1 ) / [-1,-1] )^[-1,-1], 0)
		& ifgt([-2,-1], c1, [-2,-1] - ( [-2,-1] * [-1,0] ), [-2,-1])
		& ifgt([-3,-1], c1, [-3,-1] * [-2,0], 0)
		& ifgt([-4,-1], c1, ( c2 - [-1,0] ) / c2, 1) \\
		\hline
		\hline
	\end{spreadtab}
	\caption{This Table demonstrates the progression of the $m$-DKPRG for $m = 2$ and $n = 1000000$. Although in this case it takes longer to reach the steady state, utilization upwards of $0.98$ is established from day $2$.} \label{tbl:The progression of the mDKPRG for m 2 and n 1000000}
\end{table}

	The fourth example is instructive about the behavior of our strategy when a large number of agents is involved. In this case the number $n$ of agents is $10^9$ and, unsurprisingly, it takes $11$ days to reach the steady state utilization $1$. All the details of the progression of this game are given in Table \ref{tbl:The progression of the mDKPRG for m 2 and n 1000000000}. Careful observation of the data confirms a major characteristic of our distributed game: for $m = 2$ steps the first day utilization is at least $0.86$ and it goes over $0.98$ from day $2$.

\begin{table}
	\centering
	\STautoround{5}
	\begin{spreadtab}{{tabular}{cc||ccccc}}
		\hline
		\multicolumn{2}{c||}{@ Number of stops $m$} & \multicolumn{5}{c}{:={2}} \\
		\multicolumn{2}{c||}{@ Number of agents $n$} & \multicolumn{5}{c}{:={1000000000}} \\
		\multicolumn{2}{c||}{@ Day ($t$)}
		& @ $n_t$ 
		& @ $VP_{t}$ 
		& @ $\overline{A_{t}^{s}} = \overline{R_{t}^{r}}$
		& @ $\overline{A_{t}^{u}} = \overline{R_{t}^{v}}$
		& @ $\overline{f_{t}}$ \\
		\hline
		@ {Start of day} & 1 & c2 & 1 & 0 & c2 & 0 \\
		@ {End of day} & 1
		& ifgt([0,-1], c1, [0,-1] * ( ( [0,-1] - c1 ) / [0,-1] )^[0,-1], 0)
		& ifgt([-1,-1], c1, ( ( [-1,-1] - c1 ) / [-1,-1] )^[-1,-1], 0)
		& ifgt([-2,-1], c1, [-2,-1] - ( [-2,-1] * [-1,0] ), [-2,-1])
		& ifgt([-3,-1], c1, [-3,-1] * [-2,0], 0)
		& ifgt([-4,-1], c1, ( c2 - [-1,0] ) / c2, 1) \\
		\hline
		@ {Start of day} & [0,-2] + 1
		& [0,-1]
		& 1
		& 0
		& [0,-1]
		& [0,-1] \\
		@ {End of day} & [0,-2] + 1
		& ifgt([0,-1], c1, [0,-1] * ( ( [0,-1] - c1 ) / [0,-1] )^[0,-1], 0)
		& ifgt([-1,-1], c1, ( ( [-1,-1] - c1 ) / [-1,-1] )^[-1,-1], 0)
		& ifgt([-2,-1], c1, [-2,-1] - ( [-2,-1] * [-1,0] ), [-2,-1])
		& ifgt([-3,-1], c1, [-3,-1] * [-2,0], 0)
		& ifgt([-4,-1], c1, ( c2 - [-1,0] ) / c2, 1) \\
		\hline
		@ {Start of day} & [0,-2] + 1
		& [0,-1]
		& 1
		& 0
		& [0,-1]
		& [0,-1] \\
		@ {End of day} & [0,-2] + 1
		& ifgt([0,-1], c1, [0,-1] * ( ( [0,-1] - c1 ) / [0,-1] )^[0,-1], 0)
		& ifgt([-1,-1], c1, ( ( [-1,-1] - c1 ) / [-1,-1] )^[-1,-1], 0)
		& ifgt([-2,-1], c1, [-2,-1] - ( [-2,-1] * [-1,0] ), [-2,-1])
		& ifgt([-3,-1], c1, [-3,-1] * [-2,0], 0)
		& ifgt([-4,-1], c1, ( c2 - [-1,0] ) / c2, 1) \\
		\hline
		@ {Start of day} & [0,-2] + 1
		& [0,-1]
		& 1
		& 0
		& [0,-1]
		& [0,-1] \\
		@ {End of day} & [0,-2] + 1
		& ifgt([0,-1], c1, [0,-1] * ( ( [0,-1] - c1 ) / [0,-1] )^[0,-1], 0)
		& ifgt([-1,-1], c1, ( ( [-1,-1] - c1 ) / [-1,-1] )^[-1,-1], 0)
		& ifgt([-2,-1], c1, [-2,-1] - ( [-2,-1] * [-1,0] ), [-2,-1])
		& ifgt([-3,-1], c1, [-3,-1] * [-2,0], 0)
		& ifgt([-4,-1], c1, ( c2 - [-1,0] ) / c2, 1) \\
		\hline
		@ {Start of day} & [0,-2] + 1
		& [0,-1]
		& 1
		& 0
		& [0,-1]
		& [0,-1] \\
		@ {End of day} & [0,-2] + 1
		& ifgt([0,-1], c1, [0,-1] * ( ( [0,-1] - c1 ) / [0,-1] )^[0,-1], 0)
		& ifgt([-1,-1], c1, ( ( [-1,-1] - c1 ) / [-1,-1] )^[-1,-1], 0)
		& ifgt([-2,-1], c1, [-2,-1] - ( [-2,-1] * [-1,0] ), [-2,-1])
		& ifgt([-3,-1], c1, [-3,-1] * [-2,0], 0)
		& ifgt([-4,-1], c1, ( c2 - [-1,0] ) / c2, 1) \\
		\hline
		@ {Start of day} & [0,-2] + 1
		& [0,-1]
		& 1
		& 0
		& [0,-1]
		& [0,-1] \\
		@ {End of day} & [0,-2] + 1
		& ifgt([0,-1], c1, [0,-1] * ( ( [0,-1] - c1 ) / [0,-1] )^[0,-1], 0)
		& ifgt([-1,-1], c1, ( ( [-1,-1] - c1 ) / [-1,-1] )^[-1,-1], 0)
		& ifgt([-2,-1], c1, [-2,-1] - ( [-2,-1] * [-1,0] ), [-2,-1])
		& ifgt([-3,-1], c1, [-3,-1] * [-2,0], 0)
		& ifgt([-4,-1], c1, ( c2 - [-1,0] ) / c2, 1) \\
		\hline
		@ {Start of day} & [0,-2] + 1
		& [0,-1]
		& 1
		& 0
		& [0,-1]
		& [0,-1] \\
		@ {End of day} & [0,-2] + 1
		& ifgt([0,-1], c1, [0,-1] * ( ( [0,-1] - c1 ) / [0,-1] )^[0,-1], 0)
		& ifgt([-1,-1], c1, ( ( [-1,-1] - c1 ) / [-1,-1] )^[-1,-1], 0)
		& ifgt([-2,-1], c1, [-2,-1] - ( [-2,-1] * [-1,0] ), [-2,-1])
		& ifgt([-3,-1], c1, [-3,-1] * [-2,0], 0)
		& ifgt([-4,-1], c1, ( c2 - [-1,0] ) / c2, 1) \\
		\hline
		@ {Start of day} & [0,-2] + 1
		& [0,-1]
		& 1
		& 0
		& [0,-1]
		& [0,-1] \\
		@ {End of day} & [0,-2] + 1
		& ifgt([0,-1], c1, [0,-1] * ( ( [0,-1] - c1 ) / [0,-1] )^[0,-1], 0)
		& ifgt([-1,-1], c1, ( ( [-1,-1] - c1 ) / [-1,-1] )^[-1,-1], 0)
		& ifgt([-2,-1], c1, [-2,-1] - ( [-2,-1] * [-1,0] ), [-2,-1])
		& ifgt([-3,-1], c1, [-3,-1] * [-2,0], 0)
		& ifgt([-4,-1], c1, ( c2 - [-1,0] ) / c2, 1) \\
		\hline
		@ {Start of day} & [0,-2] + 1
		& [0,-1]
		& 1
		& 0
		& [0,-1]
		& [0,-1] \\
		@ {End of day} & [0,-2] + 1
		& ifgt([0,-1], c1, [0,-1] * ( ( [0,-1] - c1 ) / [0,-1] )^[0,-1], 0)
		& ifgt([-1,-1], c1, ( ( [-1,-1] - c1 ) / [-1,-1] )^[-1,-1], 0)
		& ifgt([-2,-1], c1, [-2,-1] - ( [-2,-1] * [-1,0] ), [-2,-1])
		& ifgt([-3,-1], c1, [-3,-1] * [-2,0], 0)
		& ifgt([-4,-1], c1, ( c2 - [-1,0] ) / c2, 1) \\
		\hline
		@ {Start of day} & [0,-2] + 1
		& [0,-1]
		& 1
		& 0
		& [0,-1]
		& [0,-1] \\
		@ {End of day} & [0,-2] + 1
		& ifgt([0,-1], c1, [0,-1] * ( ( [0,-1] - c1 ) / [0,-1] )^[0,-1], 0)
		& ifgt([-1,-1], c1, ( ( [-1,-1] - c1 ) / [-1,-1] )^[-1,-1], 0)
		& ifgt([-2,-1], c1, [-2,-1] - ( [-2,-1] * [-1,0] ), [-2,-1])
		& ifgt([-3,-1], c1, [-3,-1] * [-2,0], 0)
		& ifgt([-4,-1], c1, ( c2 - [-1,0] ) / c2, 1) \\
		\hline
		@ {Start of day} & [0,-2] + 1 
		& [0,-1]
		& 1
		& 0
		& [0,-1]
		& [0,-1] \\
		@ {End of day} & [0,-2] + 1
		& ifgt([0,-1], c1, [0,-1] * ( ( [0,-1] - c1 ) / [0,-1] )^[0,-1], 0)
		& ifgt([-1,-1], c1, ( ( [-1,-1] - c1 ) / [-1,-1] )^[-1,-1], 0)
		& ifgt([-2,-1], c1, [-2,-1] - ( [-2,-1] * [-1,0] ), [-2,-1])
		& ifgt([-3,-1], c1, [-3,-1] * [-2,0], 0)
		& ifgt([-4,-1], c1, ( c2 - [-1,0] ) / c2, 1) \\
		\hline
		\hline
	\end{spreadtab}
	\caption{This Table demonstrates the progression of the $m$-DKPRG for $m = 2$ and $n = 1000000000$. One can see that the first day utilization is at least $0.86$ and it goes over $0.98$ from day $2$.} \label{tbl:The progression of the mDKPRG for m 2 and n 1000000000}
\end{table}

	It is quite straightforward to convince ourselves that playing a $3$-stop game is better than playing a a $2$-stop game. A precise quantitative analysis of the resulting advantages can be performed by considering the exact formulas (\ref{eq:Vacancy Probability VPt}) - (\ref{eq:Expected Utilization on Day t}). Nonetheless, we believe it is expedient to showcase the difference with the following example. The present example resembles the previous one in that the number of agents is the same, namely $10^9$. However, this time each agent may visit up to \emph{three} restaurants if need be. Such an instance, with a large number of agents, can serve as the best demonstration of the dramatic improvement that can be obtained by an increase in the number of steps. Indeed, the data in the Table \ref{tbl:The progression of the mDKPRG for m 3 and n 1000000000} corroborate this expectation, as one can now see that all restaurants are utilized by the end of day $8$, compared to day $11$ before, the utilization at the end of first day is already up to an impressive $0.95$ and becomes $1$, for all practical purposes, at the end of day $6$. This last example can be considered as a compelling argument that advocates the importance of topological analysis for the network of restaurants.

\begin{table}
	\centering
	\STautoround{5}
	\begin{spreadtab}{{tabular}{cc||ccccc}}
		\hline
		\multicolumn{2}{c||}{@ Number of stops $m$} & \multicolumn{5}{c}{:={3}} \\
		\multicolumn{2}{c||}{@ Number of agents $n$} & \multicolumn{5}{c}{:={1000000000}} \\
		\multicolumn{2}{c||}{@ Day ($t$)}
		& @ $n_t$ 
		& @ $VP_{t}$ 
		& @ $\overline{A_{t}^{s}} = \overline{R_{t}^{r}}$
		& @ $\overline{A_{t}^{u}} = \overline{R_{t}^{v}}$
		& @ $\overline{f_{t}}$ \\
		\hline
		@ {Start of day} & 1 & c2 & 1 & 0 & c2 & 0 \\
		@ {End of day} & 1
		& ifgt([0,-1], c1, [0,-1] * ( ( [0,-1] - c1 ) / [0,-1] )^[0,-1], 0)
		& ifgt([-1,-1], c1, ( ( [-1,-1] - c1 ) / [-1,-1] )^[-1,-1], 0)
		& ifgt([-2,-1], c1, [-2,-1] - ( [-2,-1] * [-1,0] ), [-2,-1])
		& ifgt([-3,-1], c1, [-3,-1] * [-2,0], 0)
		& ifgt([-4,-1], c1, ( c2 - [-1,0] ) / c2, 1) \\
		\hline
		@ {Start of day} & [0,-2] + 1
		& [0,-1]
		& 1
		& 0
		& [0,-1]
		& [0,-1] \\
		@ {End of day} & [0,-2] + 1
		& ifgt([0,-1], c1, [0,-1] * ( ( [0,-1] - c1 ) / [0,-1] )^[0,-1], 0)
		& ifgt([-1,-1], c1, ( ( [-1,-1] - c1 ) / [-1,-1] )^[-1,-1], 0)
		& ifgt([-2,-1], c1, [-2,-1] - ( [-2,-1] * [-1,0] ), [-2,-1])
		& ifgt([-3,-1], c1, [-3,-1] * [-2,0], 0)
		& ifgt([-4,-1], c1, ( c2 - [-1,0] ) / c2, 1) \\
		\hline
		@ {Start of day} & [0,-2] + 1
		& [0,-1]
		& 1
		& 0
		& [0,-1]
		& [0,-1] \\
		@ {End of day} & [0,-2] + 1
		& ifgt([0,-1], c1, [0,-1] * ( ( [0,-1] - c1 ) / [0,-1] )^[0,-1], 0)
		& ifgt([-1,-1], c1, ( ( [-1,-1] - c1 ) / [-1,-1] )^[-1,-1], 0)
		& ifgt([-2,-1], c1, [-2,-1] - ( [-2,-1] * [-1,0] ), [-2,-1])
		& ifgt([-3,-1], c1, [-3,-1] * [-2,0], 0)
		& ifgt([-4,-1], c1, ( c2 - [-1,0] ) / c2, 1) \\
		\hline
		@ {Start of day} & [0,-2] + 1
		& [0,-1]
		& 1
		& 0
		& [0,-1]
		& [0,-1] \\
		@ {End of day} & [0,-2] + 1
		& ifgt([0,-1], c1, [0,-1] * ( ( [0,-1] - c1 ) / [0,-1] )^[0,-1], 0)
		& ifgt([-1,-1], c1, ( ( [-1,-1] - c1 ) / [-1,-1] )^[-1,-1], 0)
		& ifgt([-2,-1], c1, [-2,-1] - ( [-2,-1] * [-1,0] ), [-2,-1])
		& ifgt([-3,-1], c1, [-3,-1] * [-2,0], 0)
		& ifgt([-4,-1], c1, ( c2 - [-1,0] ) / c2, 1) \\
		\hline
		@ {Start of day} & [0,-2] + 1
		& [0,-1]
		& 1
		& 0
		& [0,-1]
		& [0,-1] \\
		@ {End of day} & [0,-2] + 1
		& ifgt([0,-1], c1, [0,-1] * ( ( [0,-1] - c1 ) / [0,-1] )^[0,-1], 0)
		& ifgt([-1,-1], c1, ( ( [-1,-1] - c1 ) / [-1,-1] )^[-1,-1], 0)
		& ifgt([-2,-1], c1, [-2,-1] - ( [-2,-1] * [-1,0] ), [-2,-1])
		& ifgt([-3,-1], c1, [-3,-1] * [-2,0], 0)
		& ifgt([-4,-1], c1, ( c2 - [-1,0] ) / c2, 1) \\
		\hline
		@ {Start of day} & [0,-2] + 1
		& [0,-1]
		& 1
		& 0
		& [0,-1]
		& [0,-1] \\
		@ {End of day} & [0,-2] + 1
		& ifgt([0,-1], c1, [0,-1] * ( ( [0,-1] - c1 ) / [0,-1] )^[0,-1], 0)
		& ifgt([-1,-1], c1, ( ( [-1,-1] - c1 ) / [-1,-1] )^[-1,-1], 0)
		& ifgt([-2,-1], c1, [-2,-1] - ( [-2,-1] * [-1,0] ), [-2,-1])
		& ifgt([-3,-1], c1, [-3,-1] * [-2,0], 0)
		& ifgt([-4,-1], c1, ( c2 - [-1,0] ) / c2, 1) \\
		\hline
		@ {Start of day} & [0,-2] + 1
		& [0,-1]
		& 1
		& 0
		& [0,-1]
		& [0,-1] \\
		@ {End of day} & [0,-2] + 1
		& ifgt([0,-1], c1, [0,-1] * ( ( [0,-1] - c1 ) / [0,-1] )^[0,-1], 0)
		& ifgt([-1,-1], c1, ( ( [-1,-1] - c1 ) / [-1,-1] )^[-1,-1], 0)
		& ifgt([-2,-1], c1, [-2,-1] - ( [-2,-1] * [-1,0] ), [-2,-1])
		& ifgt([-3,-1], c1, [-3,-1] * [-2,0], 0)
		& ifgt([-4,-1], c1, ( c2 - [-1,0] ) / c2, 1) \\
		\hline
		@ {Start of day} & [0,-2] + 1 
		& [0,-1]
		& 1
		& 0
		& [0,-1]
		& [0,-1] \\
		@ {End of day} & [0,-2] + 1
		& ifgt([0,-1], c1, [0,-1] * ( ( [0,-1] - c1 ) / [0,-1] )^[0,-1], 0)
		& ifgt([-1,-1], c1, ( ( [-1,-1] - c1 ) / [-1,-1] )^[-1,-1], 0)
		& ifgt([-2,-1], c1, [-2,-1] - ( [-2,-1] * [-1,0] ), [-2,-1])
		& ifgt([-3,-1], c1, [-3,-1] * [-2,0], 0)
		& ifgt([-4,-1], c1, ( c2 - [-1,0] ) / c2, 1) \\
		\hline
		\hline
	\end{spreadtab}
	\caption{This Table demonstrates the progression of the $m$-DKPRG for $m = 3$ and $n = 1000000000$.} \label{tbl:The progression of the mDKPRG for m 3 and n 1000000000}
\end{table}

	The above examples were studied using the exact formulas (\ref{eq:Vacancy Probability VPt}) - (\ref{eq:Expected Utilization on Day t}). Tables \ref{tbl:The progression of the mDKPRG for m 2 and n 100} - \ref{tbl:The progression of the mDKPRG for m 3 and n 1000000000} reflect the daily evolution of the above five instances of the $m$-DKPRG according to rigorous mathematical description provided by formulas (\ref{eq:Vacancy Probability VPt}) - (\ref{eq:Expected Utilization on Day t}). The next Figure \ref{fig:Exact Expected Utilization for m=2,3} is a graphical representation of the exact utilization $\overline{f_{t}}$ from all the previous examples, as shown in the Tables \ref{tbl:The progression of the mDKPRG for m 2 and n 100} - \ref{tbl:The progression of the mDKPRG for m 3 and n 1000000000}. In this Figure, the generally excellent behavior of this scheme can be easily verified. We point out the rapid convergence to the steady state in a matter of few days and, especially, the superiority of the three stop policy. The latter achieves $0.95$ utilization from the first day and above $0.99$ from the second day.

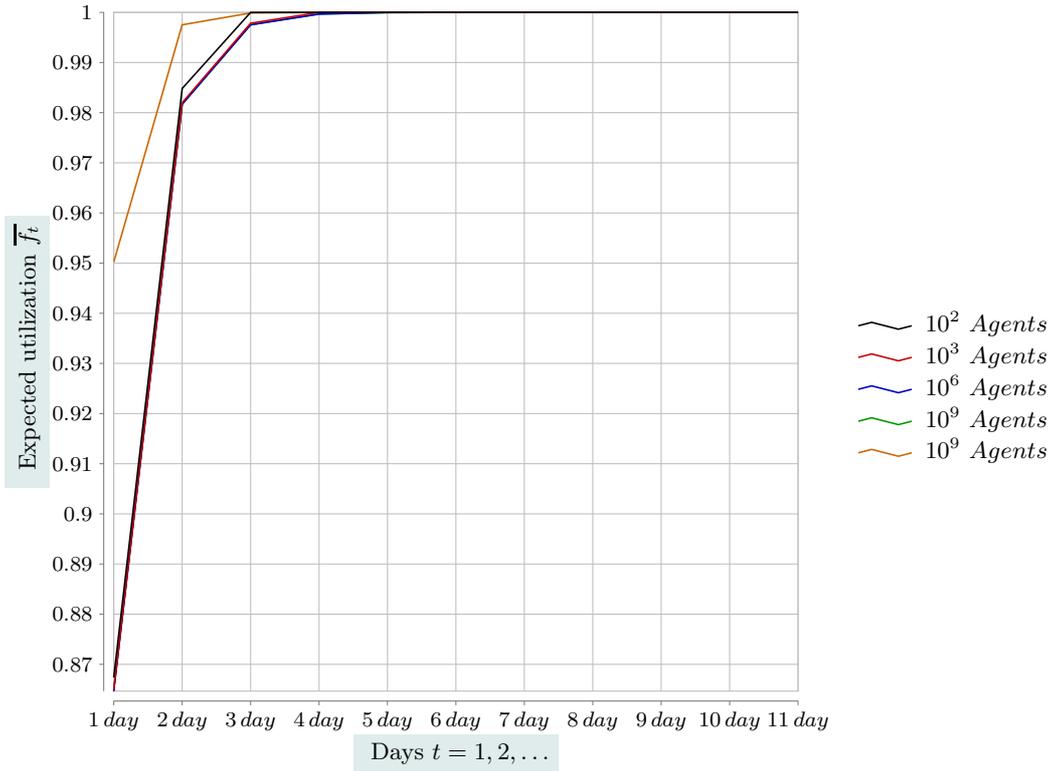
\begin{figure}
	\centering
	\begin{tikzpicture} [scale = 0.75]
		\datavisualization [
		scientific axes = clean,
		x axis = { attribute = Days, length = 12cm, ticks = {tick unit = day, step = 1}, label = { [node style = { fill = WordVeryLightTeal } ] { Days $t = 1, 2, \ldots$ } } },
		y axis = { attribute = Utilization, length = 12cm, ticks = {step = 0.01}, label = { [node style = { fill = WordVeryLightTeal } ] { Expected utilization $\overline{f_{t}}$ } } },
		all axes = grid,
		style sheet = strong colors,
		visualize as line/.list = {a, b, c, d, e},
		a = { label in legend = {text = $10^2 \ Agents$} },
		b = { label in legend = {text = $10^3 \ Agents$} },
		c = { label in legend = {text = $10^6 \ Agents$} },
		d = { label in legend = {text = $10^9 \ Agents$} },
		e = { label in legend = {text = $10^9 \ Agents$} }
		]
		data [set = a] {
			Days, Utilization
			1, 0.8673804
			2, 0.9848263
			3, 1.0
			4, 1.0
			5, 1.0
			6, 1.0
			7, 1.0
			8, 1.0
			9, 1.0
			10, 1.0
			11, 1.0
		}
		data [set = b] {
			Days, Utilization
			1, 0.8649355
			2, 0.9819923
			3, 0.9978386
			4, 0.9999921
			5, 1.0
			6, 1.0
			7, 1.0
			8, 1.0
			9, 1.0
			10, 1.0
			11, 1.0
		}
		data [set = c] {
			Days, Utilization
			1, 0.864665
			2, 0.9816847
			3, 0.9975216
			4, 0.9996649
			5, 0.9999549
			6, 0.9999942
			7, 0.9999995
			8, 1.0
			9, 1.0
			10, 1.0
			11, 1.0
		}
		data [set = d] {
			Days, Utilization
			1, 0.8646647
			2, 0.9816844
			3, 0.9975212
			4, 0.9996645
			5, 0.9999546
			6, 0.9999939
			7, 0.9999992
			8, 0.9999999
			9, 1.0
			10, 1.0
			11, 1.0
		}
		data [set = e] {
			Days, Utilization
			1, 0.9502129
			2, 0.9975212
			3, 0.9998766
			4, 0.9999939
			5, 0.9999997
			6, 1.0
			7, 1.0
			8, 1.0
			9, 1.0
			10, 1.0
			11, 1.0
		};
	\end{tikzpicture}
	\caption{This figure depicts the exact expected utilization $\overline{f_{t}}$, as given by equation (\ref{eq:Expected Utilization on Day t}), for all five instances of the $m$-DKPRG studied in Tables \ref{tbl:The progression of the mDKPRG for m 2 and n 100} - \ref{tbl:The progression of the mDKPRG for m 3 and n 1000000000}.} \label{fig:Exact Expected Utilization for m=2,3}
\end{figure}

	The above remarks must not diminish the value of the approximate formulas (\ref{eq:Approximation of Expected Number of Active Agents on Day t+1}) - (\ref{eq:Approximate Expected Utilization on Day t}). Their value lies on the fact that they can provide easy to compute and particularly good approximations for large $n$. A simple comparison of Figure \ref{fig:Exact Expected Utilization for m=2,3} to the approximations shown in Figure \ref{fig:Approximate Expected Utilization for m=2}, which corresponds to the case $m = 2$, and in Figure \ref{fig:Approximate Expected Utilization for m=3}, which depicts the case where $m = 3$, ascertains their accuracy.

\begin{figure}
	\centering
	\begin{tikzpicture} [scale = 0.75]
		\datavisualization [
		scientific axes = clean,
		x axis = { label = { [node style = { fill = WordVeryLightTeal } ] { Days $t = 1, 2, \ldots$ } }, length = 12cm, ticks = {tick unit = day, step = 1} },
		y axis = { label = { [node style = { fill = WordVeryLightTeal } ] { Number of stops $m = 2 \quad \overline{f_{t}} \approx 1 - e^{ - 2 t }$ } }, length = 9cm, ticks = {step = 0.02} },
		all axes = {grid},
		visualize as smooth line,
		data/format = function]
		data
		{
			var x : interval [0.4 : 10] samples 100;
			func y = 1 - exp( - 2 * \value x ) ;
		};
	\end{tikzpicture}
	\caption{This figure depicts the approximate expected utilization $\overline{f_{t}} \approx 1 - e^{ - 2 t }$ as given by equation (\ref{eq:Approximate Expected Utilization on Day t}) for $m = 2$.} \label{fig:Approximate Expected Utilization for m=2}
\end{figure}

\begin{figure}
	\centering
	\begin{tikzpicture} [scale = 0.75]
		\datavisualization [
		scientific axes = clean,
		x axis = { label = { [node style = { fill = WordVeryLightTeal } ] { Days $t = 1, 2, \ldots$ } }, length = 12cm, ticks = {tick unit = day, step = 1} },
		y axis = { label = { [node style = { fill = WordVeryLightTeal } ] { Number of stops $m = 3 \quad \overline{f_{t}} \approx 1 - e^{ - 3 t }$ } }, length = 9cm, ticks = {step = 0.02} },
		all axes = {grid},
		visualize as smooth line,
		data/format = function]
		data
		{
			var x : interval [0.4 : 10] samples 100;
			func y = 1 - exp( - 3 * \value x ) ;
		};
	\end{tikzpicture}
	\caption{This figure depicts the approximate expected utilization $\overline{f_{t}} \approx 1 - e^{ - 3 t }$ as given by equation (\ref{eq:Approximate Expected Utilization on Day t}) for $m = 3$.} \label{fig:Approximate Expected Utilization for m=3}
\end{figure}

\section{Conclusion} \label{sec:Conclusion}

	This work explored a completely new angle of the Kolkata Paise Restaurant Problem. The topological layout of the restaurants takes center stage in this new paradigm. Initially, we explicitly stated certain assumptions that are implicitly present in the standard formulation of the game. Having done that, we undertook the radical step to go past them and create an entirely new setting. The critical examination of the topological setting of the game unavoidably enhanced our perception regarding the locations of the restaurants and suggested a more realistic topological layout. We argued that their uniform distribution in the entire city area is the most logical, fair, and probable situation. As a result, we defined a new version of the game that is spatially distributed and, for this, is is aptly named the Distributed Kolkate Paise Restaurant Game (DKPRG).

	The uniform probabilistic distribution of the restaurants enabled us to rigorously prove that, as their number $n$ increases, the restaurants \emph{get closer} and the distance between adjacent restaurants decreases. In such a network, every customer has the opportunity to pass through more than one restaurants within the allowed time window. The agents now become travelling salesmen and this led us to suggest the innovative idea that TSP can be used to increase the chances of success in this game. We propose that each agent should use metaheuristics to solve her personalized TSP because metaheuristics produce near-optimal solutions very fast and as such can be easily used in practice. This culminated in the development of a new and more efficient strategy that achieves greater utilization.

	After rigorously formulating DKPRG, we proved completely general formulas that assert the increase in utilization of our scheme. We established that utilization ranging from $0.85$ to $0.95$ is achievable. This was shown in great detail in Tables \ref{tbl:The progression of the mDKPRG for m 2 and n 100} - \ref{tbl:The progression of the mDKPRG for m 3 and n 1000000000}, which depict the daily progress of characteristic instances of the DKPRG according to the rigorous mathematical description provided by the exact formulas (\ref{eq:Vacancy Probability VPt}) - (\ref{eq:Expected Utilization on Day t}). Apart from the exact formulas, we also derived the approximate formulas (\ref{eq:Approximation of Expected Number of Active Agents on Day t+1}) - (\ref{eq:Approximate Expected Utilization on Day t}). They can be quite useful because they are considerably easier to compute and are exceedingly good approximations for large $n$. This fact is easily corroborated by comparing Figure \ref{fig:Exact Expected Utilization for m=2,3} to the approximations shown in Figures \ref{fig:Approximate Expected Utilization for m=2} and \ref{fig:Approximate Expected Utilization for m=3}. Let us remark that the derived equations generalize previously presented formulas in the literature.

	It is worth mentioning that the fact that our strategy exhibits very rapid convergence to the steady state of utilization $1.0$ can be potentially used to address the following situation. An issue that remains and is common to almost all works in the literature is the simple matter that a near optimal utilization may not, in general, be optimal for every agent individually. A socially efficient outcome where every agent eats lunch and every restaurant gets a customer to serve, is not necessarily optimal for the individual customer, in the sense that an agent may get served in a restaurant of low preference. A possible solution to this might be to reset the game periodically. We expect that adopting a reset period, i.e., setting a specific period of days, after which the system is reset and the game starts from scratch, may alleviate this drawback. In any event, this idea for a future work will require further study and experimental evaluation of its usefulness. Finally, another possible direction for future work could include extensive experimental tests and further investigation of other versions of TSP. For instance, that there exists a more restrictive version of the TSP, the Travelling Salesman Problem with Time Windows (TSP-TW). TSP-TW is a constrained version of TSP in which the salesman must visit the cities within a specific time window. This version is even more complicated and difficult to solve. However, the inherent time constraints built-in the TSP-TW may provide for an even more realistic modeling of the DKPRG, so it is a research avenue that we believe is worth pursuing.

\bibliographystyle{ieeetr}
\bibliography{DKPRGBibliography}

\end{document}